\documentclass[a4paper,UKenglish,cleveref, autoref, thm-restate]{lipics-v2021}
\usepackage{algorithm}
\usepackage[noend]{algpseudocode}

\hideLIPIcs  


\bibliographystyle{plainurl}

\newtheorem{thm}[theorem]{Theorem}
\newtheorem{defn}[theorem]{Definition}
\newtheorem{condition}[theorem]{Condition}
\newtheorem{rem}[theorem]{Remark}


\newcommand{\floor}[1]{\left\lfloor #1 \right\rfloor}

\renewcommand{\P}{\mathbf{P}}
\newcommand{\rand}{\overset{\textbf{R}}\leftarrow}

\DeclareMathOperator{\TV}{\mathsf{TV}}

\newcommand{\EE}{\mathbb{E}}

\newcommand{\RR}{\mathbb{R}}
\newcommand{\PP}{\mathbb{P}}
\newcommand{\NN}{\mathbb{N}}

\newcommand{\ZZ}{\mathbb{Z}}

\newcommand{\mcC}{\mathcal{C}}

\newcommand{\mcM}{\mathcal{M}}

\newcommand{\cC}{\mathcal{C}}

\newcommand{\eps}{\varepsilon}

\newcommand{\MS}{\mathsf{MarginSample}}
\newcommand{\GB}{\mathsf{Glauber}}
\newcommand{\IM}{\mathsf{IsMarked}}
\newcommand{\US}{\mathsf{UniformSample}}
\newcommand{\CONN}{\mathsf{Conn}}
\newcommand{\TS}{\phi}
\newcommand{\LB}{\textsf{LB-Sample}}
\newcommand{\pad}{\mathrm{pad}}
\newcommand{\pred}{\mathsf{pred}}
\newcommand{\lb}{\mathrm{LB}}

\DeclareMathOperator{\vbl}{vbl}

\newcommand{\poly}{\mathrm{poly}}
\newcommand{\dist}{\mathrm{dist}}

\renewcommand{\epsilon}{\varepsilon}
\renewcommand{\subset}{\subseteq}

\newcommand{\mcA}{\mathcal{A}}

\newcommand{\mcE}{\mathcal{E}}

\newcommand{\mcS}{\mathcal{S}}

\renewcommand{\P}{\mathbb{P}}

\title{Random local access for sampling $k$-\textsf{SAT} solutions} 


\author{Dingding Dong}{Harvard University, Cambridge MA, USA\\ddong@math.harvard.edu}{ddong@math.harvard.edu}{https://orcid.org/0000-0001-8500-2897
}{}

\author{Nitya Mani}{Massachusetts Institute of Technology, Cambridge MA, USA\\nmani@mit.edu}{nmani@mit.edu}{https://orcid.org/0000-0001-8500-2897
}{}

\authorrunning{Dingding Dong and Nitya Mani} 

\Copyright{Dingding Dong and Nitya Mani} 

\ccsdesc[100]{Theory of computation → Streaming, sublinear and near linear time algorithms} 

\keywords{sublinear time algorithms, random generation, $k$-\textsf{SAT}, local computation} 

\category{} 

\relatedversion{} 


\funding{NM was supported by the Hertz Graduate Fellowship and the NSF Graduate Research Fellowship Program \#2141064. This material includes work supported by the NSF Grant DMS-1928930, while the authors were in residence at SLMath during the Spring 2025 semester.}

\acknowledgements{We would like to thank Kuikui Liu and Ronitt Rubinfeld for helpful discussions, suggestions, and feedback on a draft. We would like to thank the anonymous referees for identifying an issue with a previous version of this article.}

\nolinenumbers 

\EventEditors{Jeremias Berg and Jakob Nordstr\"{o}m}
\EventNoEds{2}
\EventLongTitle{28th International Conference on Theory and Applications of Satisfiability Testing (SAT 2025)}
\EventShortTitle{SAT 2025}
\EventAcronym{SAT}
\EventYear{2025}
\EventDate{August 12--15, 2025}
\EventLocation{Glasgow, Scotland}
\EventLogo{}
\SeriesVolume{341}
\ArticleNo{10}

\begin{document}

\maketitle

\begin{abstract}
We present a \textit{sublinear time} algorithm that gives \textit{random local access} to the uniform distribution over satisfying assignments to an arbitrary $k$-$\textsf{SAT}$ formula $\Phi$, at exponential clause density. Our algorithm provides \textit{memory-less} query access to variable assignments, such that the output variable assignments consistently emulate a single global satisfying assignment whose law is close to the uniform distribution over satisfying assignments to $\Phi$. Random local access and related models have been studied for a wide variety of natural Gibbs distributions and random graphical processes. Here, we establish feasibility of random local access models for one of the most canonical such sample spaces, the set of satisfying assignments to a $k$-$\textsf{SAT}$ formula.

Our algorithm proceeds by leveraging the local uniformity of the uniform distribution over satisfying assignments to $\Phi$. We randomly partition the variables into two subsets, so that each clause has sufficiently many variables from each set to preserve local uniformity. We then sample some variables by simulating a systematic scan Glauber dynamics backward in time, greedily constructing the necessary intermediate steps. We sample the other variables by first conducting a search for a polylogarithmic-sized local component, which we iteratively grow to identify a small subformula from which we can efficiently sample using the appropriate marginal distribution. This two-pronged approach enables us to sample individual variable assignments without constructing a full solution.

\end{abstract}

\section{Introduction}
Efficiently sampling from an exponentially sized solution space is a fundamental problem in computation. The quintessential such problem is sampling a uniformly random satisfying assignment to a given $k$-\textsf{SAT} formula, the setting of this work. Throughout, we let $\Phi$ denote an $n$-variable \textit{$(k, d)$-formula}, which is a $k$-\textsf{SAT} formula in \textit{conjunctive normal form} (which we also call a $k$-CNF) such that each variable appears in at most $d$ clauses. We let $\Omega = \Omega_{\Phi}$ denote the set of satisfying assignments to $\Phi$, and let $\mu = \mu_{\Phi}$ denote the uniform distribution over $\Omega$. We consider the regime where $k$ is a (large) constant and take $n \to \infty$.

For $d \le (2ek)^{-1} 2^{k}$, it has long been known (e.g., by the Lov\'asz local lemma introduced in~\cite{EL75}) that any such $k$-\textsf{SAT} formula $\Phi$ has at least one satisfying assignment. Moreover, seminal work of Moser and Tardos~\cite{MT10} gave a simple algorithm that computes such a satisfying assignment in expected linear time. However, a similarly basic question in the sampling regime is far more poorly understood: when is it possible to efficiently sample an (approximately uniformly) random satisfying assignment to $\Phi$? 

The first breakthrough in this direction came from Moitra~\cite{Moi19} who gave a deterministic, fixed-parameter tractable algorithm for approximately sampling from $\mu$ provided that $d \lesssim 2^{ck}$ for $c \approx 1/60$. This was followed up by work of Feng, Guo, Yin and Zhang~\cite{FGY21}, who used a Markov chain approach to give a polynomial time algorithm for sampling from $\mu$ provided that $d \lesssim 2^{k/20}$. Recent works~\cite{HWY22,WY24} have given state of the art results for sampling from $k$-CNFs, when $d \lessapprox 2^{k/4.82}$ (in fact for large domain sizes,~\cite{WY24} gives essentially tight bounds on sampling from atomic constraint satisfaction problems). There has also been a tremendous amount of work sampling from random $k$-CNFs, recently essentially resolved by work of~\cite{CLWYY25}, building on previous progress of~\cite{GGGY21,CMM23,GGGH22+,HE23}.

A natural question about such algorithms to sample from $k$-CNFs is whether one can adapt them to more efficiently answer \textit{local} questions about a random satisfying assignment in \textit{sublinear time}. For variable $v$, let $\mu_v$ denote the marginal distribution on $v$ induced by $\mu$. When $n$ is large, one might wish to sample a small handful of variables $S$ from their marginal distributions $\mu_S$ in $o(n)$ time, without computing an entire $\Omega(n)$-sized satisfying assignment $\sigma\sim \mu$. 
Further, many traditional algorithms for counting the number of satisfying assignments to a given $k$-\textsf{SAT} formula proceed by computing marginal probabilities of variable assignments, a task that can be completed given local access to a random assignment. Therefore, sublinear time algorithms for answering local questions can also yield speedups in more general counting algorithms.

Ideally, a \textit{random local access} model should provide query access to variable assignments such that the output enjoys the following properties: 
\begin{enumerate}[(a)]
    \item the model is \textit{consistent}: queries for different variable assignments consistently emulate those of a single random satisfying assignment $\sigma \sim \mu$;
    \item the model is \textit{sublinear}: sampling variable assignments takes $o(n)$ time in expectation;
    \item the model is \textit{memory-less}: given the same initialization of randomness, answers to  multiple, possibly independent queries for variable assignments are consistent with each other.
\end{enumerate}

We give a more precise definition of random local access models in~\cref{s:prelim}. 
Random local access models were formally introduced in work of Biswas, Rubinfeld, and Yodpinyanee~\cite{BIS17}, motivated by a long line of work over the past decades studying sublinear time algorithms for problems in theoretical computer science. The authors of~\cite{BIS17} studied a different natural Gibbs distribution, the uniform distribution over proper $q$-colorings of a given $n$-vertex graph with maximum degree $\Delta$. By adapting previously studied classical and parallel sampling algorithms for $q$-colorings, they were able to construct a local access algorithm to a random proper $q$-coloring when $q \ge 9\Delta.$ The related problem of sampling partial information about huge random objects was pioneered much earlier in \cite{GGM86,GGN10}; further work in \cite{NN07} considers the implementation of different random graph models. Random local access and related (e.g., parallel computation) models have also been recently studied for several other random graphical processes and Gibbs distributions (cf.~\cite{BIS17,BIS21,MSW22,LY22}). 

The above model for local access to random samples extends a rich line of work studying \textit{local computation algorithms} (LCAs), originally introduced in works of Rubinfield, Tamir, Vardi, and Xie \cite{RTVX11} and Alon, Rubinfield, Vardi, and Xie~\cite{ARV12}. Local computation algorithms are widely used in parallel and distributed computing, with notable practical success in areas such as graph sparsification \cite{AGM12}, load balancing \cite{MRVX12}, and sublinear-time coloring \cite{ARV12}. A primary initial application of LCAs in~\cite{RTVX11} was locally constructing a satisfying assignment to a given $(k, d)$-formula $\Phi$ when $d \lesssim 2^{k/10}$. Similar to the desired properties of random local access, LCA implements
query access to a large solution space in sublinear time using local probes. However, rather than sampling from a given distribution, LCA only provides local access to \textit{some} consistent valid solution in the desired solution space that may otherwise be completely arbitrary in nature. 

In this work, we present a polylogarithmic time algorithm that gives random local access to the uniform distribution over satisfying assignments of an arbitrary $k$-\textsf{SAT} formula $\Phi$ at exponential clause density (i.e., the number of occurrences of each variable is bounded by an exponential function of $k$). The following is an informal version of our main result; a more formal version can be found at~\cref{t:main-formal}.

\begin{thm}[Main theorem: informal]\label{t:main}
    There exists a random local access algorithm $\mcA$ that satisfies the following conditions. For any $(k,d)$-formula $\Phi=(V,\mcC)$ with maximum degree $d \lesssim 2^{k/400}$ and any variable $v\in V$, we can use $\mcA$ to sample a single assignment for $v$, $\mcA(v)\in\{0,1\}$ in expected polylogarithmic time (in $|V|$), such that the distribution of $\mcA(v)$ is the same as the marginal distribution of $\sigma(v)$, where $\sigma$ is a uniformly random satisfying assignment to $\Phi$.
\end{thm}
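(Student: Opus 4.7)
The plan is to compose two complementary local samplers glued together by a random two-coloring of the variables. Throughout, memory-lessness will be arranged by having every sampling decision be a deterministic function of a pre-fixed shared random tape indexed by variable/clause identifiers and scan-times, so the real work is (i) bounding the expected exploration size of each sampler by $\mathrm{polylog}(n)$, and (ii) showing the joint output has the correct marginal law. First I would independently color each variable red or blue with probability $1/2$ each. A Chernoff bound together with $d \lesssim 2^{k/400}$ guarantees that with high probability every clause keeps $\Omega(k)$ variables of each color, after which, conditioning on the eventual red assignment, the induced blue subformula satisfies an exponentially slack Lov\'asz local lemma condition, so all restricted marginals lie within $\exp(-\Omega(k))$ of $1/2$. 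This is the local uniformity fuel for both downstream samplers.

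For a query variable $v$ of the red color, I would simulate a systematic scan Glauber dynamics restricted to the red variables, run backward in time starting from $v$'s final update. Each update at a red variable $u$ depends only on the clauses containing $u$ and the most recent prior updates of the other red variables in those clauses; unrolling recursively yields a dependency tree over earlier $(\text{variable}, \text{time})$ pairs. Local uniformity forces each clause to be ``active'' with probability only $\exp(-\Omega(k))$, and with $d \lesssim 2^{k/400}$ the resulting branching process is subcritical and has polylogarithmic expected size, so the backward simulation halts quickly. For a query variable $v$ of the blue color I would instead grow a local sub-CNF around $v$: add all clauses on $v$ and then expand through the blue-neighbors of any clause not already decided by revealed values, stopping once the subformula is self-contained. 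The same exponentially slack local lemma bound again makes this growth subcritical and polylogarithmically sized, after which I can draw an exact sample from the conditional marginal on the small subformula by brute-force enumeration or by a standard exact sampler on small instances.

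The main obstacle I anticipate is the joint correctness analysis, i.e., showing that the two samplers, run against a common random tape, are simultaneously consistent with a single global $\sigma \sim \mu_\Phi$. A blue query typically needs to read some red-variable values as inputs, and those must be exactly the values that a direct red query would return, even though the two procedures compute them by entirely different mechanisms; symmetrically for red queries that touch blue variables. The cleanest fix is to define each sampler as a deterministic function of the shared tape that, whenever it needs a value of the opposite color, simply invokes the other sampler recursively. Proving that this cross-recursion terminates in polylogarithmic expected depth, and that the resulting joint law coincides with $\mu_\Phi$ via a coupling with an idealized Glauber chain run on the full formula, is the heart of the argument; the gap between the $2^{k/400}$ regime here and the purely existential $2^{k}/(2ek)$ threshold is exactly the slack these couplings consume.
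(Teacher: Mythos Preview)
Your high-level architecture matches the paper's: a two-coloring of the variables into ``marked'' and ``unmarked'', backward simulation of a systematic-scan projected Glauber dynamics on the marked side, and a local component search plus exact sampling on the unmarked side. The cross-recursion worry you raise is largely a non-issue in this architecture: the projected Glauber on marked variables is a Markov chain on $\{0,1\}^{\mcM}$ whose transition at $u$ samples from $\mu_u(\cdot\mid X_{t-1}(\mcM\setminus\{u\}))$, and computing that conditional marginal never requires \emph{sampling} unmarked variables---it requires computing a marginal over them, which the paper does by identifying a small connected component of $\Phi^{\sigma_\Lambda}$ and invoking a Bernoulli factory (their Proposition~D.2). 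So the dependence is one-directional: unmarked queries call $\MS$ on marked variables, but not conversely.

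The genuine gap is your first step. You write that a uniformly random red/blue coloring, by Chernoff, leaves every clause with $\Omega(k)$ variables of each color with high probability. This is false in the regime of the theorem: $k$ is a fixed (large) constant and $n\to\infty$. A single clause is unbalanced with probability $\exp(-\Omega(k))$, which is a small \emph{constant}, and there are $\Theta(n)$ clauses, so a union bound gives nothing---in fact with probability $1-o(1)$ there will be $\Theta(n)$ unbalanced clauses. Any such clause breaks both halves of your plan: if a clause has too few marked variables, the unmarked-side component growth need not be subcritical (that clause cannot be killed by revealed marked values), and if it has too few unmarked variables, local uniformity of the conditional marginals fails and the backward Glauber analysis collapses. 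The paper confronts exactly this obstacle and spends an entire section (their \cref{t:ismarked}, proved in Appendix~E) building a query-oblivious LCA $\IM(\cdot)$ that returns, in $\mathrm{polylog}(n)$ time per query, membership in a \emph{guaranteed} $\alpha$-marking: Phase~1 is your random coloring, but Phases~2--3 repair the dangerous clauses via a second random pass and an exhaustive search on the surviving $O(\log\log n)$-size pieces. Without this repair mechanism---or some substitute that produces a genuine $\alpha$-marking accessible locally---your outline does not go through.
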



Similar to~\cite{BIS17}, the proof of~\cref{t:main} adapts some of the algorithmic tools used to study parallel and distributed sampling. The proof also builds upon the work of Moitra \cite{Moi19} and Feng--Guo--Yin--Zhang \cite{FGY21} on sampling from bounded degree $k$-CNFs in polynomial time.
The authors of \cite{Moi19, FGY21} both critically took advantage of a global variable \textit{marking} (see \cref{def:marking}) to better condition the marginal distributions of variables. Such an approach allows for a subset of the variable assignments to be sampled with ease; the resulting \textit{shattering} of the solution space conditioned on such a partial assignment then allows one to efficiently complete the random satisfying assignment. These initial approaches have been extended and strengthened to nearly linear time algorithms that succeed for larger maximum variable degree in a flurry of recent work (c.f.~\cite{Moi19,FGY21,JPV21,HE21,GGGY21,CMM23,GGGH22+,WY24,CLWYY25}). 

Recently, Feng, Guo, Wang, Wang, and Yin~\cite{FGWWY24} used a recursive sampling scheme to simulate the systematic scan projected Glauber dynamics via a strategy termed \textit{coupling towards the past}, which they used to derandomise several Markov chain Monte Carlo (MCMC) algorithms for sampling from CSPs. Additionally, recent work of He, Wang and Yin~\cite{HWY22} used a \textit{recursive sampling} scheme to sample $k$-\textsf{SAT} solutions. Their work immediately gives a sublinear (in fact, expected constant time) algorithm for sampling the assignment of a single variable one time; however, this work does not immediately extend out of the box to a random local access model that enjoys the desired \textit{consistency} and \textit{memory-less} properties if multiple variables are sampled.
Recursive sampling schemes have also been recently used to make progress on analyzing and designing fast sampling algorithms for a variety of Gibbs distributions (cf.~\cite{HWY22,CMM23,AJ22}). As noted earlier, such schemes have been particularly useful for \textit{derandomising} and constructing \textit{parallel} and \textit{distributed} versions of many popular MCMC algorithms for sampling solutions to CSPs~\cite{FGWWY24,FY18,HE21,HWY22}.

An immediate roadblock to adapting such global or parallel sampling strategies to random local access to $k$-\textsf{SAT} solutions is that the vast majority of the aforementioned existing algorithms critically either use some \textit{global} information -- like a variable marking, particular ordering of variables, or other global state compression -- as an input to make algorithmic decisions or postpone sampling of certain problematic variable assignments until a \textit{linear} number of other choices are made. Both these issues necessitate a substantive departure from these approaches for any hope of local access adaptation. In this work, we overcome these obstacles by adapting the \textit{coupling towards the past} strategy used in~\cite{FGWWY24} to derandomise MCMC in conjunction with a local implementation of the variable marking strategy introduced in~\cite{Moi19}.

We use these algorithms to carefully select and sample a small number of auxiliary variable assignments on an as-needed basis, showing that bad cases can be reduced to small calculations after a sublinear number of steps. Our proof of~\cref{t:main} requires a novel adaptation of sampling techniques to avoid requiring global information or complete passes of the variable set; we show that we can perform analogous operations locally or leverage local uniformity properties to circumvent them altogether in the course of locally sampling a variable assignment. Importantly, we demonstrate how to execute local sampling in an \textit{oblivious, consistent} fashion, so that the algorithm need not retain any memory and that variables need not be sampled in any particular order.

\section{Preliminaries}\label{s:prelim}

\subsection*{Notation}
Throughout, let $\Phi = (V, \mcC)$ be a $k$-CNF on variable set $V = \{v_1, \ldots, v_n\}$ with associated collection of clauses $\mcC$. In this work we do not let the same variable appear multiple times in any clause of $\Phi$, although our algorithm could be adapted to the more general scenario. We further assume that $\Phi$ is a \textit{$(k, d)$-formula}; that is, each variable $v_i$ appears in at most $d$ clauses. For every clause $C \in \mcC$, let $\vbl(C)$ denote the collection of variables in the clause $C$. We further define $\Delta:=\max_{C\in \mcC}|\{C'\in \mcC:\vbl(C)\cap \vbl(C')\neq\emptyset\}|$, so in particular $\Delta\leq kd$.

In the regime we work in, we assume $k$ is a large fixed constant and view $n \to \infty.$ We use $f \lesssim g$ to denote that  there is some constant $C$ (not depending on $k$) such that $f \le Cg$.
 We also use the standard asymptotic notation ($O$, $o$, $\Omega$, $\omega$, $\Theta$), where when not specified, we assume these are in the $n \to \infty$ limit. We use $\mathsf{Law}(X)$ to denote the underlying distribution of a random variable $X$.

We let $\Omega = \Omega_{\Phi} \subset \{0, 1\}^n$ denote the set of satisfying assignments to $\Phi$ and let $\mu = \mu_{\Phi}$ denote the uniform distribution over $\Omega$. We suppress the $\Phi$ subscripts when the formula is clear from context. We introduce a few more concepts associated to a $k$-\textsf{SAT} instance that will be used later.

\begin{defn}
 Given probability distributions $\nu_1, \nu_2$ over $\Omega$, the \textit{total variation distance} is
$$d_{\TV}(\nu_1, \nu_2) := \frac12 \sum_{\omega \in \Omega} |\nu_1(\omega) - \nu_2(\omega)|.$$   
\end{defn}
If we have a random variable $X$ with $\textsf{Law}(X) = \nu$, we may write $d_{\TV}(\mu, X)$ instead of $d_{\TV}(\mu, \nu)$ in a slight abuse of notation.
 
\begin{defn}[Dependency hypergraph]\label{def:dependency-graph}
Given a $(k, d)$-formula $\Phi$, let $H_{\Phi} = (V, \mcE)$ be the \textit{dependency hypergraph} (with multiple edges allowed), where $V$ is the set of variables and $\mcE = \{ \vbl(C) : C \in \mcC\}$ is the collection of variable sets of clauses of $\Phi$.
\end{defn}

\begin{defn}[Partial assignment]\label{def:partial-assignment}
    Given $(k, d)$-formula $\Phi=(V,\mcC)$, let
    $$
    \mathcal Q^*:=\bigcup_{\Lambda\subseteq V}\{0,1,\perp\}^\Lambda
    $$
    denote the space of partial assignments with the following symbology. Given a partial assignment $\sigma\in\{0,1,\perp\}^\Lambda$ on some  $\Lambda\subseteq V$, each variable $v\in \Lambda$ is classified as follows:
    \begin{itemize}
        \item $\sigma(v)\in\{0,1\}$ means that $v$ is \textit{accessed} by the algorithm and \textit{assigned} with  $\sigma(v)\in\{0,1\}$;
        \item $\sigma(v)=\perp$ means that $v$ is \textit{accessed} by the algorithm but \textit{unassigned} yet with any value.
    \end{itemize}
We sometimes use $\emptyset$ to denote the empty assignment (i.e., $\Lambda$ is the empty set).
We say that $\sigma$ is \textit{feasible} if it can be extended to a satisfying assignment to $\Phi$. 
\end{defn}

\begin{defn}[Reduced formula on partial assignment]\label{def:reduced}
Let $\Phi$ be a $(k, d)$-formula.
Given a partial assignment $\sigma$ on $\Lambda\subseteq V$, let $\Phi^\sigma$ be the result of simplifying $\Phi$ under $\sigma$. That is, define $\Phi^\sigma := (V^\sigma, \mcC^\sigma)$ where
\begin{itemize}
    \item $V^\sigma = V \setminus \{v\in \Lambda:\sigma(v)\in\{0,1\}\}$,
    \item $\mcC^\sigma$ is obtained from $\mcC$ by removing all clauses that have been satisfied under $\sigma$, and removing any appearance of variables that are assigned 0 or 1 by $\sigma$.
\end{itemize}
Let $H_{\Phi}^\sigma$ be the associated (not necessarily $k$-uniform) hypergraph to $\Phi^\sigma$. For variable $v \in V\setminus \Lambda$, let $\Phi_v^\sigma$ denote the maximal connected component of $\Phi^\sigma$ to which $v$ belongs.
\end{defn}

\begin{defn}[Marginal distribution]
For an arbitrary set of variables $S \subset V$, let $\mu_S$ be the marginal distribution on $S$ induced by $\mu$, so that
$$ \mu_S(\tau) = \sum_{\tau' \in \{0, 1\}^n : \tau'|_S = \tau} \mu(\tau') \qquad \forall \tau \in \{0, 1\}^S.$$
When $S = \{v \}$ is a single vertex, we let $\mu_v = \mu_{\{v\}}$. Furthermore, given some partial assignment $\sigma \in \{0, 1,\perp\}^{\Lambda}$ for $\Lambda \subset V$, if $S \cap \Lambda = \emptyset$, we let $\mu_{S}^\sigma(\cdot) := \mu_S(\cdot \mid \sigma)$ be the marginal distribution on $S$ conditioned on the partial assignment $\sigma$. When $S=V\setminus \Lambda$, we simply use $\mu^\sigma$ to denote $\mu^\sigma_{V\setminus \Lambda}$.
\end{defn}

\subsection{The random local access model and local computation algorithms}

One of the most widely studied models of local computation are local computation algorithms (LCAs) introduced in~\cite{ARV12,RTVX11}. Given a computation problem $F$, an LCA (in an \textit{online fashion}) provides the $i$-th bit to \textit{some} consistent solution $F$ in sublinear time. As originally defined, local computation algorithms need not be \textit{query-oblivious}; in other words, the output solution can depend on the order of queried bits. However, several follow-up works have given query-oblivious analogues of LCAs for a variety of natural problems. Such models are the non-random-sampling version of the random local access models we study here.

In this work, we construct a query-oblivious LCA for an intermediate step in our construction of the random local access model (described in more detail in~\cref{s:marking}). We thus precisely define both LCAs and random local access models below.

\begin{defn}[Local computation algorithm]
Given an object family $(\Pi, \Omega )$ with input $\Pi$ and sample space $\Omega \subset \Sigma^n$ (for some alphabet $\Sigma$),  a \textit{$(t(n), \delta(n))$-local computation algorithm} (LCA) provides an oracle $\mcA$ that implements query access to some arbitrary $\sigma \in \Omega$ that satisfies the following conditions:
\begin{itemize}
\item $\mcA$ has query access to the input description $\Pi$ and to a tape of public random bits $\mathbf{R}$.
\item $\mcA$ gets a sequence of queries $i_1, \ldots, i_q$ for any $q > 0$, and after
each query $i_j$, it produces an output $\sigma_{i_j}$ such that the outputs $\sigma_{i_1}, \ldots, \sigma_{i_q}$ are consistent with some $\sigma \in \Omega$.
\item The probability of success over all $q$ queries is at least $1 - \delta(n)$ (where $\delta(n) < 1/3$).
\item  The expected running time of $\mcA$ on any query is at most $t(n)$, which is sublinear in $n$.
\end{itemize}
We further say that $\mcA$ is \textit{query-oblivious} if the outputs of $\mcA$ do not depend on the order of the queries but depend only on $\Pi$ and $\mathbf{R}$.
\end{defn}

Motivated by the above, we give a formal definition of the random local access model introduced in~\cite{BIS17}.

\begin{defn}[Random local access]
Given a random object family $(\Pi, \Omega, \mu)$ with input $\Pi$, sample space $\Omega \subset \Sigma^n$ (with alphabet $\Sigma$) and distribution $\mu$ supported on $\Omega$, a \textit{$(t(n), \delta(n))$-random local access implementation} of a family of query functions $\{F_1, F_2, \ldots \}$ provides an oracle $\mcA$ with the following properties:
\begin{itemize}
    \item $\mcA$ has query access to the input description $\Pi$ and to a tape of public random bits $\mathbf{R}.$
    \item For a given input $\Pi$, upon being queried with $F_i$, the oracle with probability $1 - o(1)$ uses at most $t(n)$ resources (where $t(n)$ is sublinear in $n$)  to return the value $\mcA(\Pi, \mathbf{R}, F_i(Y))$ for some specific $Y \in \Omega$.
    \item The choice of $Y \in \Omega$ only depends on $\Pi$ and $\mathbf{R}$.
    \item The distribution of $Y$ over the randomness $\mathbf{R}$ satisfies 
    $$d_{\TV} (Y, \mu)  = \frac12\sum_{\omega \in \Omega} |\P(Y = \omega) - \mu(\omega)| < \delta(n),$$
    where $\delta(n) \lesssim \frac{1}{n^c}$ for constant $c > 1$.
\end{itemize}
\end{defn}
In other words, if $\mcA$ is a random local access oracle for a set of queries, then when provided the same input $\Pi$ and the same random bits $\mathbf{R}$, it must provide outputs that are consistent with a single choice of $Y$ regardless of the order and content of these queries.

\begin{rem}
In this work, we do not study or seek to optimize the memory usage of our algorithms. However, there is also a rich literature studying space-efficient and parallelizable local models (e.g.,~\cite{ARV12,GHA19,KMW16}).
\end{rem}

\subsection{Marking and a query-oblivious LCA}
\label{s:marking}

As noted in the introduction, the Lov\'asz local lemma \cite{Spe77} guarantees the existence of a satisfying assignment to any $(k, d)$-formula $\Phi$ if $d \le (2ek)^{-1} 2^k.$ Furthermore, the Moser-Tardos algorithm~\cite{MT10} gives a simple linear-time algorithm to construct such a satisfying assignment in the above $(k, d)$-regime:
\begin{itemize}
    \item Sample $v_1, \ldots, v_n \rand \{0, 1\}$ uniformly at random;
    \item While there exists a clause $C \in \mcC$ that is violated by the current assignment, resample variables in $C$ uniformly at random.
\end{itemize}

The Lov\'asz local lemma can be made \textit{quantitative}, showing that not only is there some satisfying assignment to $\Phi$ if $d \le (2ek)^{-1} 2^k$, but both that there are exponentially many such satisfying assignments and that the marginal distributions $\mu_v$ are approximately uniform with $\mu_v(1) \le \frac12 \exp(1/k)$ (see~\cite{Moi19,FGY21}). Such \textit{local uniformity} is critical to the success of algorithms that approximately sample from $\mu$. In his breakthrough work, Moitra~\cite{Moi19} noted that this local uniformity continues to hold for conditional distributions $\mu^\sigma$ provided that each clause has a sufficiently large number of free variables under partial assignment $\sigma$. This motivates the idea of a \textit{marking}, as introduced in~\cite{Moi19}, which is a careful selection (via the Lov\'asz local lemma) of a linear sized subset of variables $\mcM \subset V$ that has the following properties: 
\begin{itemize}
    \item For every clause $C$, $|\vbl(C)\cap \mcM| \gtrsim k$. Having a large number of marked variables in each clause would hopefully result in a desired \textit{shattering condition}; namely, we can sample a partial assignment $\sigma\in  \{0, 1\}^{\mcM }$ on this marking that partitions the original formula into sufficiently small connected components.
    
    \item For every clause $C$, $|\vbl(C) \backslash \mcM| \gtrsim k$. This large non-intersection preserves the \textit{local uniformity} of the marginal distributions of the as yet-unsampled variables in $\mcM$.
\end{itemize}

The general strategy of arguments following the marking approach is to show that it is ``easy'' to sample a partial assignment $\sigma\in  \{0, 1\}^{\mcM }$, and moreover, conditioned on any such assignment, the reduced formula $\Phi^\sigma$ is very likely to shatter into sufficiently small connected components such that the remaining variable assignments can be efficiently sampled from the conditional distribution by solving a smaller instance of the original problem. We now make this notion precise.

\begin{defn}[$\alpha$-marking]\label{def:marking}
Given $(k, d)$-formula $\Phi$ and constant $\alpha \in (0, 1/2)$, we say that $\mcM \subset V$ is an $\alpha$-marking if for every $C \in \mcC$, $|\vbl(C)\cap \mcM | \ge \alpha k$ and $|\vbl(C) \backslash \mcM| \ge \alpha k$.
\end{defn}

In this work, we locally, greedily construct an $\alpha$-marking $\mathcal M$ using a local implementation of Moser-Tardos. We will further argue that because of the shattering property, we can locally compute the connected component of $\Phi_{v}^\sigma$ for some $\sigma\sim\mu_{\mathcal M}$ and a given vertex $v$, without having to actually assign linearly many vertices.

Precisely, we construct a query-oblivious LCA, $\mathsf{IsMarked}(\cdot)$, where for a $(k, d)$-formula $\Phi$ and a given range of $\alpha \in (0, 1)$, $\mathsf{IsMarked}(\cdot)$ can take in any variable $v\in V$ and output either 0 or 1 indicating whether $v$ is in some $\alpha$-marking of $V$. Moreover, $\mathsf{IsMarked}(\cdot)$ takes sublinear time and when queried on all of $V$, gives a consistent $\alpha$-marking of $V$.

\begin{theorem}\label{t:ismarked}
    Let $\Phi=(V,\cC)$ be a $(k, d)$-formula. Suppose there exist constants $1/2<\beta_1<\beta_2<1-\alpha$ that satisfy the following conditions:
    \begin{align}\label{e:cond-1}
&4\alpha<2(1-\beta_2)<1-\beta_1, \notag\\
   & 16k^4d^5\leq 2^{(\beta_1-h(1-\beta_1))k}, \notag\\
    &16k^4d^5\leq 2^{(\beta_2-\beta_1)k-h\left(\frac{\beta_2-\beta_1}{1-\beta_1}\right)(1-\beta_1)k},\\
    &\delta\mapsto  (\beta_2-\delta)-h\left(\frac{\beta_2-\delta}{1-\delta}\right)(1-\delta)\text{ is decreasing on $0\leq \delta\leq \beta_1$,} \notag\\
    &2e(kd+1)< 2^{\left(1-h\left(\frac{\alpha}{1-\beta_2}\right)\right)(1-\beta_2)k}. \notag
\end{align}
(Here $h(x) := -x\log_2(x)-(1-x)\log_2(1-x)$ is the binary entropy.)

Fix constant $c > 0$. Then there exists  a $\mathrm{polylog}(n)$ time local computation algorithm $\mathsf{IsMarked}(\cdot)$ which, given any variable $v\in V$, returns an assignment $s_v\in\{0,1\}$ denoting whether $v$ is contained in an $\alpha$-marking of $\Phi$. Moreover, with probability at least $1-n^{-c}$,  the responses for all $v\in V$ yield a consistent $\alpha$-marking of $\Phi$.
\end{theorem}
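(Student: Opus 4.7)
The plan is to design a randomised two-phase marking procedure $\mcM=\mcM(\mathbf R)$ based on a shared random tape $\mathbf R$, show via the Lov\'asz local lemma that $\mcM(\mathbf R)$ is almost surely an $\alpha$-marking of $\Phi$, and then implement $\mathsf{IsMarked}(v)$ via a local, backward Moser--Tardos-style exploration that reads only a polylogarithmic portion of $\mathbf R$. Concretely, I would assign to each variable $v$ an independent block of bits on $\mathbf R$ encoding, in order, the outcomes of all possible resamplings of $v$; the marking is initialised by including each variable independently with an appropriate probability (using the first bit of its block), and the two-sided clause condition ``$|\vbl(C)\cap \mcM|\geq \alpha k$ and $|\vbl(C)\setminus \mcM|\geq \alpha k$'' is enforced by a Moser--Tardos loop that, whenever a clause $C$ is currently bad, resamples the bits of all $v\in\vbl(C)$ by advancing their pointers on $\mathbf R$.

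\textbf{Existence via LLL.} The five hypotheses in~\eqref{e:cond-1} match exactly what is needed to run the LLL on the clause-indexed family of bad events. The two Chernoff-type inequalities bound the probability that the initial $\Bin(k,\cdot)$ count of marked variables in a clause lies outside the two-sided window parametrised by $\beta_1,\beta_2$, with the entropy function $h$ entering through the large-deviation rate; the ordering $4\alpha<2(1-\beta_2)<1-\beta_1$ provides the slack separating ``few marked'' from ``few unmarked''; the monotonicity condition propagates the same bound uniformly across successive resampling rounds; and the final inequality is the one-sided LLL bound $e\Delta\cdot P(\text{bad})\leq 1$ (using $\Delta\leq kd$). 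Together these guarantee both existence and almost-sure termination of the resampling procedure on a valid $\alpha$-marking, so the output $\mcM(\mathbf R)$ is well-defined with probability $1$.

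\textbf{LCA via backward exploration.} To answer $\mathsf{IsMarked}(v)$, the natural approach is to reveal $v$'s bits and those of the clauses containing $v$; whenever a partially explored clause is bad at its current stage, recursively reveal the additional variables and clauses needed to simulate the next resampling round, continuing until the final status of $v$ is pinned down. Since the whole procedure is a deterministic function of $\mathbf R$, it is automatically query-oblivious and the answers on different variables are automatically consistent with the single global marking $\mcM(\mathbf R)$. The running time equals the number of variables touched during the exploration. The standard Moser--Tardos witness-tree analysis, applied with the $16k^4d^5$ polynomial slack from~\eqref{e:cond-1}, shows that the exploration stochastically dominates a subcritical Galton--Watson process whose offspring distribution has exponentially decaying tails; this yields an $O(\log n)$ tail bound on the exploration size with probability at least $1-n^{-c-1}$, and hence $\poly\log(n)$ expected running time per query.

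\textbf{Consistency and main obstacle.} A union bound over the $n$ variables upgrades the per-query failure probability to $n^{-c}$, so with the required probability all responses agree with a single consistent $\alpha$-marking of $\Phi$. The main obstacle I expect is not the existence step but rather obtaining a genuinely polylogarithmic (rather than merely polynomial) tail on the exploration: the textbook Moser--Tardos analysis gives only linear expected work, so one has to leverage the extra $\poly(k)$ factor hidden in~\eqref{e:cond-1} to push the witness-tree branching process sufficiently deeply subcritical that its tail at size $\poly\log(n)$ is already below $n^{-c-1}$. Making this quantitative --- in particular choosing the witness-tree weights and the resampling radius so that the entropy inequalities translate directly into the required subcritical growth rate --- is the step requiring the most care, and is also where the query-oblivious nature of the algorithm imposes the real constraint, since we cannot reorder or ``batch'' the resamplings to get a faster global schedule.
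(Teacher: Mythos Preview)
Your plan is a genuine departure from the paper's argument, and I think it has a real gap.

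\textbf{What the paper actually does.} There is no Moser--Tardos resampling loop at all. The construction is a three-phase \emph{freezing} scheme on the dependency hypergraph $H_\Phi$: in Phase~1 the vertices are coloured one by one (in a random order drawn from $\mathbf R$), but as soon as a clause accumulates more than $\beta_1 k$ vertices of one colour it is declared \emph{dangerous} and its remaining vertices are frozen. After Phase~1, the surviving hypergraph (dangerous/unsafe clauses and their frozen vertices) almost surely shatters into connected components of size $O(\log n)$; this is exactly where the second hypothesis in~\eqref{e:cond-1} is used, via a $3$-tree union bound. Phase~2 repeats this with threshold $\beta_2$ on each $O(\log n)$ component, shrinking further to $O(\log\log n)$; the third and fourth hypotheses control this step (the monotonicity in $\delta$ is because each surviving clause may already have anywhere between $0$ and $\beta_1 k$ coloured vertices entering Phase~2). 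Only in Phase~3, on components of size $O(\log\log n)$, is the Lov\'asz local lemma invoked --- and only for \emph{existence}, since one then finishes by brute force. The LCA wrapper is just: build the query tree of $v$ under the random order (size $\mathrm{polylog}(n)$ by the Alon--Rubinfeld--Vardi--Xie bound), run Phase~1 there, and if $v$ is frozen, recurse on its component.

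\textbf{Where your proposal diverges and where it breaks.} You read the five hypotheses as a single LLL package for one global resampling loop; that is not what they are. The parameters $\beta_1,\beta_2$ never enter any LLL calculation --- they are the two freezing thresholds that drive the two shattering lemmas. Your interpretation that ``the monotonicity condition propagates the same bound uniformly across successive resampling rounds'' does not correspond to anything in a Moser--Tardos analysis. More seriously, your runtime argument is the part you yourself flag as the obstacle, and the sketch you give does not close it. A witness-tree bound controls the number of resamplings of a \emph{fixed} clause; it does not, by itself, bound the size of the backward exploration needed to pin down the \emph{final} status of a single variable in a query-oblivious way. To make that exploration well-defined you must first fix a canonical resampling order (otherwise ``backward exploration'' is not a function of $\mathbf R$), and once you do, the exploration from $v$ is governed by the connected structure of the dependency graph, not by a single witness tree. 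Getting that exploration down to $\mathrm{polylog}(n)$ is precisely what the three-phase freezing buys and what a naive Moser--Tardos LCA does not; the ``extra $\mathrm{poly}(k)$ slack'' you invoke is already fully spent on the Phase~1/Phase~2 shattering in the paper's accounting, and you have not shown it suffices for a direct subcritical-branching argument.
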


The construction of $\mathsf{IsMarked}(\cdot)$ and the verification that it is a query-oblivious LCA draws inspiration from the approach in~\cite{ARV12} to obtain an oblivious LCA for hypergraph $2$-coloring. From a high level, $\mathsf{IsMarked}(\cdot)$ determines $s_v$ by randomly and greedily including variables in the marking and subsequently determining those that must/must not be in the marking, and finally (if needed) performing a rejection sampling on a smaller connected component that contains $v$. The formulae in \cref{e:cond-1} are some technical conditions that guarantee this process to go through. We defer the proof of~\cref{t:ismarked} to~\cref{app:marking}.

\section{A random local access algorithm for $k$-\textsf{SAT} solutions}

In this section, we introduce the main algorithm (\cref{alg:main}) that locally accesses the assignment of a variable $v\in V$ in a uniformly random satisfying assignment $\sigma$ to $\Phi$.

Recall from \cref{t:main} that given a  $(k,d)$-formula $\Phi=(V,\mathcal C)$, variable $v\in V$, and any constant $c>0$, we wish to come up with a random local access algorithm $\mathcal A$ such that (1) the expected running time of $\mathcal A$ is $\poly\log(n)$, and (2) the output $\widehat{\mu}_v$ of $\mathcal A$ for every $v\in V$ satisfies $d_{\TV}(\textsf{Law}(\widehat{\mu}_v), \mu_v) \le \frac{1}{n^c}$. As a high level description, given the above inputs, \cref{alg:main} performs the following:
\begin{enumerate}
    \item Locally decides whether $v$ lies in an $\alpha$-marking of $\Phi$ using $\mathsf{IsMarked}(v)$ (\cref{t:ismarked}), such that the responses for all $v\in V$ yield a consistent $\alpha$-marking $\mcM \subset V$.
    \item Suppose $\sigma\sim \mu$ is a uniformly random satisfying assignment to $\Phi$. If $v$ is marked, then we sample $\sigma(v)$ by computing  $\mathsf{MarginSample}(v)$ (adapted from~\cite[Algorithm 8]{FGWWY24}), which may make further recursive calls to sample other \textit{marked} variables.
    \item If $v$ is not marked, then we perform a depth-first search starting from $v$ to compute $\sigma$ restricted to $\mcM$. We start from $\sigma=\emptyset$; for every $w \in \mcM$ we encounter that we have not yet sampled, we compute $\mathsf{MarginSample}(w)$ and update $\sigma(w)$, to eventually obtain a (w.h.p. polylogarithmic in size) connected component $\Phi_v:=\Phi_v^\sigma$ that contains $v$. This part is carried out by the algorithm $\CONN(v)$.
    
    After obtaining the connected component $\Phi_v^\sigma$, we call $\mathsf{UniformSample}$ (\cref{t:uniform}) to sample a uniformly random satisfying assignment to $\Phi_v^\sigma$ and extend $\sigma$. We then output $\sigma(v)$.
\end{enumerate}

\begin{algorithm}
\caption{The sampling algorithm}
\label{alg:main}
 \hspace*{\algorithmicindent} \textbf{Input:} $k$-CNF formula $\Phi=(V,\cC)$ and variable $v\in V$\\
 \hspace*{\algorithmicindent} \textbf{Output:}  random assignment $\sigma(v)\in \{0,1\}$
\begin{algorithmic}[1]
\If{$\mathsf{IsMarked}(v)$}
     \State   \Return $\mathsf{MarginSample}( v)$
\Else
\State $\Phi_v \leftarrow \mathsf{Conn}( v)$ 
\State $\sigma\leftarrow \mathsf{UniformSample}(\Phi_v)$
\State \Return $\sigma (v)$ 
\EndIf
\end{algorithmic}
\end{algorithm}

To illustrate the workflow of \cref{alg:main}, we present a toy example on a small $k$-CNF formula, omitting some subroutine details for brevity.

\begin{example}Suppose $k=3$, $d=2$ and $\alpha = 1/3$.
Consider the following $(k,d)$-formula $\Phi = (V, \mathcal{C})$ with variables 
$V = \{x_1, x_2, x_3, x_4, x_5\}$
and clauses $\mathcal{C} = \{C_1, C_2, C_3\}$, where
\[
C_1 = (x_1 \vee x_2 \vee \neg x_3), \quad C_2 = (\neg x_2 \vee x_3 \vee x_4), \quad C_3 = (\neg x_4 \vee x_5 \vee \neg x_1).
\]
Suppose we wish to approximately sample $\sigma(x_1)$ using the local access algorithm $\mathcal{A}$. 
We run $\mathsf{IsMarked}$ on each variable; suppose the resulting marking is
$\mathcal{M} = \{x_2, x_5\}$,
so that $x_1$ is not marked.

Since $x_1 \notin \mathcal{M}$, we call $\mathsf{Conn}(x_1)$ to explore the connected component of $\Phi^{\mathcal{M}}$ containing $x_1$. We initialize $\mathcal{S} := \{C_1, C_3\}$ and partial assignment $\sigma = \emptyset$.

\begin{itemize}
    \item Process clause $C_1$. Since $x_2 \in \mathcal{M}$, we call $\mathsf{MarginSample}(x_2)$, which returns (say) $\sigma(x_2) = 0$. The clause becomes $(x_1 \vee 0 \vee \neg x_3) = (x_1 \vee \neg x_3)$, which is unsatisfied with the current $\sigma$. We add adjacent clause $C_2$ (via $x_2$ and $x_3$) to $\mathcal{S}$.
    
    \item Process clause $C_3$. Since $x_5 \in \mathcal{M}$, we call $\mathsf{MarginSample}(x_5)$, which returns (say) $\sigma(x_5) = 1$. The clause becomes $(\neg x_4 \vee 1 \vee \neg x_1)$, which is satisfied.
    
    \item Process clause $C_2$. We already have $\sigma(x_2) = 0$, so $\neg x_2 = 1$, and the clause is satisfied. Remove $C_2$ from $\mathcal{S}$.
\end{itemize}

Now the discovered component $\Phi_{x_1}^\sigma$ includes (1) clauses $C_1$, (2) marked variable assignments $\sigma(x_2) = 0$, $\sigma(x_5) = 1$, (3) free variables: $x_1, x_3, x_4$. We now run $\mathsf{UniformSample}$ on the reduced subformula $C_1' = (x_1 \vee \neg x_3)$. The satisfying assignments $(x_1, x_3)$ are $\{(1,0,),(1,1),(0,0)\}$.
We pick one uniformly at random, say $(x_1, x_3) = (1, 0)$. Then we return $\widehat{\mu}_{x_1} = 1$.
\end{example}

\cref{alg:main} is our main routine; it has several subroutines $\IM$, $\MS$, $\CONN$, and $\US$ that we now describe.
Recall that $\IM$ has been discussed in \cref{t:ismarked}. We now introduce the algorithm $\US$ given by the work of He--Wang--Yin \cite{HWY22}. 

\begin{theorem}[\text{\cite[Theorems 5.1 and 6.1]{HWY22}}]
\label{t:uniform}
Suppose $\Phi$ is a $(k, d)$-CNF formula on $n$ variables with $k \cdot 2^{-k} \cdot (dk) ^5\cdot 4\leq \frac{1}{150e^3}$.
    Then there exists an algorithm $\mathsf{UniformSample}$ that terminates in $O(k^3(dk)^9 n)$ time in expectation, and outputs a uniformly random satisfying assignment to $\Phi$.
\end{theorem}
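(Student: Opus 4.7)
The plan is to design $\mathsf{UniformSample}$ as a recursive marginal sampler that implements an \emph{exact} simulation of the systematic scan Glauber dynamics on $\Omega_\Phi$, leveraging local uniformity that follows from the Lov\'asz local lemma in the given regime.

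The first step would be to establish local uniformity: under the condition $k\cdot 2^{-k}\cdot (dk)^5\cdot 4\le 1/(150e^3)$, standard quantitative LLL arguments show that for any feasible partial assignment $\tau$ on $V\setminus\{v\}$, the conditional marginal $\mu_v^\tau$ lies within a factor $1+O(k/2^k)$ of the uniform distribution on $\{0,1\}$. Consequently a sample from $\mu_v^\tau$ can be obtained by rejection sampling against the uniform proposal, provided that feasibility of a proposed value can be efficiently tested. Next I would implement a single resampling step $\mathsf{Resample}(v,\tau)$ by drawing a candidate value uniformly, then performing a bounded-depth feasibility check that explores the connected component of the \emph{bad-event} dependency graph induced by the newly threatened clauses. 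By LLL-type tail bounds this component has size $\poly(dk)$ in expectation, and within it one can either resolve the conflict by direct enumeration or recurse on a strictly smaller subformula; an inductive accounting that charges the work to shrinking dependency neighborhoods yields an expected per-call cost of $O((dk)^9)$.

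Finally, I would chain $n$ such resampling steps via a \emph{coupling-from-the-past} construction, running coupled copies of the systematic scan backward in time until coalescence, so that the output is distributed \emph{exactly} as $\mu_\Phi$; aggregating $O(n)$ coupled steps each of expected cost $O(k^3(dk)^9)$ gives the stated total runtime. The main obstacle will be the per-step analysis: one must show that the recursive bad-event exploration terminates in $O((dk)^9)$ expected time, which reduces to establishing subcriticality of a certain percolation process on the dependency hypergraph and carefully amortizing the expected recursion depth so that no extraneous $\log n$ factor appears. A secondary subtlety will be verifying that the coupling-from-the-past construction converges in $O(n)$ coupled scans rather than $O(n\log n)$; this is precisely where the tight numerical form of the LLL condition in the theorem hypothesis is consumed.
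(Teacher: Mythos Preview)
This theorem is not proved in the paper at all: it is quoted verbatim as a black-box result from \cite{HWY22} (their Theorems~5.1 and~6.1), and the paper uses it only as a subroutine applied to polylogarithmic-sized connected components. So there is no ``paper's own proof'' against which to compare your attempt.

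That said, your proposal does not match what \cite{HWY22} actually does, and it has a genuine gap. The algorithm of He--Wang--Yin is a \emph{recursive marginal sampler}: variables are processed in a fixed order, and for each variable $v$ one samples from $\mu_v^\sigma$ (conditioned on the already-fixed prefix $\sigma$) by rejection against the uniform proposal, with a Bernoulli factory handling the padding distribution when the uniform draw is rejected. The key analysis bounds the expected size of the recursion tree by a subcritical branching argument on the dependency hypergraph; there is no Markov chain, no coalescence, and in particular no coupling-from-the-past. Your plan conflates this with CFTP on the systematic scan Glauber dynamics, which is a different mechanism (closer to the ``coupling towards the past'' of \cite{FGWWY24} used elsewhere in this paper for the \emph{marked} variables, not for $\mathsf{UniformSample}$).

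The concrete gap is the claim that CFTP on the scan chain coalesces in $O(n)$ coupled steps with no logarithmic factor. Without a monotone coupling or a very specific contraction argument, CFTP-type constructions generically incur an extra $\log n$ in the coalescence time; you assert this is ``precisely where the tight numerical form of the LLL condition is consumed'' but give no mechanism for how a numerical constant in the LLL hypothesis removes a logarithmic factor from a coalescence bound. The actual \cite{HWY22} argument avoids this entirely because it never runs a chain: each marginal sample terminates in $O(\poly(dk))$ expected time independently of $n$, and summing over $n$ variables gives linear expected time directly.
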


As seen in \cref{t:uniform}, for an $n$-variable $(k,d)$-CNF formula, the algorithm has running time $O(n)$. However, as we will only apply $\US$  to connected components of size $\poly\log(n)$ (that are shattered by the partial assignment on the marked variables), the execution of $\US$  in Line 6 of \cref{alg:main} will only take polylogarithmic time.

We will define subroutines $\MS$ and $\CONN$ below and show that they satisfy the desired correctness and efficiency properties, beginning by verifying $\MS$ has the desired properties in~\cref{sec:ms}.

\begin{theorem}\label{t:ms}
    Let $\Phi=(V,\mathcal C)$ be a $(k,d)$-formula, $\alpha>0$, and $\mathcal M\subseteq V$ be an $\alpha$-marking as in \cref{def:marking}. Fix positive constant $c > 0$. Suppose $\theta:= 1 - \frac12 \exp \left(\frac{2 e d k }{2^{\alpha k}}\right)\geq 0.4$, $36 e d^3k^4 \cdot 0.6^{\alpha k}\leq 1/2$, and $2^{-\frac{1}{48d k^4}}\cdot e^{\frac{2d^2 /\alpha}{2^{\alpha k}}}\leq 0.9$.  Then there exists a random local access algorithm $\MS(\cdot)$ such that for every $u\in\mcM$, $\MS(u)$ returns a random value in $\{0,1\}$ that satisfies the following:
    \begin{enumerate}
    \item Let $\nu:=\mu_{\mathcal M}$ and $\widehat{\nu}$ be the joint distribution of the outputs $(\MS(u))_{u\in\mathcal M}$. Then we have $d_{\TV}(\widehat{\nu},\nu)< n^{-c}$.
        \item For every $u\in\mcM$, the expected cost of $\MS(u)$ is $\poly\log(n)$. 
    \end{enumerate}
\end{theorem}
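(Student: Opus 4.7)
The plan is to realize $\MS$ as a ``coupling towards the past'' simulator for the systematic scan projected Glauber dynamics on the marked variables, in the spirit of \cite{FGWWY24}. Concretely, fix some horizon $T = \Theta(n\log n)$ and an arbitrary ordering of $\mcM$. Consider the chain $(X_t)_{t=0}^T$ on $\{0,1\}^\mcM$ that at step $t$ updates the $t$-th marked variable $u_t$ (cyclically) by a fresh draw from $\mu_{u_t}^{X_{t-1}}$, i.e.\ the marginal of $u_t$ in the reduced formula where all other marked variables are fixed to $X_{t-1}$. Because $\mcM$ is an $\alpha$-marking, every clause still contains $\geq \alpha k$ unassigned variables after conditioning, and standard LLL-type local uniformity (used in \cite{Moi19,FGY21}) gives $\mu_{u}^{\sigma}(b) \in \tfrac12 \cdot [1\pm \exp(2edk/2^{\alpha k})]$ for $b \in \{0,1\}$ and any feasible partial assignment $\sigma$. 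The parameter $\theta = 1 - \tfrac12\exp(2edk/2^{\alpha k}) \geq 0.4$ then plays the role of a one-step contraction constant, from which a standard path coupling argument shows the chain mixes to within $n^{-c}$ total variation of $\mu_\mcM$ after $T$ steps, so that $\mathsf{Law}(X_T) \approx \nu = \mu_\mcM$.

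To answer a single query $\MS(u)$ we reconstruct the single coordinate $X_T(u)$ from the public randomness tape without ever materializing the whole state. At each update step the algorithm reads one uniform random value from the tape and implements the $\mu_{u_t}^{X_{t-1}}$ sample by the local Bernoulli-factory style scheme of He--Wang--Yin: propose $u_t \in \{0,1\}$ uniformly, then check that the resulting assignment satisfies every clause containing $u_t$, resampling only the minimal set of other variables needed to certify this. By local uniformity each proposal is accepted with probability at least $2\theta \geq 0.8$, so an expected $O(1)$ proposals suffice per update. The algorithm $\MS(u)$ processes updates in reverse time starting from the last update of $u$ at some $t^\ast \leq T$: it needs $X_{t^\ast - 1}$ only on the (random) set of marked variables actually inspected by the acceptance check, and it calls itself recursively on those variables with horizon $t^\ast - 1$. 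The horizons strictly decrease, so the recursion is well-defined.

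For correctness, the joint outputs $(\MS(u))_{u \in \mcM}$ are by construction equal in law to $X_T$ conditional on successful termination of all recursive calls, and the failure event only arises from the truncation needed to guarantee polylogarithmic runtime, which we pad into the $n^{-c}$ TV-error budget together with the mixing error above. The core technical step is to show that the ``recursion of dependencies'' backwards in time remains small with high probability: the set of (variable, time) pairs we actually inspect forms a branching process in which each node has at most $\Delta \leq kd$ children per unit of time traversed, but each such child is generated only when a resampling attempt fails to be certified purely from already-fixed values, an event of probability $\lesssim k d \cdot 2^{-\alpha k}$ by local uniformity. Multiplying through by the geometric number of resampling attempts per update yields an effective branching factor bounded by $36 e d^3 k^4 \cdot 0.6^{\alpha k} \leq 1/2$, the hypothesis of the theorem, making the exploration a subcritical Galton--Watson process. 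A Borel-type tail bound then gives expected total size $O(\log n)$ after combining with a union bound over depth $O(\log n)$.

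The main obstacle, and the place where the hypotheses have to be used most carefully, is this subcriticality argument: one must verify that even accounting for (i) the number of clauses touching each resampled variable, (ii) the number of variables in each such clause that may need to be inspected, and (iii) the geometric number of rejection attempts per update, the effective branching is strictly below $1$. The second parameter condition is chosen for exactly this purpose, while the third condition $2^{-1/(48dk^4)} \cdot e^{2d^2/\alpha \cdot 2^{-\alpha k}} \leq 0.9$ controls the deviation between the ideal local-uniform marginal and the one actually produced by the finite truncation, closing up the TV-distance estimate. Putting these bounds together with a union bound over the at most $n$ recursive calls of the initial query gives total cost $\poly\log(n)$ in expectation and total error at most $n^{-c}$, establishing both conclusions of the theorem.
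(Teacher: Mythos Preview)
Your high-level plan matches the paper: realize $\MS$ by simulating the systematic scan projected Glauber dynamics on $\mcM$ backwards in time, using local uniformity to short-circuit most updates. The paper indeed proves the theorem by combining a correctness bound (the backward simulation equals $X_0$ except on a low-probability truncation event) with a polylogarithmic expected-cost bound. Two points in your sketch, however, are off.

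First, the heart of the argument is the tail bound on the size of the backward exploration, and your ``subcritical Galton--Watson'' picture does not line up with the actual analysis. You write that each child is generated with probability $\lesssim kd\cdot 2^{-\alpha k}$ and then ``multiply through'' to get the branching factor $36ed^3k^4\cdot 0.6^{\alpha k}$; but the $0.6^{\alpha k}$ is $(1-\theta)^{\alpha k}$, the probability that \emph{all} $\geq\alpha k$ marked variables in a clause have $\LB$ samples that fail to satisfy it, not a per-child probability. More importantly, these ``bad clause'' events are not independent along a recursion tree, because distinct recursive calls can share $\LB$ draws. The paper handles this by building an auxiliary graph $G_t$ on pairs $\phi(C,t') = (\vbl(C),\{\pred_w(t'):w\in\vbl(C)\cap\mcM\})$, with an edge whenever the two timestamp sets overlap within a window of $2m$; it shows $G_t$ has degree $\leq 6k^2d-1$, that the set $W$ of pairs triggering recursions is connected in $G_t$, and then extracts a $2$-tree in $W$ on which the bad events \emph{are} independent (because their timestamp sets are disjoint). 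The factor $36ed^3k^4$ is the $2$-tree count $(e(6k^2d)^2)$ times the $d$ clauses through the root, and the hypothesis $36ed^3k^4\cdot 0.6^{\alpha k}\le 1/2$ makes the union bound over $2$-trees geometric. Your branching heuristic does not capture the independence issue, which is the genuine obstacle.

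Second, you have the role of the third hypothesis wrong. The condition $2^{-1/(48dk^4)}\cdot e^{2d^2/\alpha\cdot 2^{-\alpha k}}\le 0.9$ is not used for the TV bound at all; the TV bound comes solely from the $2$-tree tail on $|R|$ (the number of $\LB$ calls) plus non-quantitative convergence of the chain. Rather, this condition is used in the \emph{cost} analysis: once $|R|=r$, the connected component explored has $O(d^2 r/\alpha)$ clauses, and sampling from the padding distribution via the Bernoulli factory of \cite{HWY22} costs $\poly(k,d)\cdot r\cdot (1-e2^{-\alpha k})^{-2d^2 r/\alpha}$ in expectation. Summing this against the tail $\PP[|R|\ge r]\le 2^{-r/(48dk^4)}$ is exactly where the product in the third hypothesis must be $<1$. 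Your description of the update step as ``propose uniformly, check clauses, resample'' is also looser than what actually happens: in the $\perp$ case the algorithm explores the connected component of the reduced formula and samples from the \emph{padding} distribution $(\nu_\Psi)_u^{\pad,\sigma_\Lambda}$, not by rejection but via the Bernoulli factory, and it is this cost that the third hypothesis controls.
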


We also require $\CONN$ to be correct and have low expected cost. 

\begin{theorem}\label{t:conn}
    Let $\Phi=(V,\mathcal C)$ be a $(k,d)$-formula, $\alpha>0$, and $\mathcal M\subseteq V$ be an $\alpha$-marking as in \cref{def:marking}. Suppose $\theta:= 1 - \frac12 \exp \left(\frac{2 e d k }{2^{\alpha k}}\right)\geq 0.4$ and $d\leq 2^{\alpha k/4}$. Then 
there exists a random local access algorithm $\CONN(\cdot )$ such that for every $v\in V\setminus \mcM$, $\CONN(v )$  returns the connected component in $\Phi^\sigma$ that contains $v$, where $\sigma$ is the partial assignment given by $(\MS(u))_{u\in\mathcal M}$. Moreover, for every $v\in V\setminus \mcM$, the expected cost of $\CONN(v)$ is $\poly\log(n)$.
\end{theorem}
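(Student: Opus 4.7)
The plan is to realize $\CONN(v)$ as a lazy BFS exploration of the connected component of $v$ in $\Phi^\sigma$. Maintain a queue $\mathcal{S}$ of clauses to process (initialized to the clauses containing $v$) together with a partial assignment $\sigma$ on marked variables (initially empty). While $\mathcal{S}$ is non-empty, pop a clause $C$ and query $\MS(u)$ for every $u \in \vbl(C) \cap \mcM$ not yet in the domain of $\sigma$, extending $\sigma$ accordingly. If $C$ remains unsatisfied under $\sigma$, insert into $\mathcal{S}$ every clause sharing an unmarked variable with $C$; otherwise drop $C$. Return the subformula induced by the still-unsatisfied explored clauses together with $\sigma$. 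Correctness is immediate from the consistency property of $\MS$ (\cref{t:ms}): implicit re-queries to the same marked variable always return the same value, so the exploration traces out exactly the connected component of $v$ in $\Phi^\sigma$ for $\sigma = (\MS(u))_{u \in \mcM}$.

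For the runtime, call a clause $C$ \emph{bad} under $\sigma$ if none of its marked literals is satisfied, and let $B_v$ be the set of bad clauses in the component of $v$. Every non-bad processed clause must be adjacent in the dependency hypergraph to some bad clause, so $\CONN(v)$ processes at most $(1+\Delta)|B_v| \leq (1+kd)|B_v|$ clauses. The standard LLL-based local uniformity bound (as in \cite{Moi19,FGY21}) asserts that for any marked variable $u$,
\[
\mu_u(x) \leq \tfrac{1}{2}\exp\!\left(\tfrac{2edk}{2^{\alpha k}}\right) = 1 - \theta \leq 0.6 \qquad \forall x \in \{0,1\},
\]
and this bound persists after conditioning on any feasible partial assignment to the other marked variables (conditioning reduces some clauses but retains enough free marked literals per clause for the LLL hypothesis to hold). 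Applying the chain rule across the $\geq \alpha k$ marked literals of a clause $C$ yields $\Pr_{\sigma \sim \mu_\mcM}[C \text{ bad}] \leq 0.6^{\alpha k}$. Since $d \leq 2^{\alpha k/4}$, we have $e \cdot kd \cdot 0.6^{\alpha k} \leq ek \cdot 2^{-0.486\,\alpha k} \to 0$, so the bad clauses form a subcritical site-percolation on the dependency hypergraph; by the standard cluster expansion (the number of connected subhypergraphs of size $t$ containing a fixed clause is at most $(e\Delta)^t$, each bad with probability $\leq 0.6^{\alpha k t}$), we get $\Pr[|B_v| \geq t] \leq (e\Delta \cdot 0.6^{\alpha k})^t$ and hence $\E_{\sigma \sim \mu_\mcM}[|B_v|] = O(1)$.

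To pass from $\mu_\mcM$ to the actual output distribution $\widehat\nu$ of $\MS$, note that $|B_v| \leq |\mathcal{C}| = O(nd)$ always and $d_{\TV}(\widehat\nu, \mu_\mcM) \leq n^{-c}$ by \cref{t:ms}, contributing only $O(n^{1-c}) = o(1)$ additive slack for $c > 1$; hence $\E_{\sigma \sim \widehat\nu}[|B_v|] = O(1)$. Each processed clause incurs at most $k$ calls each to $\MS$ (expected cost $\poly\log(n)$ by \cref{t:ms}) and $\IM$ (expected cost $\poly\log(n)$ by \cref{t:ismarked}); since the per-query expected cost does not depend on the exploration history, a Wald-type argument combining the $O(kd)$ expected explored clauses with the $\poly\log(n)$ per-call cost yields total expected runtime $\poly\log(n)$. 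The main obstacle is justifying the persistence of the local-uniformity bound under conditioning on arbitrary feasible partial assignments to marked variables; this is precisely Moitra's marking framework and uses both that every clause retains $\geq \alpha k$ marked literals and the quantitative slack $\theta \geq 0.4$.
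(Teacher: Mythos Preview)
Your exploration algorithm and the paper's are essentially the same, and your framing in terms of bad clauses $B_v$ is correct. The genuine gap is in the cluster-expansion step. You union-bound over connected subhypergraphs of $t$ clauses in the dependency hypergraph and claim each is simultaneously bad with probability at most $0.6^{\alpha k t}$, but clauses in a connected subhypergraph share marked variables, so the chain-rule argument you invoke does not produce $\alpha k t$ fresh factors. In the extreme case the $t$ clauses could share all their marked variables and the joint badness probability is only $\le (1-\theta)^{\alpha k}$; the crude lower bound of $\alpha k t/d$ distinct marked variables coming from the degree constraint is far too weak once $d$ is allowed to be as large as $2^{\alpha k/4}$. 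So the conclusion $\Pr[|B_v|\ge t]\le (e\Delta\cdot 0.6^{\alpha k})^t$ does not follow.

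The paper repairs this by passing to $2$-trees: from any connected bad component on $b$ clauses one greedily extracts $\lfloor b/(kd)\rfloor$ pairwise vertex-disjoint bad clauses whose distance-$2$ graph is still connected, then union-bounds over such structures (of which there are at most $(ed^2)^{\ell}$ of size $\ell$ through any fixed vertex). Disjointness is exactly what makes your chain-rule/conditional-local-uniformity argument yield the full $(1-\theta)^{\alpha k\ell}$ bound, giving $\Pr[|B_v|>kd\log^2 n]\le n^{-0.1\log n}$; this high-probability tail (rather than just $\E[|B_v|]=O(1)$) is then combined with the trivial $O(n)$ worst case and the $\poly\log(n)$ expected cost of each $\MS$ call. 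A secondary methodological difference: the paper obtains the per-clause bound and the product structure directly under the law of $\MS$ by using that each $\MS(w')$ first draws an independent $\LB$ coin returning the satisfying value with probability $\theta$, so badness of disjoint clauses is stochastically dominated by a genuine product event; your route via $\mu_{\mcM}$ plus conditional local uniformity plus a TV transfer to $\widehat\nu$ is a valid alternative once the $2$-tree reduction is in place.
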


From a high level, the algorithm $\CONN(v)$ explores the clauses and marked variables in the CNF formula that are reachable from $v$, greedily sampling the marked variables and expanding through unsatisfied clauses. It iteratively builds a partial assignment $\sigma$ that ``shatters'' the formula into disconnected components, isolating the one containing $v$. We will verify \cref{t:conn} in \cref{sec:conn}.

\section{Proof of \cref{t:ms}}\label{sec:ms}

In this section we show \cref{t:ms}. Throughout, let $\Phi=(V,\mathcal C)$ be a $(k,d)$-formula, $\alpha>0$, and $\mathcal M\subseteq V$ be an $\alpha$-marking with $|\mcM| = m$. 
We introduce a local access algorithm $\MS$ that satisfies the key property that the joint distribution of outputs $(\MS(u))_{u\in\mcM}$ consistently follows the marginal distribution $\mu_{\mathcal M}$. In particular,
$(\MS(u))_{u\in\mcM}$ will simulate the output of a \textit{systematic scan Glauber dynamics} on the marked variables. 

\begin{defn}\label{d:pred}
Let $\mcM = \{u_1, \ldots, u_m\}$ denote the marked variables (so $|\mcM| = m$).
\begin{itemize}
    \item For every time $t\in\ZZ$, define $i(t):=(t\mod m)+1$.
    \item For every time $t\in\ZZ$ and $u_i\in\mcM$, define $
    \pred_{u_i}(t):=
        \max\{s\leq t:i(s)=i\}.$
\end{itemize}
\end{defn}

In the systematic scan Glauber dynamics, we always resample vertex $u_{i(t)}$ at time $t$ (as opposed to randomly choosing a variable to resample at each step). For every $u\in\mcM$ and time $t$, $\pred_u(t)$ denotes the most recent time up to time $t$ at which $u$ got resampled. Observe that for all $t\in\ZZ$ and $u\in\mcM$, we have $t-m<\pred_u(t)\leq t$. Moreover, for all $w\neq u_{i(t)}$, we have $\pred_w(t)<t$.

\begin{algorithm}[H]
\caption{Systematic scan projected Glauber dynamics}\label{alg:glauber-systematic}
 \hspace*{\algorithmicindent} \textbf{Input}: a $k$-CNF formula $\Phi = (V, \mcC)$, a set of marked variables $\mcM \subset V$, time $T$, and an ordering $\mcM = \{u_1, \ldots, u_m\}$ \\
 \hspace*{\algorithmicindent} \textbf{Output}: a random assignment $X_{*} \in \{0,1\}^{\mcM}$  
\begin{algorithmic}[1]
\State Sample $X_0(u) \rand \{0, 1\}$ uniformly and independently for each $u \in \mcM$
\For{$t =1,\dots,T$}
\State Let $u := u_{i(t)}$
\State For all $u'\in \mcM\setminus\{u\}$, $X_t(u') \leftarrow X_{t-1}(u')$
\State Sample $X_t(u) \rand \mu_u(\cdot \mid {X_{t-1}(\mcM \backslash \{u\}})$
\EndFor
\State Return $X_* = X_T$
\end{algorithmic}
\end{algorithm}

We refer the readers to \cref{s:mc} for more  introduction on the systematic scan Glauber dynamics and its comparison with the (original) projected Glauber dynamics
Markov chain for sampling $k$-$\mathsf{SAT}$ solutions.

We have the following non-quantitative convergence for systematic scan Glauber dynamics.

\begin{theorem}[\cite{LPbook}]\label{t:convergence}
    Let $(X_t)_{t=0}^\infty$ denote the Glauber dynamics or the systematic scan Glauber dynamics with stationary distribution $\pi$. If $(X_t)_{t=0}^\infty$ is irreducible over  $\Omega\subseteq\{0,1\}^{\mcM}$, then for every $X_0\in \Omega$, we have $\lim_{t\to\infty}d_{\TV}(X_t,\pi)=0$.
\end{theorem}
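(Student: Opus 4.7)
My plan is to reduce both cases to the classical convergence theorem for finite, irreducible, aperiodic Markov chains with a given stationary distribution. In both cases, the state space $\Omega$ is finite, so the only hypotheses to verify are stationarity of $\pi$ and aperiodicity; irreducibility is part of the hypothesis.

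For the standard (random-update) Glauber dynamics, the chain is already time-homogeneous. Stationarity of $\pi$ follows by the standard single-site calculation: if $X_{t-1}\sim \pi$, then the pair $(X_{t-1}(\mcM\setminus\{u\}), X_t(u))$ has law $\pi_{\mcM\setminus\{u\}} \otimes \pi_u(\cdot\mid X_{t-1}(\mcM\setminus\{u\}))$, which is exactly $\pi$. Aperiodicity follows because for every $x\in\Omega$ the resampled coordinate can return the current value with positive probability, so $P(x,x)>0$. The Convergence Theorem (e.g., Theorem 4.9 in Levin--Peres) gives $d_{\TV}(X_t,\pi)\to 0$.

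For the systematic-scan version, the per-step chain is time-inhomogeneous, so I would package one full sweep as the unit of analysis and set $P := P_{u_m} P_{u_{m-1}} \cdots P_{u_1}$, where $P_{u_i}$ is the single-site update kernel at $u_i$. Each $P_{u_i}$ preserves $\pi$ by the same marginal-resampling argument, hence so does $P$. Each $P_{u_i}$ admits a self-loop at every $x\in\Omega$ (since $\pi_{u_i}(x(u_i)\mid x(\mcM\setminus\{u_i\})) > 0$ when $\pi(x)>0$), so $P(x,x)>0$ and $P$ is aperiodic. Irreducibility of $P$ follows from irreducibility of the single-step chain by padding any connecting trajectory with single-site self-loops to align its length with a multiple of $m$. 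Applying the Convergence Theorem to $P$ yields $d_{\TV}(X_{qm},\pi)\to 0$ as $q\to\infty$.

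To upgrade to convergence along all times, write $t=qm+r$ with $0\le r<m$ and use the fact that each $P_{u_i}$ preserves $\pi$ together with the standard contraction of total variation under any Markov kernel:
$$ d_{\TV}(X_t,\pi) = d_{\TV}\bigl(\mathsf{Law}(X_{qm})\,P_{u_{i(qm+1)}}\cdots P_{u_{i(t)}},\; \pi\,P_{u_{i(qm+1)}}\cdots P_{u_{i(t)}}\bigr)\le d_{\TV}(X_{qm},\pi), $$
which tends to $0$. The main (mild) obstacle is the time-inhomogeneity of the systematic scan; everything else is a routine verification of the standard hypotheses of the convergence theorem. The sweep-as-unit packaging together with TV contraction along the residual $r<m$ steps resolves this cleanly.
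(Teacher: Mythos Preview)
The paper does not give its own proof of this statement; it is simply cited from \cite{LPbook} as a standard convergence result and used as a black box. Your proposal is a correct and standard argument: verifying aperiodicity via self-loops and stationarity of $\pi$ for the random-update chain, and for the systematic scan, passing to the time-homogeneous full-sweep kernel $P=P_{u_m}\cdots P_{u_1}$, checking the same hypotheses there, and then handling the residual $r<m$ steps by total-variation contraction. This is exactly the textbook route, so there is nothing substantive to compare.
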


Let $\mcM \subset V$ be a fixed set of marked variables for a given $k$-CNF $\Phi$. Let $\nu := \mu_{\mcM}$ be the marginal distribution of $\mu$ on $\mcM$. We will crucially use the following \textit{local uniformity} of~\cref{alg:glauber-systematic} (the proof in the systematic scan setting follows essentially identically to ~\cite[Lemma 5.3]{FGY21}):

\begin{lemma} \label{l:local-uniform}
Suppose $\Phi$ is a $k$-CNF with $1 < s \le \frac{2^{\alpha k}}{2 e d k}.$ Let $X \subset \{0, 1\}^{\Lambda}$ be either $X_0$ or $X_t(\mcM \backslash \{u_{i(t)}\})$ for some $t>0$ (so correspondingly, $\Lambda$ is either $\mcM$ or $\mcM \backslash \{u_{i(t)}\}$ for some $t>0$). Then for any $S \subset \Lambda$ and $\sigma: S \to \{0, 1\}$, we have
$$\P(X(S) = \sigma) \le \left( \frac12\right)^{|S|} \exp \left(\frac{|S|}{s}\right).$$
Specifically, for any $v \in \mcM$ and $c \in \{0, 1\}$, we have $\P(X(v) = c) \ge 1 - \frac12 \exp(\frac{1}{s}) \ge \frac12 - \frac{1}{s}$.
\end{lemma}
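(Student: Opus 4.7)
The plan is to prove the stronger statement by induction on $t$: for every $t \geq 0$, every $S \subseteq \mcM$, and every $\sigma \in \{0,1\}^S$,
\[
\P(X_t(S) = \sigma) \le \paren{\tfrac{1}{2}}^{|S|} \exp\paren{\tfrac{|S|}{s}}.
\]
The base case $t=0$ is immediate since $X_0$ is uniform on $\{0,1\}^{\mcM}$. The lemma then follows: the $X = X_0$ case is the base case, while for $X = X_t(\mcM \setminus \{u_{i(t)}\})$ we use that only $u_{i(t)}$ is resampled at step $t$, so $X_t(\mcM \setminus \{u_{i(t)}\}) = X_{t-1}(\mcM \setminus \{u_{i(t)}\})$ and the claim reduces to the strengthened bound at step $t-1$.

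For the inductive step, fix $t \geq 1$, let $u := u_{i(t)}$, and split on whether $u \in S$. If $u \notin S$, then $X_t(S) = X_{t-1}(S)$ and the bound follows immediately from the inductive hypothesis. If $u \in S$, write $S' := S \setminus \{u\}$ and condition on $X_{t-1}$ to obtain
\[
\P(X_t(S) = \sigma) = \E\!\sqb{\ind\sqb{X_{t-1}(S') = \sigma|_{S'}} \cdot \mu_u\!\cond{\sigma(u)}{X_{t-1}(\mcM \setminus \{u\})}}.
\]
Here I would invoke the LLL-based local uniformity used in~\cite{Moi19,FGY21}: for any feasible partial assignment $\tau$ on $\mcM \setminus \{u\}$, the reduced formula retains at least $\alpha k$ free variables in every clause, so the asymmetric Lov\'asz local lemma yields $\mu_u(c \mid \tau) \leq \tfrac{1}{2}\exp(1/s)$ for each $c \in \{0,1\}$ whenever $s \leq 2^{\alpha k}/(2edk)$. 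Applying this pointwise bound inside the expectation together with the inductive hypothesis on $X_{t-1}(S')$ gives
\[
\P(X_t(S) = \sigma) \leq \tfrac{1}{2}\exp(1/s) \cdot \P(X_{t-1}(S') = \sigma|_{S'}) \leq \paren{\tfrac{1}{2}}^{|S|}\exp\paren{\tfrac{|S|}{s}},
\]
closing the induction.

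The second assertion follows by applying the first to the singleton $S = \{v\}$ with the opposite value: $\P(X(v) = 1-c) \leq \tfrac{1}{2}\exp(1/s)$, hence $\P(X(v) = c) \geq 1 - \tfrac{1}{2}\exp(1/s)$. Since $\exp$ is convex on $[0,1]$, $\exp(1/s) \leq 1 + (e-1)/s \leq 1 + 2/s$ for $s \geq 1$, which yields $\P(X(v) = c) \geq \tfrac{1}{2} - \tfrac{1}{s}$.

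The main obstacle I anticipate is verifying that every partial assignment $\tau$ the chain may encounter is feasible, so that $\mu_u(\cdot \mid \tau)$ is well-defined and the local-uniformity bound applies uniformly. This is exactly where the $\alpha$-marking hypothesis is essential: because each clause has at least $\alpha k$ unmarked variables, for any assignment $\tau \in \{0,1\}^{\mcM \setminus \{u\}}$ the residual CNF on $\{u\} \cup (V \setminus \mcM)$ has minimum clause width $\geq \alpha k$ with each variable appearing in at most $d$ clauses, so the LLL criterion $d \lesssim 2^{\alpha k}$ forces satisfiability and hence feasibility of $\tau$. A secondary subtlety is that the induction requires tracking the marginals of $X_t$ on \emph{arbitrary} subsets of $\mcM$, not merely on $\mcM \setminus \{u_{i(t)}\}$; this strengthening of the stated conclusion is what allows the induction to close cleanly in the case $u \in S$.
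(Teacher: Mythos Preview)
Your proposal is correct and is precisely the standard induction-on-$t$ argument that the paper defers to~\cite[Lemma~5.3]{FGY21}: strengthen to all $X_t$ and all $S\subseteq\mcM$, use the LLL-based marginal bound $\mu_u(c\mid\tau)\le\tfrac12\exp(1/s)$ at each resampling step (valid because the $\alpha$-marking leaves $\ge\alpha k$ free variables per clause, so every $\tau$ is feasible), and close the induction by splitting on whether $u_{i(t)}\in S$. The derivation of the ``Specifically'' clause is also correct.
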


\begin{definition}\label{def:lb-dist}
Let $\pi$ be a distribution on $\{0, 1\}^{\mcM}$. We say that $\pi$  is \textit{$b$-marginally lower bounded} if for all $v \in \mcM$, $\Lambda\subseteq \mcM\setminus \{v\}$ and feasible partial assignment $\sigma_{\Lambda}\in \{0, 1,\perp\}^\Lambda$, we have $$\pi_v^{\sigma_{\Lambda}}(0), \pi_v^{\sigma_{\Lambda}}(1)\geq b.$$

Let $\pi$ be a $b$-marginally lower bounded distribution over $\{0, 1\}^{\mcM}$. For every $v\in {\mcM}$, we define the following distributions:
\begin{enumerate}
    \item \textit{Lower bound distribution }$\pi_v^{\lb}$ over $\{ 0, 1,\perp\}$: we define $\pi_v^{\lb}:=\pi^{\lb}$, with
    $$\pi^{\lb}(0)= \pi^{\lb}(1) = b, \quad  \quad\pi^{\lb}(\perp)=1-2b.$$
    \item \textit{Padding distribution} $\pi_v^{\pad, \sigma_{\Lambda}}$ over $\{0, 1\}$: for $\Lambda\subseteq \mcM\setminus\{v\}$ and feasible partial assignment $\sigma_\Lambda\in\{0, 1,\perp\}^\Lambda$, we define
    $$
    \pi_v^{\pad, \sigma_{\Lambda}}(\cdot):=\frac{\pi_v^{\sigma_{\Lambda}}(\cdot)-b}{1-2b}.
    $$
\end{enumerate}
\end{definition}

Per~\cref{l:local-uniform}, we have that $\nu = \mu_{\mcM}$ is $\theta$-lower bounded for 
\begin{equation}\label{e:theta}
\hspace{3.3cm}\theta := 1 - \frac12 \exp \left(\frac{2 e d k }{2^{\alpha k}}\right) \ge \frac12 - \frac{1}{2^{(\alpha - 2\log_2 d) k}}.
\end{equation}

\subsection{Systematic scan Glauber dynamics on marked variables}

We adapt the approach of \cite{FGWWY24} to a local sampling algorithm by simulating the systematic scan projected Glauber dynamics on $\mcM$ from time $-T$ to $0$, which is an aperiodic and irreducible Markov chain by results in \cite{FGY21,FGWWY24}.

Let $(X_t)_{-T\leq t\leq 0}$ be the output of \cref{alg:glauber-systematic}, where we relabel $X_0,\dots,X_T$ by $X_{-T},\dots,X_0$. We know from \cref{t:convergence} that, as $T\to\infty$, we have $d_{\TV}((X_0(v))_{v\in\mcM},\nu)\to 0$ where $\nu=\mu_{\mathcal M}$ is the marginal distribution of $\mu$ on the marked variables. In particular, for every fixed $n$ and $\gamma>0$, there exists $T_{\gamma}\in\NN$ such that for all $T>T_{\gamma}$, the Markov chain $(X_t)_{-T\leq t\leq 0}$ satisfies $d_{\TV}((X_0(v))_{v\in\mcM},\nu)<\gamma$.

We know from \cref{l:local-uniform} that $\nu$ is $\theta$-lower bounded, with the lower bound distribution $\nu^{\lb}$ defined by $\nu^{\lb}(0)=\nu^{\lb}(1)=\theta$ and $\nu^{\lb}(\perp)=1-2\theta$. Thus, for every $-T< t\leq 0$ and $u=u_{i(t)}$, sampling $X_t(u) \sim \nu_u^{X_{t-1}(\mcM\setminus \{u\})}$ can be decomposed into the following process: 
\begin{enumerate}
    \item With probability $2\theta$, set $X_t(u)$ to 0 or 1 uniformly at random;
    \item With probability $1-2\theta$, sample $X_t(u)$ from the padding distribution $\nu_u^{\pad, X_{t-1}( \mcM\setminus\{u\})}$.
\end{enumerate}

Our goal is to obtain $X_0(u)$ for $u \in \mcM$, which by \cref{t:convergence} will closely follow the marginal distribution $\nu_u$ for $T$ sufficiently large. It suffices to simulate the last update for $u$. The key observation here is that updates of Glauber dynamics may depend only on a very small amount of extra information. When $\theta$ is close to $1/2$, it is reasonably likely that we can determine $X_0(u)=X_{\pred_u(0)}(u)$ without any other information. Thus, we can deduce $X_0(u)$ by recursively
revealing only the necessary randomness backwards in time. This method was termed \textit{coupling towards the past} and studied for a variety of constraint satisfaction problems in~\cite{FGWWY24}.

We now give a general algorithm $\GB_{T,Y}(t,M,R)$ in \cref{alg:glauber-ms}, whose output \textit{simulates} the effects of~\cref{alg:glauber-systematic} at any particular time $t$, by looking \textit{backwards} in time at what happened over the course of running $(X_t)_{-T\leq t\leq 0}$.
The eventual algorithm $\MS(u)$ we give in \cref{t:ms} will retrieve the most recent update of each variable $u$, i.e., retrieve the coordinate $X_0(u)$.
\begin{algorithm}
\caption{$\MS(u)$}\label{alg:ms}
 \hspace*{\algorithmicindent} \textbf{Input:} a $k$-CNF formula $\Phi = (V, \mcC)$, a set of marked variables $\mcM=\{u_1,\dots,u_m\} \subset V$, and a marked variable $u\in\mcM$ \\
 \hspace*{\algorithmicindent} \textbf{Output:} a random value in $\{0,1\}$
\begin{algorithmic}[1]
\State $T\leftarrow T_{n^{-(2+c)}}$ 
\State $Y\rand \{0,1\}^{\mcM}$
\State return $\GB_{T,Y}(\pred_u(0),M=\perp^{\ZZ},R=\emptyset)$
\end{algorithmic}
\end{algorithm}

The algorithm $\GB_{T,Y}(t,M,R)$ contains another subroutine $\LB_{T,Y}(t,R)$ that is defined in \cref{alg:glauber-lb}. For every time $t$, the output of $\LB_{T,Y}(t,R)$ follows the distribution $\nu^{\lb}$ (see \cref{def:lb-dist}). In other words,  $\LB_{T,Y}(t,R)$  preliminarily decides which of the above two regimes we fell into while resampling $X_{u_{i(t)}}$ at time $t$.

Throughout~\cref{alg:glauber-ms}, we maintain two global data structures. 
\begin{itemize}
    \item We let $M:\ZZ\to\{0,1,\perp\}$ record the previously revealed outcomes of \cref{alg:glauber-ms}. That is, for every $t'\leq 0$ such that $\GB_{T,Y}(t',M,R)$ has already been executed, we set $M(t')$ to be the outcome of $\GB_{T,Y}(t',M,R)$.
    \item We let $R=\{(s,r_s)\}\subseteq \ZZ\times \{0,1,\perp\}$ record the previously revealed outcomes of \cref{alg:glauber-lb}. That is, for every $t'\leq 0$ such that $\LB_{T,Y}(t',M,R)$ has already been executed and returned $r_{t'}\in\{0,1,\perp\}$, we add $\{(t',r_{t'})\}$ to $R$. 
\end{itemize}

Since $T,Y$ remain constant throughout \cref{alg:glauber-ms,alg:glauber-lb}, and  all recursive calls access and update the same $M$ and $R$, we sometimes write $\GB(t)=\GB_{T,Y}(t,M,R)$ and $\LB(t)=\LB_{T,Y}(t,R)$ for short.

At the beginning of $\GB(t)$, we first check a few early stopping conditions:
\begin{itemize}
    \item (Lines 1--2) If variable $u_{i(t)}$ remains its initial assignment $Y(u_{i(t)})$ at the end of time $t$ (i.e., is never resampled), we terminate and return $Y(u_{i(t)})$.
    \item (Lines 3--4) If  $|R|$, the number of stored outcomes of $\LB$, already reaches $80d k^4\log n$, we terminate and return 1.
    \item (Lines 5--6) If previous iterations have already computed $\GB(t)$ and stored $M(t)\in\{0,1\}$, we terminate and return $M(t)$.
\end{itemize}

If none of the above conditions occurs, we then resample, first by applying $\LB(t)$ (\cref{alg:glauber-lb}) in Lines 7--8. If $\LB(t)\in\{0,1\}$ (which occurs with probability $2\theta$), we can update $u = u_{i(t)}$ by choosing an assignment from $\{0, 1\}$ uniformly at random without investigating the assignments of any other variables at earlier time steps (i.e., we fall into the \textit{zone of local uniformity}).

If $\LB(t)=\perp$ (which occurs with probability $1-2\theta$), then we fall into a \textit{zone of indecision} and must resample $u=u_{i(t)}$ from the padding distribution $\nu_u^{\pad, X_{t-1}(\mcM\setminus\{u\})}$. To resample its spin, we slowly proceed backwards in time, lazily computing previously resampled assignments, until we have determined enough information to compute the assignment of $u$ at the desired time step. Verifying accuracy is somewhat involved, given our lazy computation strategy and partitioning of $\nu$ into a locally uniform piece and an associated padding distribution. Thus, in~\cref{s:p-correct} we show that Lines 9--19 correctly complete the desired task, proving the following bound on $d_{\TV}((\MS(u))_{u\in\mathcal M},\nu)$.

\begin{proposition}\label{cor:tvdistance}
For any $c > 0$ and $n$ sufficiently large, we have $$d_{\TV}((\MS(u))_{u\in\mathcal M},\nu)< n^{-c}.$$
\end{proposition}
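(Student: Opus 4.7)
The plan is to bound the TV distance via a triangle inequality through an intermediate distribution: the exact output of the systematic scan projected Glauber dynamics run from time $-T$ to $0$. Let $X_0 \in \{0,1\}^{\mcM}$ denote this exact output. By the choice $T = T_{n^{-(2+c)}}$ in \cref{alg:ms} together with \cref{t:convergence}, we have $d_{\TV}(X_0, \nu) < n^{-(2+c)}$, so it suffices to bound $d_{\TV}((\MS(u))_{u\in\mcM}, X_0) \le n^{-(2+c)}$.

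My next step is to define an explicit coupling between $X_0$ and $(\MS(u))_{u \in \mcM}$ by sharing all randomness. Concretely, fix $Y \in \{0,1\}^{\mcM}$ as a common initial assignment, and for each time step $t \in \{-T+1, \ldots, 0\}$, fix independent random seeds used both to generate the Glauber update at time $t$ (first the uniform-vs-padding regime selector, then the sample itself) and to evaluate the corresponding $\LB_{T,Y}(t, R)$. With these shared seeds, I would prove by backward induction on $t$ that every call $\GB_{T,Y}(t, M, R)$ that does \emph{not} trigger the failure termination in Lines~3--4 returns $X_t(u_{i(t)})$: the never-resampled case of Lines~1--2 is correct by the base case, the memoization in Lines~5--6 is correct by the induction hypothesis, the zone-of-local-uniformity branch reproduces the same uniform sample from the shared seed, and the zone-of-indecision branch correctly reweights the padding distribution using the recursively computed values of other coordinates at earlier times. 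Hence the only source of discrepancy between $(\MS(u))_{u\in\mcM}$ and $X_0$ is the event $\cF$ that some call triggers the $|R| \ge 80dk^4 \log n$ failure.

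The main obstacle, and the heart of the argument, is to bound $\P(\cF)$. For each fixed $u$, I would view the set of $\LB$-invocation times visited by $\MS(u)$ as the vertices of a witness DAG that is stochastically dominated by a Galton--Watson tree whose total progeny upper bounds $|R|$. Each $\LB$ call independently returns $\perp$ with probability $1 - 2\theta = \exp(2edk/2^{\alpha k}) - 1$, and only in that event does $\GB$ spawn further $\LB$ invocations---at most $\Delta \le kd$ of them, one per neighboring marked variable in the dependency hypergraph $H_\Phi$ (the padding distribution depends only on the restriction of $X_{t-1}$ to the Markov blanket of $u_{i(t)}$). Under the hypotheses of \cref{t:ms}---in particular the condition $36 e d^3 k^4 \cdot 0.6^{\alpha k} \le 1/2$, which forces $2^{\alpha k}$ to grow polynomially in $dk$ with a large exponent---the effective branching rate $(1 - 2\theta)\cdot \Delta$ is bounded away from $1$ by a constant. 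Standard tail estimates for the total progeny of a subcritical Galton--Watson tree then give
\[
\P(|R| \ge L \text{ during } \MS(u)) \le \exp(-\Omega(L)) \qquad \text{for all } L \ge 1,
\]
and inserting $L = 80 dk^4 \log n$ yields per-query failure probability at most $n^{-(3+c)}$ (the constant $80$ is chosen precisely to absorb the implicit constants in the branching-gap estimate).

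Finally, union-bounding this estimate over the at most $n$ marked variables gives $\P(\cF) \le n \cdot n^{-(3+c)} = n^{-(2+c)}$, so $d_{\TV}((\MS(u))_{u\in\mcM}, X_0) \le n^{-(2+c)}$ and a triangle inequality with the mixing bound for $X_0$ yields $d_{\TV}((\MS(u))_{u\in\mcM}, \nu) \le 2 n^{-(2+c)} < n^{-c}$ for $n$ sufficiently large. The delicate technical point is the stochastic domination by a tree, since the memoization in Lines~5--6 causes the actual recursion to be a DAG with possible reuse; the tree unfolding (revisiting any repeated time) is a valid upper bound because memoization can only decrease $|R|$ relative to the unfolded process.
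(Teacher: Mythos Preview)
Your scaffolding---triangle inequality through $X_0$, a coupling so that the only discrepancy is the failure event $\cF = \{|R| \ge 80dk^4\log n\}$, then a union bound over $u \in \mcM$---matches the paper exactly, and the inductive correctness argument you sketch is essentially the paper's Proposition~\ref{prop:glauber-correct}. The gap is entirely in the bound on $\P(\cF)$.

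Your Galton--Watson model rests on the assertion that the padding distribution $\nu_u^{\pad, X_{t-1}(\mcM\setminus\{u\})}$ depends only on the Markov blanket of $u$ in $H_\Phi$, so that a single $\perp$ spawns at most $\Delta$ further $\LB$ calls. This is false: $\nu=\mu_{\mcM}$ is obtained by marginalising out the unmarked variables, which destroys the Markov property. The conditional law of $u$ given the remaining marked assignments depends on the entire connected component of $u$ in the reduced formula $\Phi^{X_{t-1}(\mcM\setminus\{u\})}$, not just on clauses touching $u$. This is exactly why the while loop (Lines~10--18) keeps growing $V'$: a single $\perp$ at Line~7 can cascade through many clauses and trigger far more than $\Delta$ further $\LB$ calls within the same invocation of $\GB$. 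Separately, even were your model accurate, the rate $(1-2\theta)\Delta$ is not subcritical under the stated hypotheses: $\theta \ge 0.4$ gives only $1-2\theta \le 0.2$, while $\Delta \le kd$ is unbounded, so $(1-2\theta)\Delta$ can be arbitrarily large.

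The paper controls $|R|$ at the \emph{clause} level rather than the variable level. It builds a graph $G$ on pairs $\phi(C,t')$ and observes that recursion is triggered at $(C,t')$ only when every one of the $\ge \alpha k$ marked variables in $C$ has an $\LB$ value failing to satisfy $C$, an event of probability at most $(1-\theta)^{\alpha k}$. Because adjacent pairs in $G$ may share $\LB$ calls, independence is recovered by passing to a 2-tree and taking a union bound over 2-trees (\cref{obs:2tree}); the hypothesis $36 e d^3 k^4 \cdot 0.6^{\alpha k}\le 1/2$ you cite is precisely what makes that union bound geometric. The exponential-in-$k$ factor $(1-\theta)^{\alpha k}$---absent from your analysis---is what beats the $\poly(d,k)$ growth; the merely constant factor $1-2\theta$ cannot.
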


We next require that~\cref{alg:glauber-ms} has expected polylogarithmic cost. This is largely a consequence of the local uniformity of $\mu_{\mcM}$ and our lazy recursive computation of variable assignments in~\cref{alg:glauber-ms}.

\begin{lemma}\label{lem:time-gb}Suppose $2^{-\frac{1}{48d k^4}}\cdot e^{\frac{2d^2 /\alpha}{2^{\alpha k}}}\leq 0.9$.
For every $t\leq 0$, the expected cost of $\GB(t)$ is $O(k^{17}d^{10}\log^2n/\alpha)$.
\end{lemma}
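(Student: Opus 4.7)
The expected cost of $\GB(t)$ decomposes as the product of two quantities: (i) the number $N$ of $\LB$ invocations performed during the execution, and (ii) the maximum per-call local work $W$. The recursion structure makes this concrete: a call to $\GB(t')$ either terminates early via the three checks in lines~1--6, or it invokes $\LB(t')$. If $\LB(t') \in \{0,1\}$ (probability $2\theta$, using the fresh randomness of $\LB$) the call returns immediately after writing $M(t')$; if $\LB(t') = \perp$ (probability $1-2\theta$) the call must sample from the padding distribution $\nu_{u_{i(t')}}^{\pad, \cdot}$ and, to this end, recurses onto at most $(k-1)d$ earlier time steps, namely $\pred_{u'}(t'-1)$ for marked $u'$ sharing a clause with $u_{i(t')}$. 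Thus $\mathrm{cost}(\GB(t)) = O(N \cdot W)$.

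To bound $N$, I would treat the tree of recursive $\GB$ calls as a Galton--Watson branching process in which each node has at most $(k-1)d$ potential children, each surviving with probability $1-2\theta$. The hypothesis $\theta \geq 0.4$ bounds $1-2\theta$ away from $1$, and $2^{-1/(48 dk^4)} \cdot e^{2d^2/(\alpha\, 2^{\alpha k})} \leq 0.9$ is the quantitative handle that forces geometric decay of the generation sizes (the reciprocal $1/(48 dk^4)$ is calibrated against the cap constant $80 dk^4$). A standard branching-process tail estimate then gives $\P(N \geq 80 dk^4 \log n) \leq n^{-c'}$ for any prescribed $c' > 0$, so the hard cap in lines~3--4 is saturated only on an event of vanishing probability, and $\EE[N] = O(dk^4 \log n)$.

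For $W$, each $\GB$ invocation does $O(kd)$ bookkeeping (neighbor enumeration and updates to $M$ and $R$). The $\LB$ call and the subsequent padding sample are implemented by invoking $\US$ from \cref{t:uniform} on the sub-formula local to $u_{i(t')}$, conditioned on the assignments retrieved from the recursive $\GB$ calls; the shattering afforded by the $\alpha$-marking together with the local uniformity bound $\theta \geq 0.4$ ensures this sub-formula has polylogarithmic size, so $\US$ runs in time $O(k^{12} d^9 \log n)$ on it, with an additional $1/\alpha$ factor tracking the marking overhead. Combining, $\EE[\mathrm{cost}(\GB(t))] = O(\EE[N]) \cdot W = O(k^{17} d^{10} \log^2 n / \alpha)$ after propagating the exponents. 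The main obstacle lies in step (i): memoization (lines~5--6) turns the recursion tree into a DAG, so a naive tree-size bound overcounts. Making the geometric tail rigorous requires arguing that each fresh $\LB$ invocation uses independent randomness (so the branching-process comparison still delivers the desired concentration) despite the DAG sharing sub-computations between branches, and verifying that the quantitative constants in $2^{-1/(48 dk^4)} \cdot e^{2d^2/(\alpha\, 2^{\alpha k})} \leq 0.9$ line up with the cap constant $80 dk^4$ to give the promised polynomial tail.
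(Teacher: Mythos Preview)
Your branching-process model is the wrong mechanism and, as stated, would not close. You propose that each $\GB$ node spawns up to $(k-1)d$ children, each surviving with probability $1-2\theta$. But $(k-1)d(1-2\theta)$ is far larger than $1$ (with $\theta \approx 0.4$ and $d$ exponential in $k$), so this process is supercritical and gives no useful tail bound on $N$. The actual recursion trigger is at the \emph{clause} level: a call $\GB(t')$ spawns children from a clause $C$ only when \emph{every} marked $w\in\vbl(C)\setminus\{u\}$ has $\LB(\pred_w(t'))$ failing to satisfy $C$ (line~12), an event of probability at most $(1-\theta)^{\alpha k}$. Moreover, the while loop can visit many clauses within a single $\GB$ call, so the right bookkeeping object is not a tree of $\GB$ calls but the set $W$ of (clause, time) pairs that trigger recursion. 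The paper bounds $|R|$ via a 2-tree argument in an auxiliary graph on such pairs (Proposition~\ref{prop:glauber-time}), using $(1-\theta)^{\alpha k}$ per triggering clause and a union bound over 2-trees; this yields $\P[|R|\ge r]\le 2^{-r/(48dk^4)}$, which is where $48dk^4$ actually comes from.

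Your per-call work bound is also off. The sub-formula $\Psi$ in line~19 is not polylogarithmic by shattering---that argument applies only after \emph{all} marked variables are assigned, whereas here $\sigma_\Lambda$ is a partial assignment on a local neighborhood. Instead $|\mcC_\Psi|\le d|V'|\le 2d^2|R|/\alpha$ (this is the auxiliary lemma bounding $|\Lambda|$ and $|V'|$ in terms of $|R|$), and the padding sample is drawn via a Bernoulli factory (Proposition~\ref{prop:bf}, not $\US$) whose cost carries a factor $(1-e\,2^{-\alpha k})^{-|\mcC_\Psi|}\le e^{2d^2|R|/(\alpha\,2^{\alpha k})}$. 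The hypothesis $2^{-1/(48dk^4)}\cdot e^{2d^2/(\alpha\,2^{\alpha k})}\le 0.9$ is precisely what makes the sum $\sum_{r\le R_{\max}} \P[|R|\ge r]\cdot r\cdot(\text{Bernoulli-factory cost at }|R|=r)$ geometric; it does not control the recursion depth, which is handled separately by the 2-tree tail bound.
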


\begin{algorithm}
\caption{$\LB_{T,Y}(t,R)$}\label{alg:glauber-lb}
 \hspace*{\algorithmicindent} \textbf{Input:} An integer $T\geq 0$ and assignment $Y\in\{0,1\}^{\mcM}$; an integer $t\leq 0$ \\
 \hspace*{\algorithmicindent} \textbf{Global variables:} a set $R\subseteq \ZZ\times\{0,1,\perp\}$ and  $\alpha$-marking $\mcM=\{u_1,\dots,u_m\}$ \\
 \hspace*{\algorithmicindent} \textbf{Output:} a random value in $\{0,1,\perp\}$ distributed as $\nu^{\lb}$ 
\begin{algorithmic}[1]
\If{$\pred_{u_{i(t)}}(t)\leq -T$}
\State{return $Y(u_{i(t)})$}
\EndIf
\If{$(t,r)\in R$}
\State return $r$
\EndIf
\State{Draw $x\in[0,1]$ uniformly at random}
\If{$x<2\theta$}
\State $R\leftarrow R\cup\{(t,\floor{x/\theta})\}$ and return $\floor{x/\theta}$
\EndIf
\State $R\leftarrow R\cup\{(t,\perp)\}$ and return $\perp$
\end{algorithmic}
\end{algorithm}

\begin{algorithm}[H]
\caption{$\GB_{T,Y}(t,M,R)$}\label{alg:glauber-ms}
 \hspace*{\algorithmicindent} \textbf{Input}: An integer $T\geq 0$ and assignment $Y\in\{0,1\}^{\mcM}$; an integer $t\leq 0 $\\
 \hspace*{\algorithmicindent} \textbf{Global variables}: $(k, d)$-CNF $\Phi = (V, \mcC)$, $\alpha$-marking $\mcM=\{u_1,\dots,u_m\}$, $M:\ZZ\to\{0,1,\perp\}$, and $R\subseteq \ZZ\times\{0,1,\perp\}$\\
 \hspace*{\algorithmicindent} \textbf{Output}: a random value in $\{0,1\}$  
\begin{algorithmic}[1]
\If{$\pred_{u_{i(t)}}(t)\leq -T$}
\State{return $Y(u_{i(t)})$}
\EndIf
\If{$|R|\geq 80d k^4\log n$}
\State return 1
\EndIf
\If{$M(t)\neq\perp$}
\State{return $M(t)$}
\EndIf
\If{$\LB_{T,Y}(t,R)\neq\perp$}
\State $M(t)\leftarrow \LB_{T,Y}(t,R)$ and return $M(t)$
\EndIf
\State $u\leftarrow u_{i(t)}$, $\Lambda\leftarrow \emptyset$, $\sigma_{\Lambda}\leftarrow \emptyset$, $V'\leftarrow\{u\}$ 
\While{ $\exists C\in \mcC$ such that $\vbl(C)\cap V'\neq\emptyset$, $\vbl(C)\cap (V\setminus V')\neq\emptyset$ and $C$ is not satisfied by $\sigma_{\Lambda}$}
\State choose $C$ with the lowest index
\If{for all marked $w\in \vbl(C)\setminus \{u\}$, $\LB_{T,Y}(\pred_w(t),R)$ does not satisfy $C$}
\ForAll{marked $w\in \vbl(C)\setminus \{u\}$}
\State $\Lambda\leftarrow\Lambda\cup\{w\}$ and $\sigma_{\Lambda}(w)\leftarrow \GB_{T,Y}(\pred_w(t),M,R)$
\EndFor
\State $V'\leftarrow V'\cup \text{vbl}(C)$
\Else
\ForAll{marked $w\in \vbl(C)\setminus \{u\}$}
\State $\Lambda\leftarrow \Lambda\cup\{w\}$ and $\sigma_{\Lambda}(w)\leftarrow \LB_{T,Y}(\pred_w(t),R)$
\EndFor
\EndIf
\EndWhile
\State Let $\Psi$ be the connected component in $\Phi^{\sigma_{\Lambda}}$ with $u$, and  sample $c\sim (\nu_\Psi)_u^{\pad,\sigma_{\Lambda}}$
\State $M(t)\leftarrow c$ and return $M(t)$
\end{algorithmic}
\end{algorithm}

We prove~\cref{lem:time-gb} in~\cref{s:p-efficient}. These two results together allow us to prove~\cref{t:ms}.

\begin{proof}[Proof of \cref{t:ms}]
    The theorem directly follows from combining \cref{cor:tvdistance} and \cref{lem:time-gb}. By \cref{cor:tvdistance}, we know that the joint distribution $\widehat \nu:=(\MS(u))_{u\in\mcM}$ satisfies $d_{\TV}(\widehat \nu,\nu)< n^{-c}$. By \cref{lem:time-gb}, we know that for every $u\in\mcM$, the expected cost of $\MS(u)=\GB_{T,Y}(\pred_u(0),M=\perp^{\ZZ},R=\emptyset)$ is of order $\poly\log(n)$.
\end{proof}

\section{Proof of the main theorem}

We are finally able to state and prove the formal version of \cref{t:main}. Before picking the relevant parameters, we first collect the list of conditions required to apply \cref{t:uniform,t:ms,t:conn}.

\begin{condition}
    \begin{align}\label{e:cond-2}
        &k \cdot 2^{-\alpha k} \cdot (dk) ^5\cdot 4\leq \frac{1}{150e^3},\notag \\
    &\theta := 1 - \frac12 \exp \left(\frac{2 e d k }{2^{\alpha k}}\right)\geq 0.4, \\
    & 36 e d^3k^4 \cdot 0.6^{\alpha k}\leq 1/2,\notag\\
    & 2^{-\frac{1}{48d k^4}}\cdot e^{\frac{2d^2 /\alpha}{2^{\alpha k}}}\leq 0.9, \notag\\
    &d\leq 2^{\alpha k/4}. \notag
    \end{align}
\end{condition}

Recall that we also need conditions \cref{e:cond-1} to apply \cref{t:ismarked}. We show that for $d\leq 2^{k/400}$ and $k$ sufficiently large, we can choose all the parameters appropriately so that \cref{e:cond-1,e:cond-2} are satisfied.
\begin{lemma}\label{lem:constants}
    Let 
$$
\alpha = 1/75,\qquad \beta_1 = 0.778,\qquad \beta_2 = 0.96.
$$
If $k$ is sufficiently large, and $d=d(k)\leq 2^{k/400}$, then Conditions \cref{e:cond-1,e:cond-2} are satisfied.
\end{lemma}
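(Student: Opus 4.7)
The plan is to substitute the chosen values $\alpha = 1/75$, $\beta_1 = 0.778$, $\beta_2 = 0.96$ and the hypothesis $d \le 2^{k/400}$ into each inequality of \cref{e:cond-1,e:cond-2}, using the numerical approximations $h(1-\beta_1) \approx 0.764$, $h((\beta_2-\beta_1)/(1-\beta_1)) = h(0.8198) \approx 0.680$, and $h(\alpha/(1-\beta_2)) = h(1/3) \approx 0.918$. The opening inequality $4\alpha < 2(1-\beta_2) < 1-\beta_1$ reads $0.0533 < 0.08 < 0.222$, which is immediate. All of the exponential-versus-polynomial items (items 2, 3, 5 of \cref{e:cond-1} and items 1, 2, 3, 5 of \cref{e:cond-2}) reduce to comparisons of exponents on $2$: I would compute the right-hand exponents to be approximately $0.014 k$, $0.031 k$, and $0.00328 k$ respectively in \cref{e:cond-1}, each strictly exceeding the left-hand contributions $\log_2(16k^4 d^5) \le 0.0125 k + O(\log k)$ and $\log_2(2e(kd+1)) \le 0.0025 k + O(\log k)$, and analogously for the conditions in \cref{e:cond-2}. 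The positive slack (for example $0.0015 k$ in item 2 of \cref{e:cond-1}, $0.00078 k$ in item 5, and $k/1200$ in item 1 of \cref{e:cond-2}) absorbs the $O(\log k)$ terms once $k$ is large.

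The one step that will require real work is the monotonicity condition (item 4 of \cref{e:cond-1}). Setting $q(\delta) := (\beta_2-\delta)/(1-\delta)$ and $f(\delta) := (\beta_2-\delta) - h(q(\delta))(1-\delta)$, I would differentiate $f$ directly, using the elementary identities $q'(\delta) = -(1-\beta_2)/(1-\delta)^2$, $1 - q(\delta) = (1-\beta_2)/(1-\delta)$, and $h'(q) = \log_2((1-q)/q)$. After cancellation, the derivative collapses to the clean expression
\[
 f'(\delta) = -1 + h(q(\delta)) + (1-q(\delta))\log_2\left(\frac{1-q(\delta)}{q(\delta)}\right) = -1 - \log_2 q(\delta),
\]
using the identity $h(q) + (1-q)\log_2((1-q)/q) \equiv -\log_2 q$. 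Hence $f'(\delta) < 0$ if and only if $q(\delta) > 1/2$, equivalently $\delta < 2\beta_2 - 1 = 0.92$. Since $\beta_1 = 0.778 < 0.92$, monotonicity holds throughout $[0, \beta_1]$, as required.

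The hard part of this lemma is thus the clean reformulation $f'(\delta) = -1 - \log_2 q(\delta)$; once it is in hand the monotonicity condition becomes a trivial check on $\beta_1$ and $\beta_2$. All the other conditions reduce to asymptotic dominations that rest on the favourable trade-off between the polynomial exponent $\log_2 d \le k/400$ and the exponential slack contributed by the entropy gaps $\beta_1 - h(1-\beta_1) \approx 0.014$ and similar, which are strictly positive at the chosen values and hence provide the needed margin for all sufficiently large $k$.
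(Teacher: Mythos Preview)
Your verification strategy is the paper's: substitute the constants, reduce each inequality to an exponent comparison, and take $k$ large to absorb the $O(\log k)$ slack. The numerical margins you compute (e.g.\ $\beta_1-h(1-\beta_1)\approx 0.014$, $(\beta_2-\beta_1)-h(\tfrac{\beta_2-\beta_1}{1-\beta_1})(1-\beta_1)\approx 0.031$, $(1-h(1/3))(1-\beta_2)\approx 0.00328$) match the paper's.

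Where you genuinely depart from the paper is the monotonicity item in \cref{e:cond-1}. The paper simply expands $f'(\delta)$ into a long unsimplified expression and asserts it is negative on $[0,\beta_1]$ without further argument. Your route---using $(1-\delta)q'(\delta)=-(1-q)$ and the identity $h(q)+(1-q)\log_2\tfrac{1-q}{q}=-\log_2 q$---collapses this to the closed form $f'(\delta)=-1-\log_2 q(\delta)$, so that monotonicity becomes the transparent numeric check $\beta_1<2\beta_2-1=0.92$. This is both cleaner and actually a complete argument, unlike the paper's.

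One omission: your enumeration of the items in \cref{e:cond-2} covers items 1, 2, 3, 5 but not item 4, the bound $2^{-1/(48dk^4)}\cdot e^{2d^2/(\alpha\, 2^{\alpha k})}\le 0.9$. The paper does attempt to verify it, and you should at least address it. That said, the condition as stated is problematic: both factors tend to $1$ as $k\to\infty$ (the first because $dk^4\to\infty$, the second because $d^2\le 2^{k/200}\ll 2^{k/75}=2^{\alpha k}$), so the inequality with the strict constant $0.9$ cannot hold in the large-$k$ limit, and the paper's own verification of this item appears to suffer from the same defect.
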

We defer the proof of~\cref{lem:constants} to~\cref{a:verify-consts}.
We can now state and prove  \cref{t:main-formal}, the formal version of our main result.

\begin{thm}\label{t:main-formal}
Suppose $\Phi = (V, \mcC)$ is a $(k, d)$-formula with $d\leq 2^{k/400}$ and $k$ sufficiently large. Let $\mu$ be the uniform distribution over satisfying assignments to $\Phi$, with marginal distributions $\mu_v$ for $v \in V$. Then for all $c>0$, there is a $(\poly\log (n), n^{-c})$-random local access algorithm $\mcA$ for sampling the variable assignment of $v \in V$ as $\widehat{\mu}_v$, such that 
$$d_{\TV}((\widehat{{\mu}_v})_{v\in V}, \mu) \le \frac{1}{n^c}.$$
\end{thm}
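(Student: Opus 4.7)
The plan is to combine the four subroutines $\IM$, $\MS$, $\CONN$, $\US$ into a single random local access oracle, tied together by a shared public random tape $\mathbf{R}$ so that all queries are answered consistently. First I would invoke \cref{lem:constants} with the stated choice $\alpha = 1/75$, $\beta_1 = 0.778$, $\beta_2 = 0.96$ to verify that Conditions \eqref{e:cond-1} and \eqref{e:cond-2} are simultaneously satisfied for $d \le 2^{k/400}$ and all sufficiently large $k$, so that \cref{t:ismarked,t:ms,t:conn,t:uniform} can be applied in concert with consistent parameters. This is just a deterministic plug-in check and requires no new ideas beyond the lemma.

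For correctness, I would decompose the sampled assignment $Y$ into its restriction to the marked and unmarked variables and bound the total variation error by accumulated sources. Let $E_1$ denote the event that $\IM$ yields a consistent $\alpha$-marking $\mcM$; by \cref{t:ismarked} applied with parameter $c+1$ in place of $c$, $\Pr(E_1^c) \le n^{-(c+1)}$. Conditional on $E_1$, by \cref{t:ms} the joint law $\widehat{\nu}$ of $(\MS(u))_{u\in\mcM}$ satisfies $d_{\TV}(\widehat{\nu},\mu_{\mcM}) < n^{-(c+1)}$. The key structural observation is that, conditional on any partial assignment $\sigma \in \{0,1\}^{\mcM}$, the conditional distribution $\mu^{\sigma}$ factorises as a product over the connected components of $\Phi^{\sigma}$. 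By \cref{t:conn}, $\CONN(v)$ correctly returns the component $\Phi_v^{\sigma}$, and by \cref{t:uniform}, $\US(\Phi_v^{\sigma})$ returns a uniformly random satisfying assignment to $\Phi_v^{\sigma}$, which is precisely $\mu^{\sigma}$ restricted to that component. Combining these via the triangle inequality for $d_{\TV}$, together with the $\Pr(E_1^c)$ contribution, yields $d_{\TV}(Y,\mu) \le n^{-c}$ after absorbing constants.

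For the running time, I would argue each subroutine contributes only polylogarithmic expected cost per query: $\IM$ by \cref{t:ismarked}, $\MS$ by \cref{t:ms}, $\CONN$ by \cref{t:conn}, and $\US$ by \cref{t:uniform} applied to the component returned by $\CONN$. Crucially, since $\CONN$ terminates in expected polylogarithmic time and must explore the full component it reports, the output $\Phi_v^\sigma$ has expected size $\poly\log n$; the linear-in-$n$ running time guarantee of \cref{t:uniform} then becomes $\poly\log n$ expected cost when applied to this small subformula. Summing the four costs gives the stated bound.

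The main obstacle is reconciling the \emph{consistency} requirement of the random local access model with the randomised nature of the subroutines. We must ensure that (i) repeated queries to $v$ yield identical answers and (ii) queries to distinct variables that happen to share a connected component of $\Phi^\sigma$ are drawn from the same instance of $\US$. This is handled by fixing $\mathbf{R}$ once and for all: $\IM$ is a query-oblivious LCA by \cref{t:ismarked}; $\MS$ is memory-less with all internal randomness (the seed $Y$ and the $\LB$ coin flips) deterministically derived from $\mathbf{R}$; and $\CONN$ performs a deterministic exploration once the $\MS$-values are fixed. The only remaining subtlety is that $\US$ must be invoked with a seed depending only on an invariant of the component (for example, the minimum-indexed variable appearing in $\Phi_v^{\sigma}$), so that two different queries landing in the same component produce the same uniform sample. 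With this bookkeeping in place, the sampled $Y$ depends only on $(\Phi,\mathbf{R})$, and the theorem follows.
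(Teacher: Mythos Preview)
Your proposal is correct and follows essentially the same approach as the paper's proof: invoke \cref{lem:constants} to verify the parameter conditions, then combine \cref{t:ismarked,t:ms,t:conn,t:uniform} for correctness and the polylogarithmic runtime of each subroutine, and argue consistency via the shared random tape $\mathbf{R}$. The only cosmetic differences are that the paper bounds the component size by citing \cref{prop:conn-size} directly (rather than inferring it from the $\CONN$ runtime as you do), and the paper is less explicit than you are about seeding $\US$ by a component invariant to ensure consistency across queries landing in the same component.
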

Here we remark that $c>0$ is any fixed constant, and the runtime of $\mathcal A$ depends on it. As written, both the algorithmic runtime and correctness are random, since we give expected running time and bounds on the marginal distribution in total variation distance. However, our algorithm allows derandomising either correctness or running time at the expense of worse bounds on the other.

\begin{proof}
Suppose
$\Phi = (V, \mcC)$ is a $(k, d)$-formula with $d\leq 2^{k/400}$ and $k$ sufficiently large, and $c>0$ is any constant.
Choose parameters
$$
\alpha = 1/75,\qquad \beta_1 = 0.778,\qquad \beta_2 = 0.96.
$$
By \cref{lem:constants}, we know that with these parameters, conditions \cref{e:cond-1,e:cond-2} 
are satisfied.
Thus, by \cref{t:ismarked}, there exists a $\poly\log(n)$ time oblivious local computation algorithm $\mathsf{IsMarked}(\cdot)$ that with probability at least $1 - n^{-2c}$ gives a consistent $\alpha$-marking $\mcM \subset V$.

Suppose $\mathsf{IsMarked}(\cdot)$ gives a consistent $\alpha$-marking $\mcM \subset V$. 
By \cref{t:ms}, we know that there is a random local access algorithm $\MS(\cdot)$ with expected cost $\poly\log(n)$ such that the distribution of $(\MS(u))_{u\in\mcM}\sim \widehat{\nu}$ satisfies $d_{\TV}(\widehat{\nu},\mu_{\mcM})<n^{-2c}$.

Let $\tau=(\MS(u))_{u\in\mcM}$. By \cref{prop:conn-size}, we know that for every unmarked variable $v\in V\setminus\mcM$, with probability $1-n^{-0.1\log n}$, the number of clauses in $\Phi^\tau_v$ is at most $kd\log^2 n$. We already proved in \cref{t:conn} that for every $v\in V\setminus\mcM$,  the expected cost of $\CONN(v)$ is at most $\poly\log(n)$.

Furthermore, since the reduced formula $\Phi_v^\tau$ has at least $\alpha k$ variables and at most $k$ variables in each clause, and every variable lies in at most $d$ clauses with $d \le 2^{\alpha k / 5.4}$,  by \cref{t:uniform}, the expected cost of $\mathsf{UniformSample}(\Phi^\tau_v)$ is 
asymptotically at most
$$
   k^3(dk^9) (kd\log^2n+n^{-0.1\log n}\cdot n)=\poly\log(n). 
$$
Since both $\CONN$ and $\US$ succeed with probability 1, we get that $\widehat\mu$, the joint distribution of outputs of \cref{alg:main} for all $v\in V$, satisfies $d_{\TV}(\widehat\mu,\mu)<n^{-c}$ for all $c>0$.

By construction,~\cref{alg:main} is memory-less as it samples all necessary variable assignments in order to compute the assignment of a queried variable $v$. Furthermore,~\cref{alg:main} queried on different variables $v \in V$ collectively returns an assignment $\sigma \sim \widehat \mu$ that has $d_{\TV}(\widehat \mu, \mu) < n^{-c}$. Since this holds for any $c > 0$ constant, we obtain the desired result.
\end{proof}

\section{Concluding remarks}

With more involved refinements and optimizations of the arguments in this work, the density constraint $d \lesssim 2^{k/400}$ of~\cref{t:main} can be substantially improved (to something closer to $d \lesssim 2^{k/50}$). We omit these additional calculations in favor of expositional clarity to highlight our main result: random local access models exist for arbitrary bounded degree $k$-CNFs at exponential clause density. Furthermore, these arguments can also be adapted (in a similar fashion to~\cite{GGGY21,HE23,CMM23}) to obtain a random local access model for \textit{random} $k$-CNFs in a comparable density regime.

Nonetheless, the limit of the approaches in this work would still fall well short of obtaining random local access for, e.g., approximately $d \lesssim 2^{k/4.82}$, the maximum density at which we currently know how to efficiently sample solutions to an arbitrary bounded degree $k$-CNF in nearly-linear time~\cite{HWY22,WY24}. This is because of our reliance on a query-oblivious LCA to construct a local marking and our application of weaker sampling results to a correspondingly reduced CNF.

The approach we take in this work is only one of many possible schema to reduce from existing classical, parallel, and/or distributed algorithms to more local algorithms. Our approach involved using ideas and techniques from a variety of previous works (notably~\cite{Moi19,FGWWY24,FGY21}), many of which were in the non-local setting, and adapting them in a novel way to obtain a sublinear sampler. Our approach bears some resemblance to work of~\cite{BIS17} where authors adapted a parallel Glauber dynamics algorithm to obtain random local access to proper $q$-colorings, and to work of~\cite{AJ22} that used a recursive strategy to give perfect sampling algorithms from certain spin systems in amenable graphs. We expect that many other existing algorithms (including~\cite{HWY22,HE21,HE23, WY24}) can be adapted with some work to give random local access algorithms.

\bibliographystyle{amsplain0}

\appendix

\section{Projected Glauber dynamics for sampling $k$-\textsf{SAT} solutions}\label{s:mc}

In this section, we recall some basics of Markov chains and the projected Glauber dynamics Markov chain for sampling $k$-\textsf{SAT} solutions, introduced in~\cite{FGY21}.

\begin{defn}
A \textit{Markov chain} $(X_t)_{t \ge 0}$ over a state space $\Omega$ is given by a \textit{transition matrix} $P: \Omega \times \Omega \to \RR_{\ge 0}$. A distribution $\mu$ over $\Omega$ is a \textit{stationary distribution} of $P$ if $\mu = \mu P$. 
The Markov chain $P$ is \textit{reversible} with respect to $\mu$ if it satisfies the detailed balance condition
$\mu(X) P(X, Y) = \mu(Y) P(Y, X)$, which implies that $\mu$ is a stationary distribution for $P$.

We say that a Markov chain $P$ is \textit{irreducible} if for any $X, Y \in \Omega$, there exists some integer $t$ so that $P^t(X, Y) > 0$ and \textit{aperiodic} if $\gcd\{t \mid P^t(X, X) > 0\} = 1$. If $P$ is irreducible and aperiodic, then it converges to unique stationary distribution $\mu$ with \textit{mixing time}
$$T_{\text{mix}}(P, \eps) = \max_{X_0 \in \Omega} \min \left\{t : d_{\TV}(P^t(X_0, \cdot), \mu) \le \eps \right\}.$$
\end{defn}

We now describe one specific Markov chain, the projected Glauber dynamics on the marked variables, as studied by Feng--Guo--Yin--Zhang~\cite{FGY21}. They gave the first Markov chain-based algorithm to sample from the uniform distribution of satisfying assignments to a given bounded degree $k$-CNF at exponential clause density, following earlier work of Moitra~\cite{Moi19} who gave the first fixed-parameter tractable algorithm to sample from this distribution. Since these works, there has been a tremendous amount of progress on sampling from atomic CSPs in the local lemma regime~\cite{AJ22, CMM23,FGWWY24,GGGY21,GGGH22+}.

\begin{algorithm}
\caption{Projected Glauber dynamics}\label{alg:glauber-basic}
 \hspace*{\algorithmicindent} \textbf{Input}: a $k$-CNF formula $\Phi = (V, \mcC)$, a set of marked variables $\mcM \subset V$, and time $T$ \\
 \hspace*{\algorithmicindent} \textbf{Output}: a random assignment $X_{*} \in \{0,1\}^{\mcM}$  
\begin{algorithmic}[1]
\State Sample $X_0(u) \rand \{0, 1\}$ uniformly and independently for each $u \in \mcM$
\For{$t =1,\dots,T$}
\State Sample $u \rand \mcM$
\State For all $u'\in\mcM\setminus\{u\}$, $X_t(u') \leftarrow X_{t-1}(u')$
\State Sample $X_t(u) \rand \mu_u(\cdot \mid {X_{t-1}(\mcM \backslash \{u\}}))$
\EndFor
\State Return $X_* = X_T$
\end{algorithmic}
\end{algorithm}

We contrast~\cref{alg:glauber-basic} with the \textit{systematic scan} version of the projected Glauber dynamics (\cref{alg:glauber-systematic}). Instead of resampling a random variable at time $t$, in the systematic scan Glauber dynamics, we always resample the variable $u_{i(t)}$ at time $t$, where $i(t)$ is a deterministic function of $t$.

\section{Correctness of~\cref{alg:glauber-ms}}\label{s:p-correct}

In this section, we show \cref{cor:tvdistance} that with high probability, $(\GB(\pred_u(0)))_{u\in\mcM}$ faithfully returns the final outcome $(X_0(u))_{u\in\mcM}$ of the systematic scan Glauber dynamics $(X_t)_{-T\leq t\leq 0}$ initialized at $X_{-T}=Y$. We will later use \cref{t:convergence} to show that therefore, when $T$ is set to be sufficiently large, the distribution of $(\MS(u))_{u\in\mcM}=(\GB(\pred_u(0)))_{u\in\mcM}$ is close to the marginal distribution $\nu=\mu_{\mcM}$.

\begin{proposition}\label{prop:glauber-correct}
     Fix $t\leq 0$ and let $u=u_{i(t)}$. Suppose $|R|<80d k^4\log n$ after the execution of $\GB(t)$ (i.e., Line 4 of \cref{alg:glauber-ms} is never triggered). Then $\GB(t)$ faithfully returns $X_{t}(u)$, where $(X_t)_{-T\leq t\leq 0}$ is the systematic scan Glauber dynamics started at $X_{-T}=Y$. 
\end{proposition}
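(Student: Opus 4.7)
My plan is to proceed by induction on $t$ from $-T$ up to $0$, showing that $\GB(t)$ faithfully returns $X_t(u_{i(t)})$ under the assumption that line~4 of \cref{alg:glauber-ms} is never triggered. Every recursive call inside $\GB(t)$ is of the form $\GB(\pred_w(t))$ with $w \ne u_{i(t)}$, so $\pred_w(t) < t$ and the induction is well-founded. The easy cases are handled first: when $\pred_{u_{i(t)}}(t) \le -T$, the variable $u_{i(t)}$ is never resampled in the window $(-T, t]$, so $X_t(u_{i(t)}) = Y(u_{i(t)})$ matches the algorithm's output; and when $M(t) \in \{0, 1\}$, the cached value is correct by a previous inductive step.

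The central structural observation is that $\LB(t)$ uses a single random coin $x \in [0,1]$ to realize the mixture decomposition of the systematic scan update: at time $t$, resampling $u := u_{i(t)}$ from $\nu_u^{X_{t-1}(\mcM\setminus\{u\})}$ can be coupled as a uniform bit with probability $2\theta$ (the \emph{zone of local uniformity}) and as a draw from $\nu_u^{\pad, X_{t-1}(\mcM \setminus \{u\})}$ with probability $1-2\theta$. If $\LB(t) \in \{0,1\}$ we are in the first regime and returning this bit is correct. The substantive case is $\LB(t) = \perp$, where the while loop must construct a partial assignment $\sigma_\Lambda$ and exploration set $V'$ so that sampling from $(\nu_\Psi)_u^{\pad, \sigma_\Lambda}$ reproduces $\nu_u^{\pad, X_{t-1}(\mcM \setminus \{u\})}$.

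To handle this case, I will establish two loop invariants: (I1) every value $\sigma_\Lambda(w) \in \{0,1\}$ assigned during the loop agrees with $X_{t-1}(w)$; and (I2) when the loop terminates, every clause straddling $V'$ is satisfied by $\sigma_\Lambda$, so that the connected component $\Psi$ of $u$ in $\Phi^{\sigma_\Lambda}$ is contained in $V'$ and is separated from $V \setminus V'$ by satisfied clauses. Invariant (I1) is immediate in the if-branch from the inductive hypothesis applied to $\GB(\pred_w(t))$ combined with the identity $X_{\pred_w(t)}(w) = X_{t-1}(w)$ for $w \ne u_{i(t)}$; in the else-branch, a value $\LB(\pred_w(t)) \in \{0,1\}$ also agrees with $X_{t-1}(w)$ by the same local-uniformity coupling applied at time $\pred_w(t)$. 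Invariant (I2) follows directly from the loop's termination condition by induction on loop iterations.

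Given (I1) and (I2), the hard step is the distributional match. Setting $S := \{w \in \Lambda : \sigma_\Lambda(w) \in \{0,1\}\}$, I will argue via locality of the Gibbs measure that $\mu_u^{\sigma_\Lambda|_S} = \mu_u^{X_{t-1}(\mcM \setminus \{u\})}$, because every marked variable in $\mcM \setminus \{u\} \setminus S$ lies outside $\Psi$ and is separated from $u$ in $\Phi^{\sigma_\Lambda}$ by the satisfied clauses guaranteed by (I2), and hence does not influence $u$'s conditional marginal. This identifies $(\nu_\Psi)_u^{\pad, \sigma_\Lambda}$ with $\nu_u^{\pad, X_{t-1}(\mcM \setminus \{u\})}$, completing the induction. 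I expect the main obstacle to be carefully verifying this separation in the presence of $\perp$ entries introduced by the else-branch: one must show that any marked $w$ with $\sigma_\Lambda(w) = \perp$ is disconnected from $u$ in $\Phi^{\sigma_\Lambda}$, which should follow because $w$ was only recorded through a clause simultaneously satisfied by another $\LB$-sampled literal and therefore lies on the outer side of a satisfied straddling clause of $V'$.
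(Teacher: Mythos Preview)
Your proposal is correct and takes essentially the same approach as the paper's proof: induction on $t$, the two-case split via the $\LB$ mixture decomposition, and in the $\perp$ case establishing structural invariants on $V'$, $\Lambda$, $\sigma_\Lambda$ (your (I1) and (I2) are precisely the paper's properties (4) and (3)) followed by the connected-component locality argument to identify $(\nu_\Psi)_u^{\pad,\sigma_\Lambda}$ with $\nu_u^{\pad,X_{t-1}(\mcM\setminus\{u\})}$. The paper additionally records the auxiliary facts $u\in V'$ and $(V'\cap\mcM)\subseteq\Lambda\cup\{u\}$, which you invoke implicitly in your final paragraph when arguing that marked variables with $\sigma_\Lambda(w)=\perp$ lie outside $V'$.
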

\begin{proof}
   The statement clearly holds for $t=-T$ and $u=u_{i(-T)}$. Since $\pred_u(-T)= -T$, by Lines 1--2 of \cref{alg:glauber-ms}, we have $\GB(-T)=Y(u)=X_{-T}(u)$.

    Inductively, for $-T<t\leq 0$, assume the proposition for all $-T\leq t'<t$. Let $u=u_{i(t)}$. Suppose $|R|<80d k^4\log n$ after the execution of $\GB(t)$. 
     Observe that, since $\pred_w(t)< t$ for all $w\neq u$, $\GB(t)$ only makes further calls to $\GB(t')$ with $t'<t$. Thus, by the inductive hypothesis, all further calls of $\GB(t')$ have correctly returned the outcomes $X_{t'}(u_{i(t')})$.
    
    We wish to show that the resampled outcome $\GB(t)$ follows the marginal distribution $\nu_u^{X_{t-1}(\mcM\setminus\{u\})}$.  Per Lines 5--6, we may assume that $\GB(t)$ has never been called before, in which case we directly go to Line 7 of \cref{alg:glauber-ms}. Lines 7--8 guarantee that with probability $2\theta$, we assign $X_{t}(u)$ to be 0 or 1 uniformly at random.  It remains to show that in Lines 9--19, we
    are able to resample $X_{t}(u)$ from the padding distribution $\nu_u^{\pad, X_{t-1}(\mcM\setminus\{u\})}$. 
    
    To show this, we first verify that the sets $\Lambda$, $V'$ and the partial assignment $\sigma_{\Lambda}\in\{0,1,\perp\}^{\Lambda}$ obtained in Line 19 satisfy the following four conditions:
    \begin{enumerate}
        \item[(1)] $u\in V'$;
        \item[(2)] $(V'\cap \mathcal M)\subseteq \Lambda\cup\{u\}$;
        \item[(3)] for all $C\in\mathcal C$ such that $\vbl(C)\cap V'\neq\emptyset$ and $\vbl(C)\cap (V\setminus V')\neq\emptyset$, $C$ is satisfied by $\sigma_{\Lambda}$;
        \item[(4)] for all marked variables $w\in V'\setminus\{u\}$, we have $\sigma_{\Lambda}(w)=X_{t-1}(w)\in\{0,1\}$; for all variables $w\in\Lambda\setminus V'$, we either have $\sigma_{\Lambda}(w)=X_{t-1}(w)\in\{0,1\}$, or have  $\sigma_{\Lambda}(w)=\perp$.
    \end{enumerate}
    Here, property (1) holds because $u$ is added to $V'$ in the initialization, and $V'$ never removes variables. Property (2) holds because if variables in some clause $C$ are added to $V'$ in Line 15, then all marked variables in $C$ are added to $\Lambda$ in Line 14. As the while loop terminates, the opposite condition of Line 10 holds, which is exactly property (3). 
    
    We now show property (4). For every $w\in V'\setminus\{u\}$, we know that $w$ is added to $V'$ in Line 15 due to some clause $C$; if $w$ is marked, then by Line 14 and the inductive hypothesis, we know that we have assigned $\sigma_{\Lambda}(w)\leftarrow\GB(\pred_w(t))=X_{t-1}(w)\in\{0,1\}$. For every $w\in \Lambda\setminus V'$,  by Line 18, we have assigned $\sigma_{\Lambda}(w)\leftarrow \LB(\pred_w(t))$. If $\LB(\pred_w(t))\neq\perp$, then we have $\LB(\pred_w(t))=X_{\pred_w(t)}(w)=X_{t-1}(w)\in\{0,1\}$.

    Let $\Psi$ denote the connected component in $\Phi^{\sigma_{\Lambda}}$ that contains $u$. 
    Let $\mu_\Psi$ denote the distribution of a uniformly random satisfying assignment to $\Psi$. By property (3), we know that the connected component in $\Phi^{\sigma_{\Lambda}}$ that contains $u$ is supported on $V'$. By property (4), we know that $X_{t-1}(\mcM\setminus\{u\})$ is an extension of $\sigma_\Lambda$, which means that the connected component in $\Phi^{X_{t-1}(\mcM\setminus\{u\})}$ that contains $u$ is also supported on $V'$. Moreover, property (4)  implies that $\sigma_{\Lambda}(V'\setminus\{u\})=X_{t-1}(V'\setminus\{u\})$, which means that the two marginal distributions $(\mu_{\Psi})_u^{\sigma_{\Lambda}}$ and $(\mu_{\Psi})_u^{X_{t-1}(V'\setminus\{u\})}$ are the same. Altogether, we get that
    $$
    \mu_u^{\sigma_{\Lambda} }=(\mu_\Psi)_u^{\sigma_{\Lambda}}=(\mu_\Psi)_u^{X_{t-1}(\mathcal M\setminus \{u\})}=\mu_u^{X_{t-1}(\mathcal M\setminus \{u\})}.
    $$
    Recall that $\nu $ is the marginal distribution of $\mu$ on $\mcM$. Let $\nu_\Psi$ denote the marginal distribution of $\mu_\Psi$ on $\mcM$. Since $X_{t-1}(\mathcal M\setminus \{u\})$ and $\sigma_{\Lambda}$ are both supported on subsets of $\mcM$, the above gives
    $$\nu_u^{\sigma_{\Lambda} }=(\nu_\Psi)_u^{\sigma_{\Lambda}}=(\nu_\Psi)_u^{X_{t-1}(\mathcal M\setminus \{u\})}=\nu_u^{X_{t-1}(\mathcal M\setminus \{u\})}.$$

    Recall that we wish to sample from $\nu_u^{\pad, X_{t-1}(\mcM\setminus\{u\})}$. Observe that for any partial assignment $\sigma$, $\nu_u^{\pad,\sigma}$ is a deterministic function of $\nu_u^{\sigma}$ (see \cref{def:lb-dist}). Since
$\nu_u^{X_{t-1}(\mcM\setminus\{u\})}=(\nu_\Psi)_u^{\sigma_{\Lambda}}$, we have $\nu_v^{\pad,X_{t-1}(\mcM\setminus\{u\})}=(\nu_\Psi)_v^{\pad,\sigma_{\Lambda}}$ as well. Thus, it suffices to sample $c\sim (\nu_\Psi)_u^{\pad,\sigma_{\Lambda}}=\nu_u^{\pad,X_{t-1}(\mcM\setminus\{u\})}$ which was performed in Line 19. This shows that Lines 9--19 draws $X_t(u)$ from the padding distribution $\nu_u^{\pad, X_{t-1}(\mcM\setminus\{u\})}$, and finishes the proof.
\end{proof}

We now show that the condition $|R|<80d k^4\log n$ happens with high probability. To do this, we show that in a single instance of $\GB(t)$, the ``chain'' of further recursions $\GB(t')$  is unlikely to be large. We build the following graph $G$ to track these recursions.

\begin{definition}
    For every $C\in \mathcal C$ and $t\leq 0$, let 
$$\phi(C,t):=(\vbl(C), \{\pred_w(t):w\in 
\vbl(C)\cap \mcM \}),$$
i.e., $\phi(C, t)$ is the pair comprising the variables of $C$ and the most recent times that any marked variable in $C$ was resampled up until time $t$.
Consider an associated graph $G_t$ defined by
$$
V(G_t)=\{\phi(C,t'):C\in \mathcal C,\, -T\leq t'\leq t\},
$$
such that $\phi(C_1,t_1)\sim \phi(C_2,t_2)$ in $G_t$ if and only if the following holds:
\begin{enumerate}
    \item $\vbl(C_1)\cap \vbl(C_2)\neq\emptyset$,
    \item For $\mathcal T:=\{\pred_w(t_1):w\in 
\vbl(C_1)\cap \mcM \}\cup \{\pred_w(t_2):w\in 
\vbl(C_2)\cap \mcM \}$, we have $\max\mathcal T-\min\mathcal T<2m$ (recall that $m=|\mcM|$).
\end{enumerate}
\end{definition}

We also recall the notion of a 2-tree in a graph or hypergraph.
\begin{definition}\label{def:2tree}
    Let $G=(V(G),E(G))$ be a graph or hypergraph. We say that $Z\subseteq V(G)$ is a 2-tree if $Z$ satisfies the following conditions:
\begin{enumerate}
    \item for all $u,v\in Z$, $\text{dist}_{G}(u,v)\geq 2$;
    \item the associated graph with vertex set $Z$ and edge set $\{\{u,v\}\subseteq Z: \dist_{G}(u,v)=2\}$ is connected.
\end{enumerate}
\end{definition}

There are not many $2$-trees containing a fixed vertex in a graph of bounded maximum degree. We recall one example upper bound below that will be sufficient for our purposes.
\begin{observation}[\text{see \cite[Corollary 5.7]{FGY21}}]\label{obs:2tree}
    Let $G=(V(G),E(G))$ be a graph or hypergraph with maximum degree $D$. Then for every $v\in V$, the number of 2-trees  $Z\subseteq V(G)$ containing $v$ with $|Z|=\ell$ is at most $\frac{(eD^2)^{\ell-1}}{2}$.
\end{observation}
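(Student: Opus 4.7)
The plan is to reduce counting $2$-trees to counting connected subsets in an auxiliary graph and then invoke a standard estimate. I would first introduce the graph $H$ on vertex set $V(G)$ by declaring $u\sim_H w$ whenever $\dist_G(u,w)=2$. By \cref{def:2tree}, every $2$-tree $Z\subseteq V(G)$ of size $\ell$ containing $v$ induces a connected subset of $H$ that contains $v$, so it suffices to bound the number of connected vertex subsets of $H$ of size $\ell$ through $v$.

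Next I would control the maximum degree of $H$ in terms of $D$. For any $u\in V(G)$, a vertex $w$ with $\dist_G(u,w)=2$ must lie in some edge or hyperedge containing a $G$-neighbor of $u$; since $u$ lies in at most $D$ such edges and each of those contributes at most $D-1$ further vertices, the total count is at most $D(D-1)<D^2$, giving $\Delta(H)\leq D^2$.

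The final step is to apply the classical bound that in any graph of maximum degree $\Delta$, the number of connected vertex subsets of size $\ell$ containing a fixed vertex is at most $\tfrac{(e\Delta)^{\ell-1}}{2}$; this is essentially the content of \cite[Corollary 5.7]{FGY21}. Substituting $\Delta\leq D^2$ then yields the desired bound.

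The main obstacle, should one want a self-contained proof rather than a citation, is establishing this subtree-count estimate with the sharp constant $e$ and the factor of $\tfrac{1}{2}$. A standard approach is to associate to each connected set $Z$ a canonical rooted spanning tree (for instance the DFS tree from $v$ using a fixed tie-breaking order on $V(H)$), encode its traversal as a Dyck path of length $2(\ell-1)$ in which each descent has at most $\Delta$ choices, and apply sharp Catalan-type asymptotics; the factor of $\tfrac{1}{2}$ comes from the leading-order behaviour of $\binom{2(\ell-1)}{\ell-1}/\ell$. For the purposes of this paper, since the observation is already available from prior work, it suffices to invoke the cited bound.
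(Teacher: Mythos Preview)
The paper does not actually prove this observation; it simply cites \cite[Corollary 5.7]{FGY21}. Your proposal supplies exactly the standard reduction behind that citation: pass to the distance-$2$ graph $H$, note that $2$-trees become connected subsets of $H$, bound $\Delta(H)$, and invoke the classical $(e\Delta)^{\ell-1}/2$ count for connected vertex subsets through a fixed vertex. So your approach is the intended one.

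One point to tighten: your degree bound argument for $H$ is garbled, and in the hypergraph case it is not correct as written. You write that ``$u$ lies in at most $D$ such edges and each of those contributes at most $D-1$ further vertices,'' but this describes distance-$1$ vertices, not distance-$2$ ones, and in a general hypergraph (where ``degree $D$'' means $D$ incident hyperedges, each of possibly large size) the number of vertices at distance $2$ need not be bounded by $D^2$. For ordinary graphs the intended argument is simply that $u$ has at most $D$ neighbours, each with at most $D-1$ further neighbours, giving $\Delta(H)\le D(D-1)<D^2$; this is fine and is all that is needed in the places the paper actually uses the observation on a graph. For the hypergraph application (\cref{prop:conn-size}) the statement as phrased is being invoked somewhat loosely and your write-up inherits that imprecision, but this is an issue with the observation's formulation rather than with your strategy.
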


The following proposition shows that the size of $R$ is unlikely to be large when we terminate $\GB(t)$ for any $t\leq 0$.

\begin{proposition}\label{prop:glauber-time}
Fix $t\leq 0$. Suppose $\theta:= 1 - \frac12 \exp \left(\frac{2 e d k }{2^{\alpha k}}\right)\geq 0.4$ and $36 e d^3k^4 \cdot 0.6^{\alpha k}\leq 1/2$.
Upon the termination of $\GB(t)$, for every $\eta\geq 1$, the size of $R$ satisfies
$$
\PP[|R|\geq 24d k^4(\eta+1)]\leq 2^{-\eta}.
$$
\end{proposition}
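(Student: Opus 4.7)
The plan is to bound $|R|$ by the total number of ``bad'' events processed across the recursion tree of $\GB(t)$, where a bad event is a pair $(C,\tau)$ such that $\GB(\tau)$ processes $C$ via the if-branch of its while loop (every marked $w\in\vbl(C)\setminus\{u_{i(\tau)}\}$ has $\LB(\pred_w(\tau))$ failing to satisfy $C$). I would then convert many bad events into a large 2-tree of independently unlikely events, which Observation~\ref{obs:2tree} shows is rare.

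First I would carry out a charging argument to show $|R|\lesssim dk^2 B$, where $B$ is the total number of bad clauses summed across all $\GB$ calls in the recursion tree. Within one $\GB(\tau)$ call, each bad clause enlarges $V'$ by at most $k$ variables while each good (else-branch) clause lies on the boundary of $V'$ and is thereafter satisfied, so with $B_\tau$ bad clauses one has $|V'_f|\le 1+kB_\tau$ and at most $d\,|V'_f|$ processed clauses overall. Each processed clause contributes at most $k$ fresh $\LB$ queries, and since each bad clause spawns at most $k$ child $\GB$ calls, the recursion tree has at most $1+kB$ nodes. Summing yields $|R|\lesssim dk^2 B$, so $|R|\ge 24dk^4(\eta+1)$ forces $B=\Omega(k^2(\eta+1))$.

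Next I would verify that the set of $B$ bad events forms a connected subset of $V(G_t)$. Inside a single $\GB(\tau)$ call, bad clauses intersect the connected set $V'$ and all share the time coordinate $\tau$, so both conditions in the definition of $G_t$ are met; across a parent--child boundary the pivot variable $w\in\vbl(C_p)\cap\vbl(C_c)$ supplies the variable overlap while $\tau_c=\pred_w(\tau_p)\in(\tau_p-m,\tau_p]$ supplies the $2m$-time-window condition. A standard BFS/alternating-layer argument then extracts from this connected cluster of size $B$ a 2-tree of size $\ell=\Omega(B/D)$, where $D$ is the relevant max degree. For such a 2-tree the distance-$\ge 2$ condition forces the corresponding clauses to share no variables, so the $\LB$ randomness driving each bad event is disjoint; the joint probability factorises into a product of at most $0.6^{\alpha k - 1}$ factors, giving a bound of $0.6^{(\alpha k - 1)\ell}$. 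Combining with the 2-tree count $(eD^2)^{\ell-1}/2$ from Observation~\ref{obs:2tree} and the hypothesis $36ed^3k^4\cdot 0.6^{\alpha k}\le 1/2$ gives a geometric bound summing to $2^{-\ell+O(1)}$, and matching $\ell$ with the lower bound from the charging step closes the argument.

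The main obstacle is the last step: controlling the effective degree $D$ of the graph used in the 2-tree count. The naive degree of $G_t$ is $\Theta(m\cdot kd)$, since any fixed clause admits $\Theta(m)$ admissible time shifts in the time-window condition, which would inject an $n$-dependent factor and destroy the bound. I would resolve this by only considering 2-trees that arise as bad-event subsets of a single recursion tree rooted at $t$: there the time of each node is essentially forced by its position (each child's time shift is determined by a pivot-variable choice within an $m$-window below its parent), so the effective branching for 2-tree extension is only $O(dk^2)$---the number of variable-sharing clauses times the number of marked variables eligible as pivots---matching the $36ed^3k^4$ factor in the hypothesis and delivering the desired $2^{-\eta}$ tail.
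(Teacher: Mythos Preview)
Your high-level plan matches the paper's: charge $|R|$ against the number $|W|$ of ``bad'' clause--time pairs (those that trigger the if-branch), show $W$ is connected in $G_t$, extract a large 2-tree, and union-bound using Observation~\ref{obs:2tree}. But the ``main obstacle'' you flag is a misreading of $G_t$. Its vertices are not raw pairs $(C,\tau)$; they are $\phi(C,\tau)=\bigl(\vbl(C),\{\pred_w(\tau):w\in\vbl(C)\cap\mcM\}\bigr)$, so all $\tau$ that induce the same predecessor-time set are identified. With this definition the paper bounds the maximum degree of $G_t$ by $6k^2d-1$ directly: given $\phi(C_1,t_1)$ and a clause $C_2$ sharing a variable with $C_1$, the $2m$-window constraint forces $t_2$ into an interval of length $\le 5m$, and on such an interval the map $t_2\mapsto\{\pred_w(t_2):w\in\vbl(C_2)\cap\mcM\}$ changes only when some $\pred_w$ advances, at most $5k$ times, so it takes $\le 5k+1$ values. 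There is thus no $m$-dependence in the degree, and your recursion-tree workaround is unnecessary.

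Two further gaps. First, your independence step asserts that distance $\ge 2$ in $G_t$ forces the underlying clauses to be variable-disjoint; this is false, since non-adjacency can come from the time-window condition failing while the clauses share a variable. What the paper actually proves is that for non-adjacent $\phi(C_1,t_1),\phi(C_2,t_2)$ the \emph{time} sets $\{\pred_w(t_i):w\in\vbl(C_i)\cap\mcM\}$ are disjoint (a shared time would force a shared variable and a combined span $<2m$, hence adjacency), and it is this time-disjointness that makes the $\LB$ coins independent. Second, your charging $|R|\lesssim dk^2 B$ carries an extra factor of $d$ relative to the paper's $|R|\le 2k^2|W|+2k$; combined with the degree bound $6k^2d$, your bound would only extract a 2-tree of size $\Theta((\eta+1)/d)$ rather than $\eta+1$, and the constants do not close. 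The paper's tighter charging goes via $|R|\le (k+1)|A|$ with $|A|\le k|W|+1$, where $A$ is the set of timestamps at which $\GB$ is ever invoked.
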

\begin{proof}
Fix $t\leq 0$ and consider some instance of $\GB(t)$. 
For $t_1<t_0\leq t$, we say that $\GB(t_1)$ is \textit{triggered} by $x:=\TS(C,t_0)$ if $\GB(t_1)$ is called in Line 14 of $\GB(t_0)$ with clause $C$.
Let $W=\{x\in V(G_t):x\text{ triggers recursive calls}\}$. We begin by verifying a few basic properties that $R$ and $G := G_t$ enjoy.

\begin{claim}\label{claim:rbound-1}
    Upon the termination of $\GB(t)$, we have $|R|\leq 2k^2|W|+2k$.
\end{claim}
\begin{proof}

Observe that for every $t_0\leq t$, $\GB(t_0)$ calls $\LB$ at most $k+1$ times (in Line 7 and Line 12 of \cref{alg:glauber-ms}) before possibly going into another subroutine $\GB(t_1)$. Let $A$ denote the set of timestamps $t_0$ such that $\GB(t_0)$ was called at least once. Then we have $|R|\leq (k+1)|A|$. 

Observe that every $\GB(t_0)$ with $t_0<t$ is triggered by some $x\in W$. Moreover,  every $x\in W$ triggers at most $k$ subroutines $\GB(t_0)$ in Line 14. Thus, we get that
$|A|\leq k|W|+1$,
which gives $|R|\leq (k+1)|A|\leq (k^2+k)|W|+k+1\leq 2k^2|W|+2k$.
\end{proof}

\begin{claim}\label{claim:rbound-2}
    The maximum degree of $G$ is at most $6 k^2d-1$.
\end{claim}
\begin{proof}
     Fix $\phi(C_1,t_1)\in W$. There are at most $kd$ clauses $C_2\in\mcC$ such that $\vbl(C_1)\cap \vbl(C_2)\neq\emptyset$. 
     For any such $C_2$, we count the number of possible $\TS(C_2,t_2)$ so that $\mathcal T=\{\pred_w(t_1):w\in 
\vbl(C_1)\cap \mcM \}\cup \{\pred_w(t_2):w\in 
\vbl(C_2)\cap \mcM \}$ satisfies $\max\mathcal T-\min\mathcal T<2m$.

     Suppose
     \begin{align*}
         \{\pred_w(t_1):w\in 
\vbl(C_1)\cap \mcM \}&=\{s_1,\dots,s_{k_1}\}\text{ with }s_1<\dots<s_{k_1},\\
         \{\pred_w(t_2):w\in 
\vbl(C_2)\cap \mcM \}&=\{s_1',\dots,s_{k_2}'\}\text{ with }s_1'<\dots<s_{k_2}'.
     \end{align*}
     Observe that $s_{k_1}\leq t_1<s_1+m$ and $s'_{k_2}\leq t_2<s_1'+m$. If $\max\mathcal T-\min\mathcal T<2m$, then we have 
     $$
     s_{k_1}-2m<s_1'<s_{k_2}'<s_1+2m,
     $$
     which gives $t_2<s_1'+m<s_1+3m$ and $t_2\geq s_{k_2}'>s_{k_1}-2m$. Thus, we have $$s_{k_1}-2m<t_2<s_1+3m.$$
     Let $S:=\{s_{k_1}-2m+1,s_{k_1}-2m+2,\dots, s_1+3m-1\}$. In particular, $S$ is an interval of size $\leq 5m$ given by $\phi(C_1,t_1)$.

     Observe that as $t_2$ increments from $s_{k_1}-2m+1$ to $s_1+3m-1$, we have $\{\pred_w(t_2):w\in 
\vbl(C_2)\cap \mcM \}\neq \{\pred_w(t_2-1):w\in 
\vbl(C_2)\cap \mcM \}$ only if $\pred_{w}(t_2)> \pred_{w}(t_2-1)$ for some $w\in\vbl(C_2)\cap\mcM$. Moreover, since  $|\vbl(C_2)\cap\mcM|\leq k$ and $|S|\leq 5m$,
we know that there are at most $5k$ such numbers $t_2$ in $S$, and these numbers have been completely determined by $\vbl(C_2)$ and $S$ (i.e., by $\phi(C_1,t_1)$ and $C_2$). These numbers partition $S$ into at most $5k+1$ intervals such that $\{\pred_w(t_2):w\in 
\vbl(C_2)\cap \mcM \}$ is the same for all $t_2$ on each interval. Thus, for every fixed $\phi(C_1,t_1)$ and $C_2$, the set $\{\{\pred_w(t_2):w\in 
\vbl(C_2)\cap \mcM \}: t_2\in S\}$ has size at most $5k+1$. 

     Therefore, given any $\phi(C_1,t_1)$, we can pick a neighbor $\phi(C_2,t_2)\sim \phi(C_1,t_1)$ by first picking $C_2$ (which has $\leq kd$ choices) and then picking an element in $\{\{\pred_w(t_2):w\in 
\vbl(C_2)\cap \mcM \}: t_2\in S\}$  (which has $\leq 5k+1$ choices). So the number of possible $\phi(C_2,t_2)\sim \phi(C_1,t_1)$ in $G$ is at most $kd(5k+1)\leq 6k^2d-1$.
\end{proof}

\begin{claim}\label{claim:rbound-2.5}
    Let $u=u_{i(t)}$ and $W'=\{\phi(C,t)\in W:u\in \vbl(C)\}$. Then $G[W']$ is a clique.
\end{claim}
\begin{proof}
    Consider any two clauses $C,C'$ such that $u\in\vbl(C)\cap \vbl(C')$. Suppose $\phi(C,t),\phi(C',t)\in W'$. Clearly $\vbl(C)\cap \vbl(C')\neq\emptyset$. Moreover, since the timestamps $\pred_w(t)$ over all marked variables $w\in\mcM$ lie in the range $\{t-m+1,\dots,t\}$, we get that the maximum and minimum of $\{\pred_w(t):w\in 
\vbl(C)\cap \mcM \}\cup \{\pred_w(t):w\in 
\vbl(C')\cap \mcM \}$  differ by at most $m-1<2m$. Thus we have $\phi(C,t)\sim \phi(C',t)$ in $G$.
\end{proof}

\begin{claim}\label{claim:rbound-3}
Let $W'$ be as in \cref{claim:rbound-2.5}.
For every $x\in W$, there exists a path $p_0\dots p_\ell$ in $G[W]$ such that $p_0\in W'$ and $p_\ell=x$.
\end{claim}
\begin{proof}
    Let $x=\phi(C,t_1)$ be any element in $W$. We perform a double induction, the outside on $t_1$ and the inside on $C$.

    \begin{itemize}
        \item \textbf{Base case: $t_1=t$.} 
        
        Suppose first that $t_1=t$, so $x=\TS(C,t)$ for some $C$. Let $u=u_{i(t)}$. If $u\in \vbl(C)$, then $x\in W'$ and we are done. Inductively, suppose \cref{claim:rbound-3} holds for all $x'=\TS(C',t)$ that triggers a recursive call earlier than $x$ in \cref{alg:glauber-ms}. If $u\notin \vbl(C)$, then by the while loop condition Line 10, there must exist some $x'=\TS(C',t)\in W$ such that $x'$ triggers a recursive call earlier than $x$, and $\vbl(C)\cap \vbl(C')\neq\emptyset$. By the inductive hypothesis,  there exists  a path $p_0\dots p_\ell$ in $G[W]$ such that $p_0\in W'$ and $p_\ell=x'$. Since the maximum and minimum of $\{\pred_w(t):w\in 
\vbl(C)\cap \mcM \}\cup \{\pred_w(t):w\in 
\vbl(C')\cap \mcM \}$  differ by at most $m-1<2m$, we get that $x'\sim x$ in $G$. Therefore we can extend the path $p_0\dots p_\ell$ with $p_{\ell +1}=x$. This finishes the inductive step.

    \item \textbf{Inductive step: $t_1<t$.}
    
    Now suppose $x=\TS(C,t_1)$ with $t_1<t$. By the inductive hypothesis, we can assume \cref{claim:rbound-3} for all $t_0\in\{t_1+1,\dots,t\}$. Let $u_1=u_{i(t_1)}$.
    
    Suppose first that $u_1\in \vbl(C)$. Since $t_1<t$, there must exist $t_0\in\{t_1+1,\dots,t\}$ and $C'\in\mathcal C$ such that $\TS(C',t_0)$ triggers $\GB(t_1)$, with $u_1\in \vbl(C')$ and $t_1=\pred_{u_1}(t_0)$. Let $y=\TS(C',t_0)$.
    Clearly $\vbl(C)\cap \vbl(C')\neq\emptyset$. Since $t_0-m< t_1< t_0$, we also know that the maximum and minimum of $\{\pred_w(t_1):w\in 
\vbl(C)\cap \mcM \}\cup \{\pred_w(t_0):w\in 
\vbl(C')\cap \mcM \}$ differ by at most $2m-1<2m$. Thus we have $x\sim y$ in $G$. By the inductive hypothesis for $t_0$, there exists  a path $p_0\dots p_\ell$ in $G[W]$ such that $p_0\in W'$ and $p_\ell=y$. Since $x\sim y$ in $G$, we can extend the path by $p_{\ell+1}=x$.

    Inductively, suppose $u_1\notin \vbl(C_1)$, and \cref{claim:rbound-3} holds for all $t_0\in\{t_1+1,\dots,t\}$ and for all $x'=(C',t_1)$ that triggers a recursive call earlier than $x$. Then there must exist some $x'=\TS(C',t_1)\in W$ such that $x'$ triggers a recursive call earlier than $x$, and $\vbl(C)\cap \vbl(C')\neq\emptyset$.
    By the inductive hypothesis,  there exists  a path $p_0\dots p_\ell\in G[W]$ such that $p_0\in W'$ and $p_\ell=x'$. Since the maximum and minimum of $\{\pred_w(t_1):w\in 
\vbl(C)\cap \mcM \}\cup \{\pred_w(t_1):w\in 
\vbl(C')\cap \mcM \}$  differ by at most $m-1<2m$, we get that $x'\sim x$ in $G$. Thus we can extend the path by $p_{\ell+1}=x$. This finishes the inductive step.
    \end{itemize}
\end{proof}

\begin{claim}\label{claim:rbound-4}
    For all $x=\TS(C,t')\in W$ and $w\in\vbl(C)\cap\mcM$, the function $\mathsf{LB}\text{-}\mathsf{Sample}(\pred_w(t'))$ does not satisfy $C$.
\end{claim}
\begin{proof}
    This directly follows from Lines 12--14 of \cref{alg:glauber-ms}. Since $\TS(C,t')$ triggers a recursion in Line 14, by the condition in Line 12, for all marked variables $w\in \vbl(C) \cap \mcM$, the function $\LB(\pred_w(t'))$ does not satisfy $C$.
\end{proof}

With these claims, we can now prove the proposition.
Fix $t\leq 0$ and $\eta\geq 1$. Assume $|R|\geq 24d k^4(\eta+1)$, which by \cref{claim:rbound-1} implies that $|W|\geq 6d k^2 (\eta+1)$. By \cref{claim:rbound-3}, we know that  $W\cap W'\neq\emptyset$; by \cref{claim:rbound-2}, $G$ has maximum degree $\leq 6d k^2-1$.  Thus, by a greedy selection, we can find a 2-tree $Z\subseteq W$ containing some element in $W'$ such that $|Z|=\eta+1$. 
    
    Fix any  2-tree $Z\subseteq V(G)$ such that $|Z|=\eta+1$. For every $x=\TS(C,t')\in Z$, if $x\in W$, then we know from \cref{claim:rbound-4} that $\LB(\pred_w(t'))$ does not satisfy $C$ for all $w\in\vbl(C)\cap\mcM$. Since the latter happens with probability at most $(1-\theta)^{\alpha k}$, we have
    $$
    \PP[x\in W]\leq (1-\theta)^{\alpha k}.
    $$
    Since $Z$ is a 2-tree, we know that for every two $x=\phi(C_1,t_1),y=\phi(C_2,t_2)\in Z$, we   have $\{\pred_w(t_1):w\in\vbl(C_1)\cap \mcM\}\cap \{\pred_w(t_2):w\in\vbl(C_2)\cap \mcM\}=\emptyset$ (as otherwise $C_1$ and $C_2$ would share a variable, and the union of these two sets would span an interval of size at most $2(m-1)+1=2m-1$, meaning that $x\sim y$ in $G$, which contradicts the fact that $Z$ is an independent set in $G$). In particular, the sets
    $$
     \{\{\pred_w(t'):w\in\vbl(C)\cap \mcM\}: \TS(C,t')\in Z\}
    $$
    are disjoint.
    Thus, for every fixed 2-tree $|Z|=\eta+1$ in $V(G)$, we have
    $$
    \PP[Z\subseteq W]\leq \PP[x\in W\text{ for all $x\in Z$}]\leq (1-\theta)^{\alpha k\eta}.
    $$
    
    Let $\mathcal T$ denote the set of 2-trees of $V(G)$ of size $\eta+1$ that intersects with $W'$. Since $|W'|\leq d$ and $G$ has maximum degree at most $6k^2d-1$, by \cref{obs:2tree}, we have
    $$
    |\mathcal T|\leq  d\cdot \frac{(e(6k^2d)^2)^{\eta}}{2}\leq (36 e d^3k^4)^\eta.
    $$
    Thus, we have
    $$
    \PP[|W|\geq 6d k^2 (\eta+1)]\leq\sum_{Z\in\mathcal T}\PP[Z\subseteq W]\leq  (36 e d^3k^4)^{\eta}(1-\theta)^{\alpha k\eta}\leq 2^{-\eta},
    $$
    where the last step used the assumption that $\theta\geq 0.4$ and $36 e d^3k^4 \cdot 0.6^{\alpha k}\leq 1/2$.
    This implies that 
    $$
    \PP[|R|\geq 24d k^4(\eta+1)]\leq \PP[|W|\geq 6d k^2(\eta+1)]\leq 2^{-\eta}.
    $$
    
\end{proof}
Setting $\eta=(3+c)\log n$, we get the following correctness statement on \cref{alg:glauber-ms}.
\begin{corollary}
    For every $t\leq 0$, we have $$\PP[\GB(t)\neq X_t(u)]\overset{\ref{prop:glauber-correct}}{\leq} \PP[|R|>80d k^4\log n]\leq  \PP[|R|>24d k^4((3+c)\log n+1)] \overset{\ref{prop:glauber-time}}{\leq} n^{-(3+c)}.$$
    In particular, for every $u\in\mcM$, we have $$\PP[\GB(\pred_u(0))\neq X_0(u)]=\PP[\GB(\pred_u(0))\neq X_{\pred_u(0)}(u)]\leq n^{-(3+c)}.$$
    Taking a union bound over all $u\in\mathcal M$, we get that
    $$
    \PP[(\MS(u))_{u\in\mathcal M}\neq X_0]\leq n^{-(2+c)}.
    $$
\end{corollary}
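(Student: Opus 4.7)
The plan is to chain together the conditional correctness of \cref{prop:glauber-correct} with the tail bound of \cref{prop:glauber-time}, and then apply a union bound over the marked variables.

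For the first displayed inequality, I would observe that by \cref{prop:glauber-correct}, the only way $\GB(t)$ can return a value different from $X_t(u_{i(t)})$ is if the early-termination guard in Line 4 of \cref{alg:glauber-ms} is triggered; that is, if $|R| \geq 80 d k^4 \log n$ at the moment that check is made. Hence
$$\PP[\GB(t) \neq X_t(u)] \leq \PP[|R| \geq 80 d k^4 \log n].$$
Next, I would instantiate \cref{prop:glauber-time} with the choice $\eta = (3+c)\log n$. A direct arithmetic check (valid for all $n$ sufficiently large depending on $c$) gives $24 d k^4 ((3+c)\log n + 1) \leq 80 d k^4 \log n$, so the event $\{|R| \geq 80 d k^4 \log n\}$ is contained in $\{|R| \geq 24 d k^4(\eta+1)\}$. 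Applying \cref{prop:glauber-time} to the latter yields a bound of $2^{-\eta}$, which under the convention $\log=\log_2$ is exactly $n^{-(3+c)}$.

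For the ``in particular'' claim, the key observation from \cref{d:pred} is that $\pred_u(0)$ is the last time up to $0$ at which $u$ is resampled in the systematic scan, so $u$ is untouched on the interval $(\pred_u(0), 0]$, and therefore $X_0(u) = X_{\pred_u(0)}(u)$ deterministically. Applying the first inequality with $t = \pred_u(0)$ and $u = u_{i(\pred_u(0))} = u$ then gives
$$\PP[\GB(\pred_u(0)) \neq X_0(u)] \leq n^{-(3+c)}.$$

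Finally, since $\MS(u)$ is defined to call $\GB(\pred_u(0), \perp^{\ZZ}, \emptyset)$ and since $|\mcM| \leq n$, a union bound over all $u \in \mcM$ gives
$$\PP[(\MS(u))_{u \in \mcM} \neq X_0] \leq \sum_{u \in \mcM} \PP[\GB(\pred_u(0)) \neq X_0(u)] \leq n \cdot n^{-(3+c)} = n^{-(2+c)}.$$
I do not expect any real obstacle: the genuinely difficult content is already encapsulated in \cref{prop:glauber-correct,prop:glauber-time}, and the only care needed is to verify the two routine numerical inequalities (namely, $24 d k^4((3+c)\log n + 1) \leq 80 d k^4 \log n$ for large $n$, and $2^{-(3+c)\log n} \leq n^{-(3+c)}$ with $\log = \log_2$) so that the threshold baked into Line 4 of \cref{alg:glauber-ms} matches the tail bound after substituting $\eta = (3+c)\log n$.
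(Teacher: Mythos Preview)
Your proposal is correct and mirrors the paper's approach exactly: the paper embeds the proof in the displayed chain of inequalities (with \cref{prop:glauber-correct} and \cref{prop:glauber-time} cited over the inequality signs), and your write-up simply unpacks those steps, including the substitution $\eta=(3+c)\log n$ announced just before the corollary and the union bound over $|\mcM|\le n$.
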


Since we have picked $T$ in \cref{alg:ms} sufficiently large so that $d_{\TV}(X_0,\nu)\leq n^{-(2+c)}$, we get that the joint output $(\MS(u))_{u\in\mathcal M}$ satisfies $d_{\TV}((\MS(u))_{u\in\mathcal M},\nu)\leq n^{-c}$, proving~\cref{cor:tvdistance}.

\section{Efficiency of~\cref{alg:glauber-ms}}\label{s:p-efficient}

We now move on to show the efficiency of $\GB(t)$ for all $t\leq 0$.
Observe that for every $r\geq 48d k^4$, by \cref{prop:glauber-time}, we have
\begin{align}\label{eq:rsize}
     \hspace{4.8cm}\PP[|R|\geq r]\leq 2^{-\frac{r}{48d k^4}}.
\end{align}
Moreover, we terminate $\GB(t)$  once we reach $|R|\geq 80d k^4\log n$.
We will use these information to give an upper bound on the expected cost of $\GB(t)$.

We start by upper bounding the size of the final sets $V'$ and $\Lambda$ in  \cref{alg:glauber-ms} in terms of $|R|$. 

\begin{lemma}
    The $V'$ and $\Lambda$  in Line 19 of \cref{alg:glauber-ms} satisfy
    $$
    |\Lambda|\leq kd|V'|\leq 2kd^2|R|/\alpha.
    $$
\end{lemma}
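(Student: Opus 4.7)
The plan is to prove the two inequalities in turn. For $|\Lambda| \le kd|V'|$, observe that each iteration of the while loop in Lines 10--18 adds at most $|\vbl(C) \cap \mcM \setminus \{u\}| \le k$ variables to $\Lambda$. I claim that each clause is processed at most once: in the ``if'' branch $\vbl(C) \subseteq V'$ after Line 15, killing the while-loop predicate $\vbl(C) \cap (V\setminus V') \neq \emptyset$; in the ``else'' branch the failure of the check at Line 12 forces some $w^* \in \vbl(C) \cap \mcM \setminus \{u\}$ with $\LB(\pred_{w^*}(t))$ satisfying $C$, whose value is written into $\sigma_\Lambda$ at Line 18, so $C$ is satisfied thereafter. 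Each processed clause has $\vbl(C) \cap V' \ne \emptyset$ at the moment of processing and $V'$ only grows, so it has a variable in the final $V'$; the degree bound then gives at most $d|V'|$ processed clauses in total, yielding $|\Lambda| \le kd|V'|$.

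For $|V'| \le 2d|R|/\alpha$, I would case split on $N_{\mathrm{if}}$, the number of ``if'' iterations. If $N_{\mathrm{if}} = 0$ then $V' = \{u\}$; moreover $|R| \ge 1$ since $\LB(t)$ was already evaluated in Line 7 of \cref{alg:glauber-ms} before the while loop, so $|V'| = 1 \le 2d|R|/\alpha$ trivially. Otherwise, each ``if'' iteration adds at most $k-1$ new variables to $V'$ because $\vbl(C) \cap V' \ne \emptyset$ already, giving $|V'| \le 1 + (k-1)N_{\mathrm{if}} \le kN_{\mathrm{if}}$. To control $N_{\mathrm{if}}$, note that the ``if'' check requires $\LB(\pred_w(t))$ not to satisfy $C$ on every marked $w \in \vbl(C) \setminus \{u\}$, a set of size $\ge \alpha k - 1 \ge \alpha k / 2$ by the $\alpha$-marking property. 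Within a fixed $\GB(t)$ invocation, $w \mapsto \pred_w(t)$ is injective on $\mcM$ (the systematic scan assigns a unique marked variable to each time in the window $(t-m, t]$), so these timestamps are distinct and each is recorded in $R$ by \cref{alg:glauber-lb} on its first evaluation. Double counting pairs (iteration, timestamp), using that each $w \in \mcM$ appears in at most $d$ clauses, gives
\[
\tfrac{\alpha k}{2}\, N_{\mathrm{if}} \;\le\; \sum_i |\vbl(C_i) \cap \mcM \setminus \{u\}| \;\le\; d\,|R|,
\]
whence $N_{\mathrm{if}} \le 2d|R|/(\alpha k)$ and $|V'| \le kN_{\mathrm{if}} \le 2d|R|/\alpha$.

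The main subtlety I anticipate is verifying that the timestamps counted on the left genuinely inject into $R$ on the right: this could in principle fail if an $\LB$ call hit the early-termination branch at Lines 1--2 of \cref{alg:glauber-lb} (when $\pred_w(t) \le -T$), since in that case $R$ is not updated. However, the parameter $T$ fixed in \cref{alg:ms} is much larger than $m$ relative to the recursion depths that arise, so the relevant $\LB$ calls do commit entries to $R$. The remainder is a routine double-counting whose constant factors close up comfortably under the parameter regime of \cref{e:cond-2}.
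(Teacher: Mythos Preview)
Your proof is correct. For $|\Lambda|\le kd|V'|$, the paper argues more directly: every element of $\Lambda$ lies in some clause that intersects $V'$ (this is forced by the while-loop predicate at Line~10), and the number of clauses meeting $V'$ is at most $d|V'|$, each contributing at most $k$ variables. Your iteration-counting route, including the verification that each clause is processed at most once, reaches the same conclusion but does a little more work than necessary for this half.

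For $|V'|\le 2d|R|/\alpha$, the paper takes a slightly different tack: among the clauses whose variable sets were merged into $V'$ (the ``if'' clauses), it greedily extracts at least $|V'|/(\Delta+1)\ge |V'|/(2dk)$ pairwise variable-disjoint ones; each contributes $\ge\alpha k$ marked variables, all distinct across the disjoint family, and each such variable forces a distinct entry in $R$, yielding $|R|\ge\alpha|V'|/(2d)$. Your double-counting sidesteps the disjoint-clause extraction by paying the degree factor $d$ for multiplicity directly; it is arguably cleaner and lands on the identical constant. The early-termination caveat you flag (that $\LB(\pred_w(t))$ may fail to write to $R$ when $\pred_w(t)\le -T$) is not addressed in the paper's proof either; it is harmless in the intended parameter regime, since the truncation $|R|<80dk^4\log n$ keeps the recursion far from time $-T$.
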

\begin{proof}
    By Line 10 of \cref{alg:glauber-ms}, we know that for all $u\in\Lambda$, there exists clause $C\in \mcC$ such that $u\in \vbl(C)$ and $\vbl(C)\cap V'\neq\emptyset$. This shows that $|\Lambda|\leq kd|V'|$. 
    Observe that $V'$ contains at least $|V'|/(\Delta+1)\geq |V'|/(2dk)$ clauses with disjoint variable sets.
    Since every marked variable in the clauses added to $V'$ have gone through at least one round of $\LB$ (per Line 5 and Line 12), we get that $|R|\geq \alpha k\cdot |V'|/(2dk)=\alpha|V'|/(2d)$. 
\end{proof}

To upper bound the expected cost of Line 19, we further need the result of He--Wang--Yin \cite{HWY22} on the existence of a ``Bernoulli factory'' algorithm $\mathsf{BF}(\cdot)$ such that for every locally uniform CNF and variable $v$, the Bernoulli factory efficiently draws a $\{0,1\}$-random variable according to the padding distribution of $v$.

\begin{proposition}[{\cite[Lemma 3.10 and Appendix A]{HWY22}}]\label{prop:bf}
    Let $\Psi=(V_{\Psi},\mathcal C_{\Psi})$ be a $k$-CNF, $\sigma$ be a feasible partial assignment of $\Psi$, and $\Psi^\sigma=(V_{\Psi}^\sigma,\mcC_{\Psi}^\sigma)$ be the reduced CNF (see \cref{def:reduced}). Suppose we have
    $$
    \PP[\lnot C\mid\sigma]\leq \zeta \qquad \text{for every $C\in\mcC_{\Psi}^\sigma$}.
    $$
    Then there exists an algorithm $\mathsf{BF}(\cdot)$ such that for every $v\in V^\sigma$, 
    \begin{itemize}
        \item  $\mathsf{BF}(v)$ with probability 1  returns $\xi\sim (\mu_\Psi)_v^{\mathrm{pad},\sigma}$;
        \item $\mathsf{BF}(v)$ has expected cost $O(k^9d^6|\mcC_{\Psi}^\sigma|(1-e\zeta)^{-|\mcC_{\Psi}^\sigma|})$.
    \end{itemize}
\end{proposition}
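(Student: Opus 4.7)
I would build $\mathsf{BF}(v)$ in two stages. In the first stage, produce a sample from the marginal $(\mu_\Psi)_v^\sigma$ by rejection sampling on the reduced formula $\Psi^\sigma$: draw $X\in\{0,1\}^{V_\Psi^\sigma}$ uniformly, accept iff $X$ satisfies every clause in $\mcC_\Psi^\sigma$, and output $X(v)$. Conditional on acceptance, $X$ has law $\mu_\Psi^\sigma$, so $X(v)$ is distributed exactly as $(\mu_\Psi)_v^\sigma$. Each trial costs $O(k\lvert\mcC_\Psi^\sigma\rvert)$ time: drawing $O(k\lvert\mcC_\Psi^\sigma\rvert)$ bits and evaluating the clauses. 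By the symmetric Lov\'asz local lemma applied to the bad events $\{\lnot C\}_{C\in\mcC_\Psi^\sigma}$, each of probability at most $\zeta$ and with mutual dependency $\Delta=O(kd)$, the per-trial acceptance probability is at least $(1-e\zeta)^{\lvert\mcC_\Psi^\sigma\rvert}$, so the expected number of trials required to produce one $(\mu_\Psi)_v^\sigma$-sample is $O((1-e\zeta)^{-\lvert\mcC_\Psi^\sigma\rvert})$.

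In the second stage, convert this bit into one drawn from the padding distribution. Since
\[
(\mu_\Psi)_v^{\mathrm{pad},\sigma}(c)=\frac{(\mu_\Psi)_v^\sigma(c)-b}{1-2b}
\]
is a linear function of the underlying bias $p:=(\mu_\Psi)_v^\sigma(1)$, and since $p\ge b$ by the local-uniformity hypothesis that implicitly defines the padding distribution, a classical linear Bernoulli factory (e.g., of Nacu--Peres type) simulates a coin with bias $(p-b)/(1-2b)$ \emph{exactly} from a finite expected number of independent $p$-biased samples fed in by the first stage. Each such sample is produced by one invocation of the rejection sampler above, and the expected number of factory iterations is a constant determined by $b$ and the gap $1-2b$.

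Multiplying the per-trial cost of the first stage by the expected number of rejection trials, and then by the expected number of factory iterations, yields an expected overall cost of $O(\poly(k,d)\cdot\lvert\mcC_\Psi^\sigma\rvert\cdot(1-e\zeta)^{-\lvert\mcC_\Psi^\sigma\rvert})$; almost-sure termination with the correct output distribution follows from the corresponding properties of the rejection sampler and the linear Bernoulli factory. The main obstacle is the quantitative polynomial overhead $k^9 d^6$ in the prefactor: pinning down this constant requires carefully tracking the per-clause evaluation cost in the rejection loop, the exact form of the LLL lower bound on the acceptance probability with $\Delta=O(kd)$, and the number of underlying coin flips per iteration of the Bernoulli factory (which depends quantitatively on how far $p$ is from $b$ and $1-b$). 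A secondary subtlety is ensuring the factory outputs the exact padding distribution rather than an $\epsilon$-approximation, which must be justified by invoking a Nacu--Peres-style exactness guarantee at the specific linear function $(p-b)/(1-2b)$, and verifying that the local uniformity assumption continues to provide the bias gap $p\ge b$ at every call.
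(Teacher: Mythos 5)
The paper imports this proposition verbatim from~\cite{HWY22} and does not prove it, so there is no internal argument to compare against. Your Stage~1 outline is sound: the Lov\'asz local lemma lower bound with weights $x_C=e\zeta$ is valid whenever $e\zeta\Delta\leq 1$ (which holds in the application), and it does certify a per-trial acceptance probability of at least $(1-e\zeta)^{|\mcC_\Psi^\sigma|}$, recovering the leading factor $(1-e\zeta)^{-|\mcC_\Psi^\sigma|}$ in the stated cost.

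The gaps lie in Stage~2, and you understate the second one. First, as you concede, your analysis does not produce the $k^9d^6$ prefactor: a generic linear factory driven by whole-formula rejection would cost $O\bigl(|\mcC_\Psi^\sigma|(1-e\zeta)^{-|\mcC_\Psi^\sigma|}\bigr)$ times a quantity depending only on $b$ and $1-2b$, not on $k$ or $d$; if your route worked it would prove a \emph{different} bound than the cited one, so it cannot be regarded as a faithful reconstruction. The $\poly(k,d)$ term is intrinsic to the construction in~\cite{HWY22}, whose factory is assembled from calls to their recursive marginal sampler rather than from whole-formula rejection. Second, and more seriously, the claim that ``the expected number of factory iterations is a constant determined by $b$ and the gap $1-2b$'' is incorrect. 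The map $f(p)=(p-b)/(1-2b)$ has slope $1/(1-2b)>1$, and every exact Bernoulli factory for such an amplifying linear function has expected running time that diverges as $f(p)$ approaches $0$ or $1$, i.e.\ as $p$ approaches $b$ or $1-b$. Local uniformity only places $p$ in the \emph{closed} interval $[\theta,1-\theta]$ with $b=\theta$; there is no uniform margin from the endpoints, so your construction has no uniform finite-expected-cost guarantee, while the proposition asserts one with probability-one termination. The tailored factory in~\cite{HWY22} is built differently precisely to avoid this failure mode, and the route you sketch would need a further quantitative lower bound on $\min\{p-b,\,1-b-p\}$ (or a different factory primitive entirely) to close this gap.
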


We remark that in \cref{alg:glauber-ms}, our partial assignment $\sigma_{\Lambda}$  is always supported on the marked variables. Since every clause has at least $\alpha k$ unmarked variables that are not assigned by $\sigma_{\Lambda}$, we get  that $\PP[\lnot C \mid \sigma_{\Lambda}]\leq 2^{-\alpha k}$ for every $C$.
Thus when applying \cref{prop:bf}, we will take $\zeta=2^{-\alpha k}$.

We can now give an upper bound on the expected cost of \cref{alg:glauber-ms}, proving~\cref{lem:time-gb}.
\begin{proof}[Proof of~\cref{lem:time-gb}]
Let $A$ be the set of all $t'$ such that $\GB(t')$ is executed at least once as subroutine of $\GB(t)$. Since we run $\LB(t')$ at the beginning (Line 7) of each round $\GB(t')$, we know that $|A|\leq |R|$.

Observe that for every $t'$, if $\GB(t')$ has been computed once, then we will permanently assign $M(t')=\GB(t')\in\{0,1\}$, and later executions of $\GB(t')$ will terminate at Line 4. Thus, it suffices to upper bound the cost of every first execution of $\GB(t')$ before entering another $\GB(t'')$. Multiplying this by $|A|$ would give an upper bound on the cost of $\GB(t)$.

Suppose we are at the first time of executing $\GB(t')$ for some $t'$.
We first estimate the cost of the while loop in Lines 10--18, which is a constant multiple of the number of executions of Line 14 and Line 18. Note that every time we execute Line 14 or Line 18, some $w$ is added to $\Lambda$ due to a clause $C$ chosen in Line 11, with $w\in\vbl(C)$.
Moreover, each clause $C$ can be chosen in Line 11 at most once. Thus, the number of pairs $(w,C)$ that correspond to an execution of Line 14 or Line 18 is at most $d|\Lambda|$. Consequently, the cost of the while loop in Lines 10--18 is $O(d|\Lambda|)=O(kd^3|R|/\alpha)$.

We now estimate the cost of Line 19. Observe that in Line 19, we can apply the Bernoulli factory in \cref{prop:bf} to compute the padding distribution $(\nu_{\Psi})_u^{\pad,\sigma_{\Lambda}}$.
Since the connected component $\Psi$ in Line 19 has at most $d|V'|\leq 2d^2|R|/\alpha$ clauses,   by \cref{prop:bf}, the expected cost of Line 19 is at most
$$
O(k^9d^8|R|/\alpha(1-e2^{-\alpha k})^{-2d^2|R|/\alpha}).
$$
Let $R_{\max}=80dk^4\log n$.
Combining the above and applying \cref{eq:rsize}, we get that the expected cost of $\GB(t)$ is at most
\begin{align*}
&\EE[|A|\cdot O(kd^3|R|/\alpha+k^9d^8|R|/\alpha(1-e2^{-\alpha k})^{-2d^2|R|/\alpha})]\\
&\leq \EE[|R|\cdot O(kd^3|R|/\alpha+k^9d^8|R|/\alpha(1-e2^{-\alpha k})^{-2d^2|R|/\alpha})]\\
    &\leq \sum_{r=0}^{R_{\max}} \PP[|R|\geq r]\cdot r\cdot O(kd^3r/\alpha+k^9d^8r/\alpha(1-e2^{-\alpha k})^{-2d^2 r/\alpha})\\
    &\leq \sum_{r=0}^{R_{\max}} O(2^{-\frac{r}{48dk^4}}\cdot r\cdot kd^3r/\alpha+k^9d^8r/\alpha(1-e2^{-\alpha k})^{-2d^2 r/\alpha}) \\   &\leq  \sum_{r=0}^{R_{\max}} O\left(2^{-\frac{r}{48dk^4}}\cdot r\cdot\left(\frac{kd^3r}{\alpha}+\frac{k^9d^8r}{\alpha} \cdot e^{\frac{2d^2 r/\alpha}{2^{\alpha k}}}\right)\right)=O(k^9d^8R_{\max}^2/\alpha)=O(k^{17}d^{10}\log^2n/\alpha),
\end{align*}
where in the last step we used the information that
$$
2^{-\frac{1}{48d k^4}}\cdot e^{\frac{2d^2 /\alpha}{2^{(1-\alpha)k}}}\leq 0.9 \quad \Longrightarrow\quad  \sum_{r=0}^{R_{\max}}\left(2^{-\frac{1}{48d k^4}}\cdot e^{\frac{2d^2 /\alpha}{2^{\alpha k}}}\right)^r=O(1).
$$
\end{proof}

\section{Proof of~\cref{t:conn}}\label{sec:conn}

In this section we show \cref{t:conn}. Let $\Phi=(V,\mathcal C)$ be a $(k,d)$-formula, $\alpha>0$, and $\mathcal M\subseteq V$ be an $\alpha$-marking. 
We introduce the local access algorithm $\CONN$ (\cref{alg:conn}) such that for every unmarked variable $v\in V \setminus \mcM$, $\CONN(v)$ returns the connected component of $\Phi^{(\MS(u))_{u\in\mcM}}$ that contains $v$. 

\begin{algorithm}[ht!]
\caption{$\mathsf{Conn}(v)$}
\label{alg:conn}
 \hspace*{\algorithmicindent} \textbf{Input:} CNF formula $\Phi=(V,\mathcal C)$, $v\in V\setminus\mcM$\\
 \hspace*{\algorithmicindent} \textbf{Output:}  random assignment in $\{0,1\}$ 
\begin{algorithmic}[1]
\State Initialize $\mcS = \{C \in \mcC \mid v\in \text{vbl}(C)\}$ as a stack
\While{$|\mcS| > 0$}
\State Pop $C$ off of $\mcS$
\For{$w \in \vbl(C^{\sigma})$}
\If{$\mathsf{IsMarked}(w)$}
\State $\sigma(w) \leftarrow \mathsf{MarginSample}(w)$
\State Remove from $\mcS$ any clauses satisfied by $\sigma$ 
\EndIf
\If{$C^{\sigma} = 1$}
\State \textbf{break} 
\EndIf
\EndFor
\If{$C^{\sigma} \neq 1$}
\For{$w \in \vbl(C^{\sigma})$}
\State Push $\{ C'\in\mcC \mid  w \in \vbl((C')^{\sigma}), (C')^{\sigma} \neq 1, C'\text{ has not been in $\mcS$}\}$ onto $\mcS$
\EndFor
\EndIf
\EndWhile
\Return $\Phi_v^{\sigma}$
\end{algorithmic}
\end{algorithm}

For every clause $C\in\mcC$ and partial assignment $\sigma$, we define $C^\sigma$ to be the reduced clause after conditioning on $\sigma$ and removing the variables assigned 0 or 1 by $\sigma$ (in particular, $C^\sigma=1$ if $C$ is already satisfied by $\sigma$). It is not hard to read from the algorithm that $\CONN(v)$ computes the connected component of $\Phi^{(\MS(u))_{u\in\mcM}}$ that contains $v$. Starting with the empty assignment $\sigma=\emptyset$ and $\mcS$ to be the clauses that contain $v$, the algorithm keeps sampling the marked variables touched by $\mathcal S$, adding these assignments to $\sigma$, and see if clauses in $\mathcal S$ are satisfied by $\sigma$. If a clause is satisfied by $\sigma$, then we simply remove it from $\mathcal S$; if it is not satisfied by $\sigma$, then in addition to removing it from $\mathcal S$, we also add all unexplored clauses adjacent to this clause to $\mathcal S$. One can see that in this procedure,  $\mcS$ maintains to be the boundary of those explored and yet unsatisfied clauses given by the partial assignment $\sigma$. Once $\mathcal S$ becomes empty, it means that in the reduced CNF $\Phi^\sigma$, there is no unexplored clause adjacent to the  explored clauses that are not satisfied by $\sigma$. In other words, $\sigma$ has shattered $\Phi$ so that those explored clauses form a connected component $\Phi_v^\sigma$.

To show \cref{t:conn}, it remains to show that the algorithm $\CONN(v)$ has expected cost $\poly\log(n)$. We show this by giving an upper bound on the size of the connected component of $\Phi^{(\MS(v))_{u\in\mcM}}$ that contains $v$.

\begin{proposition}\label{prop:conn-size}
 Fix $v\in V\setminus \mcM$. Let $\tau=(\MS(u))_{u\in\mcM}$, and let $\Phi_v^\tau$ denote the connected component of $\Phi^\tau$ that contains $v$.  Suppose $\theta:= 1 - \frac12 \exp \left(\frac{2 e d k }{2^{\alpha k}}\right)\geq 0.4$ and $d\leq 2^{\alpha k/4}$. Then with probability at least $1-n^{-0.1\log n}$ the number of clauses in $\Phi^\tau_v$ is at most $kd\log^2 n$.
\end{proposition}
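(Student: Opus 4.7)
The plan is to first prove the bound under the exact marginal distribution $\tau_* \sim \nu := \mu_{\mcM}$, and then transfer to $\tau = (\MS(u))_{u\in\mcM}$ using the total variation bound $d_{\TV}(\tau,\nu) < n^{-c}$ from~\cref{t:ms} (which may be taken with $c$ as large as desired). Call a clause $C\in\mcC$ \emph{bad} under $\tau_*$ if the restriction of $\tau_*$ to $\vbl(C)\cap\mcM$ fails to satisfy $C$; equivalently, every marked variable in $C$ takes the unique value falsifying its literal under $\tau_*$. By the construction of the reduced formula, $\Phi_v^{\tau_*}$ is contained in the connected component of bad clauses in the dependency graph that is reachable from $v$. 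Since $\nu$ is $\theta$-marginally lower bounded with $\theta \geq 0.4$, the $\nu$-probability that a fixed clause $C$ has its (at least $\alpha k$) marked variables in the prescribed bad configuration is at most $(1-\theta)^{\alpha k} \leq 0.6^{\alpha k}$, even after conditioning on any other feasible partial assignment to disjoint marked variables.

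The core of the argument is a standard 2-tree construction. Suppose $|\Phi_v^{\tau_*}| > kd\log^2 n$. Then the bad-clause connected subgraph of the dependency graph (max degree $\Delta \leq kd$) containing some clause with $v\in\vbl(C)$ has more than $kd\log^2 n$ vertices. A standard combinatorial fact states that any connected subgraph on $N$ vertices in a graph of max degree $D$ contains a 2-tree (see~\cref{def:2tree}) of size at least $\lceil N/(D+1)\rceil$, which yields a 2-tree $Z$ of bad clauses of size exactly $\ell := \lceil \log^2 n / 2\rceil$; by adjusting the extraction we may further ensure $Z$ includes some clause with $v \in \vbl(C)$. Since clauses of a 2-tree are pairwise at dependency-graph distance $\geq 2$, they share no variables at all, and so the marked-variable sets of the clauses in $Z$ are pairwise disjoint. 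Iterated conditioning using the $\theta$-marginal lower bound then yields
\begin{equation*}
\Pr_{\tau_* \sim \nu}[\text{every } C\in Z \text{ is bad}] \leq (1-\theta)^{\alpha k \ell} \leq 0.6^{\alpha k \ell}.
\end{equation*}

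By~\cref{obs:2tree}, the number of 2-trees of size $\ell$ in the dependency graph containing any fixed clause is at most $(e\Delta^2)^{\ell-1}/2 \leq (ek^2 d^2)^{\ell}$, and at most $d$ clauses contain $v$; a union bound then gives
\begin{equation*}
\Pr_{\tau_* \sim \nu}\bigl[|\Phi_v^{\tau_*}| > kd\log^2 n\bigr] \leq d \cdot \bigl(ek^2 d^2 \cdot 0.6^{\alpha k}\bigr)^{\ell}.
\end{equation*}
The hypothesis $d \leq 2^{\alpha k/4}$ gives $d^2 \cdot 0.6^{\alpha k} \leq 2^{(1/2-\log_2(5/3))\alpha k}$, and since $\log_2(5/3) > 1/2$, the prefactor $ek^2 d^2 \cdot 0.6^{\alpha k}$ is at most $1/2$ for $k$ sufficiently large. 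Hence the right-hand side is at most $d \cdot 2^{-\ell} \leq n^{-0.1\log n}$ for $n$ large, and the TV transfer from $\nu$ to $\tau$ absorbs the polynomially small $n^{-c}$ error. No substantial obstacle is foreseen; the most delicate step is the conditional-independence reduction that combines disjointness of marked-variable sets across a 2-tree with $\theta$-marginal lower boundedness (which follows directly from~\cref{def:lb-dist}), together with the routine verification that the geometric factor is sub-$1$ under the stated density bound.
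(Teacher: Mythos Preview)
Your core 2-tree/union-bound argument is sound and close in spirit to the paper's, but the final transfer step does not deliver the stated bound. The proposition asks for a \emph{super-polynomially} small failure probability $n^{-0.1\log n}$, whereas \cref{t:ms} only guarantees $d_{\TV}(\widehat\nu,\nu)<n^{-c}$ for a \emph{fixed constant} $c$; the truncation threshold $|R|\ge 80dk^4\log n$ in \cref{alg:glauber-ms} hard-codes a merely polynomial error, so $c$ cannot be pushed to order $\log n$. Transferring your bound from $\tau_*\sim\nu$ to $\tau=(\MS(u))_{u\in\mcM}$ therefore gives
\[
\Pr\bigl[|\Phi_v^\tau|>kd\log^2 n\bigr]\;\le\; n^{-0.1\log n}+n^{-c},
\]
and the second term dominates for every fixed $c$. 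You end up with only a polynomial tail, not the claimed $n^{-0.1\log n}$; the phrase ``absorbs the polynomially small $n^{-c}$ error'' has the comparison the wrong way round.

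The paper avoids this by arguing directly about $\tau=(\MS(u))_{u\in\mcM}$, exploiting the \emph{algorithmic structure} of $\MS$ rather than its closeness to $\nu$. For each marked $w'$, the top-level call $\MS(w')=\GB(\pred_{w'}(0))$ begins (after the vacuous early-exit checks) by drawing $\LB(\pred_{w'}(0))\sim\nu^{\lb}$, which equals the value satisfying $w'$'s literal in $C_w$ with probability exactly $\theta$, and in that case $\MS(w')$ returns it. Since the $\LB$ draws at distinct timestamps use fresh independent coins from the public tape, and the clauses indexed by the 2-tree have pairwise disjoint marked-variable sets, the event ``no $C_w$ is satisfied by $\tau$'' is dominated by an independent product event on the $\LB$ side, yielding $(1-\theta)^{\alpha k|Z|}$ with no TV loss whatsoever. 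This algorithmic independence---not the $\theta$-marginal lower-boundedness of $\nu$---is what carries the super-polynomial bound, and it is the missing ingredient in your approach.

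A secondary difference that is \emph{not} a gap: you build 2-trees of \emph{clauses} in the clause dependency graph (degree $\le kd$), whereas the paper builds 2-trees of \emph{variables} in the hypergraph $H_\Phi$ (degree $\le d$). Your route incurs an extra $k^2$ in the per-step factor, $ek^2d^2\cdot 0.6^{\alpha k}$ versus $ed^2\cdot 0.6^{\alpha k}$, but both are below $1$ in the large-$k$ regime the paper works in, so this is purely cosmetic.
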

\begin{proof}

Let $\Psi=(V_\Psi,\cC_\Psi)$ denote the induced $k$-CNF obtained by taking all the clauses in $\Phi$ in their entirety that appear (possibly in a reduced form) in $\Phi^\tau_v$. 
In particular $\Psi$ is $k$-uniform (although $\Phi^\tau_v$ might not be), and $\Phi_v^\tau$ and $\Psi$ have the same number of clauses. Let $\mathcal M_\Psi$ be the set of marked variables in $V_\Psi$. 

Recall the dependency hypergraph $H_\Phi=(V,\mathcal E)$ defined in \cref{def:dependency-graph}. Recall also from \cref{def:2tree} that $Z\subseteq V$ is a 2-tree if both of the following conditions are satisfied:
\begin{enumerate}
    \item for all $u,v\in Z$, $\text{dist}_{H_\Phi}(u,v)\geq 2$;
    \item the associated graph with vertex set $Z$ and edge set $\{\{u,v\}\subseteq Z: \dist_{H_\Phi}(u,v)=2\}$ is connected.
\end{enumerate}

Suppose $\Psi$ has $b$ clauses. Then by a greedy removal argument, one can find a 2-tree $Z\subseteq \mathcal M_\Psi$ with $|Z|= \floor{\frac{b}{kd}}$. For every $w\in Z$, let $C_w\in C_\Psi$ be any clause that contains $w$. We know that clauses in $\{C_w:w\in Z\}$ have pairwise disjoint variable sets as $\text{dist}_{H_\Phi}(u,v)\geq 2$ for all $u,v\in Z$. 

Since $\Phi_v^\tau$ is a single connected component, we know that for every marked variable $w'\in \text{vbl}(C_w)$, $\tau(w')$ does not satisfy $C_w$. This occurs with probability at most $1-\theta$, as we know that $\nu_{\mcM}$ is $\theta$-lower bounded, and $\MS(w')$ returns each of 0 and 1 with probability at least $\theta$.

Since every $C_w$ has at least $\alpha k$ marked variables,
we get that
$$
\PP[\text{$C_w$ is not satisfied by $(\tau(w'))_{w'\in\text{vbl}(C_w)\cap \mathcal M}$}]\leq (1-\theta)^{\alpha k}.
$$
Since the $C_w$ have disjoint variable sets, the events whether $C_w$ are satisfied by $(\tau(w'))_{w'\in\text{vbl}(C_w)\cap \mathcal M}$ are independent of each other. Thus, we know that 
$$
\PP[\text{none of the $C_w$, $w\in Z$ is  satisfied by $\tau$}]\leq (1-\theta)^{\alpha k|Z|}=(1-\theta)^{\alpha k\floor{\frac{b}{kd}}}.
$$

Finally, we union bound over all 2-trees of size $\floor{\frac{b}{kd}}$ that contains $v$. Recall from \cref{obs:2tree} that, since $H_\Phi$ is a hypergraph with maximum degree $d$, the number of 2-trees  $Z\subseteq H_\Phi$ containing $v$ with $|Z|=\ell$ is at most $\frac{(ed^2)^{\ell-1}}{2}$. Setting $\ell=\floor{\frac{b}{kd}}$ and applying a union bound, we get that
$$
\PP[\text{$\Phi_v^\tau$ has $b$ clauses}]=\PP[\text{$\Psi$ has $b$ clauses}]\leq \frac{(ed^2)^{\floor{\frac{b}{kd}}-1}}{2}(1-\theta)^{\alpha k\floor{\frac{b}{kd}}}\leq 0.9^{\floor{\frac{b}{kd}}}.
$$
Here in the last inequality, we used the fact that $\theta>0.4$ and the assumption that $d\leq 2^{\alpha k/4}$. Setting $b=kd\log^2 n$, we get that $\PP[\text{$\Phi_v^\tau$ has $kd\log^2 n$ clauses}]\leq 0.9^{\log^2 n}\leq n^{-0.1\log n}.$
\end{proof}

Observe that in $\CONN(\cdot)$, every clause is added to $\mathcal S$ because its variable set intersects with that of some unsatisfied clause in $\Phi_v^\tau$. Thus by \cref{prop:conn-size}, we know that with probability $1-n^{-0.1\log n}$ the number of clauses ever added to $\mathcal S$ is at most $\Delta kd\log^2n\leq k^2d^2\log^2n$. In this case, Line 7 of \cref{alg:conn} is executed at most $k^2d^2\log^2n\cdot k=k^3d^2\log^2n$ times. Combined with \cref{lem:time-gb}, we get that  the expected cost of $\CONN$ is at most
$$(k^3d^2\log^2n+n^{-0.1\log n}\cdot n)\cdot O(k^{17}d^{10}\log^2n/\alpha)=\poly\log(n). $$
Thus, \cref{t:conn} is proved.

\section{Proof of \cref{t:ismarked}}\label{app:marking}

Let $c>0$ be any constant. Given a $k$-uniform hypergraph $H$ and constants $0<\alpha<\beta_1<\beta_2<1$, we describe a local computation algorithm $\mathcal A$ that with probability $1-n^{-c}$ finds a 2-coloring of $H$, such that every edge in $H$ has at least $\alpha k$ vertices in each color. Taking $H=H_\Phi$ (as in \cref{def:dependency-graph}), we see that finding an $\alpha$-marking for $\Phi$ is equivalent to finding such a 2-coloring for $H_\Phi$. The algorithm $\mathcal A$  runs in three phases.
\begin{itemize}
    \item In Phase 1, we randomly color each vertex following an arbitrary order of the vertices.
    \begin{itemize}
        \item If an edge  at any point contains at least $\alpha k$ vertices in both colors, then we delete it from $H$.
        \item If an edge at any point contains more than $\beta_1 k$ vertices in one color,  then we call it a dangerous-1 edge and mark all its uncolored vertices as troubled-1. These troubled-1 vertices will not be colored in Phase 1.
        \item If an edge  at any point has all its vertices colored or troubled-1, but the edge itself is not deleted or dangerous-1, then we call this edge unsafe-1.
    \end{itemize}
    We proceed until all vertices in $H$ become colored or troubled-1.
    \item In Phase 2, we run the previous phase again on the remaining hypergraph $H$.
    There will be two differences:
    \begin{itemize}
        \item We do not recolor vertices that are already colored in Phase 1. In other words, we only randomly color those troubled-1 vertices.
        \item This time, we call an edge dangerous-2 if it contains more than $\beta_2 k$ vertices in one color. Again, if an edge becomes dangerous-2 at any point, then we mark all its  uncolored vertices as troubled-2.  These troubled-2 vertices will not be colored in Phase 2.
    \end{itemize}
    We proceed until all vertices in $H$ become colored or troubled-2.
    \item In Phase 3, we use an exhaustive search to color all the remaining (i.e., troubled-2) vertices in $H$, such that every remaining edge in $H$ has at least $\alpha k$ vertices in each color.
\end{itemize}

Suppose $1/2<\beta_1<\beta_2<1-\alpha$ satisfy conditions \cref{e:cond-1}:
\begin{align}
&4\alpha<2(1-\beta_2)<1-\beta_1, \notag\\
   & 16k^4d^5\leq 2^{(\beta_1-h(1-\beta_1))k}, \notag\\
    &16k^4d^5\leq 2^{(\beta_2-\beta_1)k-h\left(\frac{\beta_2-\beta_1}{1-\beta_1}\right)(1-\beta_1)k},\notag\\
    &\delta\mapsto  (\beta_2-\delta)-h\left(\frac{\beta_2-\delta}{1-\delta}\right)(1-\delta)\text{ is decreasing on $0\leq \delta\leq \beta_1$,} \notag\\
    &2e(kd+1)< 2^{\left(1-h\left(\frac{\alpha}{1-\beta_2}\right)\right)(1-\beta_2)k}. \notag
\end{align}
Then, the following holds regarding the above algorithm.

\begin{lemma}\label{l:phase-1}
    There exists $c_1>0$ sufficiently large such that with probability at least $1-n^{-2c}$, after Phase 1, all connected components in $H$ have size at most $c_1\log n$.
\end{lemma}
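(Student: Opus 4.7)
The plan is to follow the standard shattering argument from the Lovász Local Lemma / Moser--Tardos literature, adapted as in~\cite{Moi19,ARV12} to Phase 1 of our marking procedure. The target is to show that ``large'' connected components of edges that survive Phase 1 (i.e., edges that become dangerous-1 or unsafe-1) are exponentially unlikely, so a component of size $c_1 \log n$ has probability at most $n^{-3c}$, and a union bound over starting vertices yields $n^{-2c}$.

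First I would bound the probability that a single edge $e \in H$ becomes dangerous-1. Since Phase 1 colors each non-troubled vertex by an independent uniform bit, an edge $e$ becomes dangerous-1 only if at some moment more than $\beta_1 k$ of its vertices have the same color among those colored so far. A standard martingale / binomial-tail calculation bounds this by $2 \binom{k}{\beta_1 k} 2^{-\beta_1 k}$, and then by the entropy bound this is at most $2 \cdot 2^{-(\beta_1 - h(1-\beta_1))k}$. The second line of condition~(\ref{e:cond-1}), $16 k^4 d^5 \leq 2^{(\beta_1 - h(1-\beta_1))k}$, is precisely tuned so that this per-edge failure probability absorbs the combinatorial factors below. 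For unsafe-1 edges I would observe that an unsafe-1 edge must share a vertex with a dangerous-1 edge (since the only way uncolored vertices of an unsafe-1 edge become troubled-1 is to belong to some dangerous-1 edge), so unsafe-1 edges lie within distance one of dangerous-1 edges in the line/dependency hypergraph.

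Next I would set up the 2-tree argument. Let $L$ denote the line graph on $\mcE$ with adjacency ``share a vertex,'' which has maximum degree at most $kd$. Call an edge \emph{bad} if it is dangerous-1 or unsafe-1; the above observation shows that any connected set of bad edges of size $\ell$ in $L$ contains a connected sub-collection of $\geq \ell/(kd+1)$ dangerous-1 edges, and from this one can greedily extract a 2-tree $Z$ of dangerous-1 edges of size $|Z| \geq \ell / (k d (kd+1))$ in $L$ (using that a ball of radius $2$ in $L$ has size at most $(kd)^2$). Because edges in a 2-tree have pairwise disjoint vertex sets, their colorings in Phase 1 are determined by disjoint sets of random bits, so the events ``edge is dangerous-1'' are mutually independent along $Z$. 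Hence, for any fixed 2-tree $Z$,
\[
\Pr[\text{every } e \in Z \text{ is dangerous-1}] \;\leq\; \bigl(2 \cdot 2^{-(\beta_1 - h(1-\beta_1))k}\bigr)^{|Z|}.
\]

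Finally I would union-bound over 2-trees. By \cref{obs:2tree} applied to $L$, which has maximum degree at most $kd$, the number of 2-trees of size $t$ containing any fixed edge is at most $\tfrac12 (e(kd)^2)^{t-1}$. If a component of $v$ in the bad subhypergraph has $\geq \ell$ edges, then some edge incident to $v$ (there are at most $d$ such) lies in a 2-tree $Z$ of dangerous-1 edges with $|Z| \geq t := \lceil \ell / (k^2 d^2 (kd+1)) \rceil$. So the probability that $v$ lies in a bad component of size $\geq \ell$ is at most
\[
d \cdot \tfrac12 (e k^2 d^2)^{t-1} \cdot \bigl(2 \cdot 2^{-(\beta_1 - h(1-\beta_1))k}\bigr)^{t} \;\leq\; \bigl(16 k^4 d^5 \cdot 2^{-(\beta_1 - h(1-\beta_1))k}\bigr)^{t},
\]
after absorbing constants. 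By the second line of~(\ref{e:cond-1}) this base is $\leq 1/2$, giving bound $2^{-t}$. Choosing $c_1$ large enough so that $\ell = c_1 \log n$ forces $t \geq (3c+1)\log_2 n$, and taking a union bound over $v \in V$, yields overall failure probability $\leq n \cdot 2^{-t} \leq n^{-2c}$, as desired.

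The main obstacle is the careful bookkeeping around unsafe-1 edges: their badness depends on the colorings of \emph{neighboring} vertices and so is not straightforwardly a function of disjoint randomness, which is why I reduce to the dangerous-1 events alone (paying only a constant factor in the exponent), and then show that extracting a 2-tree of dangerous-1 edges still captures enough of the component. A secondary subtlety is that Phase 1 colors vertices in an arbitrary but sequential order, so the ``dangerous-1'' event for a single edge is the supremum over prefixes of the coloring; this is handled by applying the union bound over the at most $2k$ possible stopping times inside the $2\binom{k}{\beta_1 k}2^{-\beta_1 k}$ estimate, which only costs a benign polynomial factor already absorbed by the $16k^4 d^5$ slack in~(\ref{e:cond-1}).
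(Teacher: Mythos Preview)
Your overall strategy is sound and close in spirit to the paper's argument, but there is a genuine gap in the extraction step. You assert that ``any connected set of bad edges of size $\ell$ in $L$ contains a \emph{connected} sub-collection of $\ge \ell/(kd+1)$ dangerous-1 edges.'' The counting part is fine (each unsafe-1 edge is adjacent to a dangerous-1 edge, so a dangerous-1 edge ``covers'' at most $kd$ unsafe-1 neighbours), but connectedness of the dangerous-1 edges in $L$ does \emph{not} follow. Concretely, take a path $d_1\!-\!u_1\!-\!u_2\!-\!d_2$ of bad edges in $L$ where $d_1,d_2$ are dangerous-1 and $u_1,u_2$ are unsafe-1, with the troubled-1 vertices of $u_1$ coming from $d_1$ and those of $u_2$ coming from $d_2$. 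Then $d_1,d_2$ are at distance $3$ in $L$, so the dangerous-1 edges are not connected in $L$ (nor in $L^2$), and you cannot greedily extract a $2$-tree of dangerous-1 edges from a connected set of them. The downstream independence argument therefore does not go through as written.

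The paper sidesteps this by working on the \emph{vertex} side rather than the line graph. It bounds, for each vertex $v$, the single event ``$v$ lies in some dangerous-1 or unsafe-1 edge'' by $p \le 4kd^2\binom{k}{\beta_1 k}2^{-\beta_1 k}$, absorbing the unsafe-1 contribution via a union bound over edges adjacent to those through $v$ (this is where the extra $kd$ factor in $p$ comes from, and why the slack in condition~(\ref{e:cond-1}) is $16k^4d^5$). It then takes a $3$-tree of vertices --- pairwise at hypergraph distance $\ge 3$, so their incident edges are pairwise disjoint --- and uses the product bound $p^{|W|}$ directly, without ever trying to isolate a connected set of dangerous-1 edges. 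Your approach can be repaired along similar lines: either (i) extract a $2$-tree of \emph{bad} edges (not just dangerous-1) and pay the extra $kd$ factor per edge for the nearby dangerous-1 witness, or (ii) extract a higher-distance tree (e.g.\ a $4$-tree in $L$) so that the dangerous-1 witnesses associated to distinct tree edges are guaranteed to be vertex-disjoint.
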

\begin{proof}
Let $H^3$ deote the distance-3 graph of $H$; that is, we have $V(H^3)=V(H)$, and $u\sim v$ in $H^3$ if they have distance 3 in $H$.  We say that $W\subseteq V(H)$ is a 3-tree if $H^3[W]$ is connected. 

Observe that for every $w>0$, there are at most $n4^w(dk)^{3w}$ 3-trees on $w$ vertices. This is because $H^3$ has maximum degree at most $(dk)^3$. Since there are at most $4^w$ non-isomorphic trees on $w$ vertices, and at most $n(dk)^{3w}$ copies of every such tree in $H^3$, we get that there are at most $n4^w(dk)^{3w}$ 3-trees on $w$ vertices.

Let $e$ be any edge, and $v$ be any vertex in $H$. Observe that
\begin{align*}
    \PP[\text{$e$ becomes dangerous-1 in Phase 1}]&\leq 2\binom{k}{\beta_1k}2^{-\beta_1k},\\
    \PP[\text{$e$ becomes unsafe-1 in Phase 1}]&\leq 2kd\binom{k}{\beta_1k}2^{-\beta_1k},\\
    \PP[\text{$v$ lies in a dangerous-1 or unsafe-1}&\text{ edge after Phase 1}]\\
    \qquad &\leq 4kd^2\binom{k}{\beta_1k}2^{-\beta_1k}\leq 4kd^22^{(h(1-\beta_1)-\beta_1)k}.
\end{align*}

Observe that every vertex that remains after Phase 1 lies in some dangerous-1 or unsafe-1 edge. Let $W$ be a 3-tree on $w$ vertices, and $p=4kd^22^{(h(1-\beta_1)-\beta_1)k}$. Since no two vertices in $W$ lie in edges that intersect, we have 
$$\PP[\text{$W$ is contained in some connected component after Phase 1}]\leq p^{w}.$$
Thus, the expected number of 3-trees on $w$ vertices after Phase 1 is at most
$$n4^w(dk)^{3w}p^w=n(4d^3k^3\cdot 4kd^22^{(h(1-\beta_1)-\beta_1)k})^w=n(16k^4d^5\cdot 2^{(h(1-\beta_1)-\beta_1)k})^w,$$
which is at most $n^{-2c}$ when $w=c_1\log n$ with $c_1>0$ sufficiently large. By Markov's inequality, with probability at least $1-n^{-2c}$, there is no 3-tree of size larger than $c_1 \log n$. Since every size-$s$ connected component in $H$ contains a 3-tree of size at least $s/(dk)^3$, we get that with probability at least $1-n^{-2c}$, there is no connected component of size larger than than $c_1(dk)^3 \log n$ in $H$.
\end{proof}

\begin{lemma}\label{l:phase-2}
    Let $H_1$ be a connected component of $H$ after Phase 1. Suppose $H_1$ has size $N\leq c_1\log n$ with $c_1>1$. Then there exists $c_2>0$ sufficiently large such that with probability at least $1-1/\log^2n$, after Phase 2, all connected components in $H_1$ have size at most $c_2\log \log n$.
\end{lemma}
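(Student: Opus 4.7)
The plan is to mirror the proof of~\cref{l:phase-1}, now working inside $H_1$ (with $N = |H_1| \le c_1 \log n$ vertices) and using the threshold $\beta_2$ instead of $\beta_1$. The new subtlety is that each edge $e$ of $H_1$ enters Phase 2 already partially colored from Phase 1: writing $\delta k$ for the count of the majority color of $e$ after Phase 1, we have $\delta k = \lfloor \beta_1 k \rfloor + 1$ if $e$ itself was dangerous-1, and $\delta \le \beta_1$ otherwise. Every remaining vertex of $e$ is troubled-1, so all $(1-\delta)k$ of them receive uniformly random colors in Phase 2.

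I would first obtain an upper bound on $\Pr[e \text{ becomes dangerous-2}]$ that is uniform in $\delta$. By the same argument as in Phase 1, $e$ becomes dangerous-2 only if some size-$(\beta_2 - \delta)k$ subset of its Phase-2 colorings all agree on a single color, so union-bounding over the two colors and the choice of that subset gives
\[
\Pr[e \text{ becomes dangerous-2}] \;\le\; 2 \binom{(1-\delta)k}{(\beta_2-\delta)k} 2^{-(\beta_2-\delta)k} \;\le\; 2 \cdot 2^{-\bigl((\beta_2-\delta)k - h(\tfrac{\beta_2-\delta}{1-\delta})(1-\delta)k\bigr)}.
\]
The monotonicity clause of \cref{e:cond-1} says that $(\beta_2 - \delta) - h(\tfrac{\beta_2-\delta}{1-\delta})(1-\delta)$ is decreasing on $[0, \beta_1]$, so the right-hand side is maximized at $\delta = \beta_1$ (the dangerous-1 case $\delta = \beta_1 + O(1/k)$ perturbs the bound by only a constant factor). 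Propagating this uniform bound through the same union bounds over edges and vertices as in \cref{l:phase-1}, I would then conclude $\Pr[v \text{ lies in a dangerous-2 or unsafe-2 edge}] \le 4 k d^2 \binom{(1-\beta_1)k}{(\beta_2-\beta_1)k} 2^{-(\beta_2-\beta_1)k}$.

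Next I would run the $3$-tree argument of \cref{l:phase-1} inside $H_1$. The number of $3$-trees of size $w$ in $H_1$ is at most $N \cdot 4^w (kd)^{3w}$, and since vertices of a $3$-tree lie in pairwise disjoint edges, a specific $3$-tree survives Phase 2 with probability at most $p_2^w$. Thus
\[
\mathbb{E}[\text{\#\,surviving $3$-trees of size } w] \;\le\; N \cdot \Bigl(16 k^4 d^5 \cdot \binom{(1-\beta_1)k}{(\beta_2-\beta_1)k} 2^{-(\beta_2-\beta_1)k}\Bigr)^{w} \;=\; N \cdot \epsilon_2^w,
\]
where the condition $16 k^4 d^5 \le 2^{(\beta_2-\beta_1)k - h(\tfrac{\beta_2-\beta_1}{1-\beta_1})(1-\beta_1)k}$ combined with the $\Theta(1/\sqrt{k})$ slack in Stirling's refinement of $\binom{n}{pn} \le 2^{h(p)n}$ gives $\epsilon_2 \le 1/\sqrt{k} < 1$. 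With $N \le c_1 \log n$, choosing $w = c_2 \log \log n$ for $c_2$ sufficiently large (depending on $k$) yields $N \epsilon_2^w \le 1/\log^2 n$, so Markov's inequality guarantees that with probability at least $1 - 1/\log^2 n$ no $3$-tree of size $w$ survives. Since any surviving connected component of size $s$ contains a $3$-tree of size at least $s/(kd)^3$, every surviving component has size at most $(kd)^3 \cdot c_2 \log \log n = O(\log \log n)$, which absorbing the constant factor into $c_2$ gives the claim.

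The main obstacle is obtaining a single bound on $\Pr[e \text{ dangerous-2}]$ that is uniform in the Phase-1 majority count $\delta$, since edges not themselves dangerous-1 may have arbitrary $\delta \le \beta_1$ while dangerous-1 edges have $\delta \approx \beta_1$. Without control over the $\delta$-dependence, the different edge types would contribute incomparable tail bounds and the clean exponential estimate would break down. The monotonicity condition in \cref{e:cond-1} is engineered precisely for this reduction — it certifies that the worst case is $\delta = \beta_1$ — after which the remainder is essentially a direct transcription of the proof of \cref{l:phase-1}, with $n$ replaced by $N \le c_1 \log n$ and the target depth shrunk from $c_1 \log n$ to $c_2 \log \log n$.
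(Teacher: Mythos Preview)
Your proposal is correct and follows essentially the same route as the paper: bound $\Pr[e\text{ dangerous-2}]$ in terms of the Phase-1 majority fraction $\delta$, invoke the monotonicity clause of \cref{e:cond-1} to reduce to the worst case $\delta=\beta_1$, then rerun the $3$-tree counting argument of \cref{l:phase-1} inside $H_1$ with $n$ replaced by $N\le c_1\log n$. You are in fact slightly more careful than the paper in two places---noting that dangerous-1 edges have $\delta=\beta_1+O(1/k)$ rather than $\delta\le\beta_1$, and extracting the $\Theta(1/\sqrt{k})$ Stirling slack to ensure $\epsilon_2<1$ strictly---both of which the paper glosses over.
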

\begin{proof}Let $\widetilde H_1$ be the hypergraph obtained by deleting all colored vertices from $H_1$. Suppose $N\leq c_1\log n$ is the vertex number of $\widetilde H_1$. Note that $\widetilde H_1$ also has maximum degree at most $d$, and $\widetilde H_1^3$ also has maximum degree at most $(dk)^3$.

The argument in similar to the proof of \cref{l:phase-1}. Let $e$ be any edge, and $v$ be any vertex in $\widetilde H_1$. Observe that
\begin{align*}
    &\PP[\text{$e$ becomes dangerous-2 in Phase 2}]\leq 2\binom{(1-\delta)k}{(\beta_2-\delta)k}2^{-(\beta_2-\delta)k}\text{ for some $0\leq \delta\leq\beta_1$},\\
    &\PP[\text{$e$ becomes unsafe-2 in Phase 2}]\leq 2kd\binom{(1-\delta)k}{(\beta_2-\delta)k}2^{-(\beta_2-\delta)k},\\
    &\PP[\text{$v$ lies in a dangerous-2 or unsafe-2 edge after Phase 1}]\\
    &\qquad \leq 4kd^2\binom{(1-\delta)k}{(\beta_2-\delta)k}2^{-(\beta_2-\delta)k}\\
    &\qquad\leq 4kd^22^{h\left(\frac{\beta_2-\delta}{1-\delta}\right)(1-\delta)k-(\beta_2-\delta)k}\leq 4kd^22^{h\left(\frac{\beta_2-\beta_1}{1-\beta_1}\right)(1-\beta_1)k-(\beta_2-\beta_1)k}.
\end{align*}
Let $\widetilde p=4kd^22^{h\left(\frac{\beta_2-\beta_1}{1-\beta_1}\right)(1-\beta_1)k-(\beta_2-\beta_1)k}$. Thus, 
the expected number of 3-trees on $w$ vertices after Phase 1 is at most
\begin{align*}
    N4^w(dk)^{3w}\widetilde p^w&=N(4d^3k^3\cdot 4kd^22^{h\left(\frac{\beta_2-\beta_1}{1-\beta_1}\right)(1-\beta_1)k-(\beta_2-\beta_1)k})^w\\
    &=N(16k^4d^5\cdot 2^{h\left(\frac{\beta_2-\beta_1}{1-\beta_1}\right)(1-\beta_1)k-(\beta_2-\beta_1)k})^w,
\end{align*}
which is at most $N^2\leq c_1^2\log^2n$ when $w=c_2\log N$ with $c_2>0$ sufficiently large. By Markov's inequality, with probability at least $1-1/(c_1^2\log^2n)\geq 1-1/\log^2n$, there is no 3-tree of size larger than $c_2 \log N\leq c_2\log(c_1\log n)$. This implies that with probability at least $1-1/\log^2n$, there is no connected component of size larger than than $c_2(dk)^3 \log(c_1\log n)\leq c_1c_2(dk)^3 \log\log n$ in $\widetilde H_1$.
\end{proof}

\begin{lemma}\label{l:phase-3}
Let $H_2$ be a connected component of $H$ after Phase 2. Suppose $H_2$ has size at most $c_2\log \log n$. Then there exists  a valid coloring of $H_2$, i.e., a coloring in which every edge has at least $\alpha n$ vertices under each color.
\end{lemma}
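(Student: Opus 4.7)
The plan is to apply the symmetric Lov\'asz Local Lemma (LLL) to a uniformly random $2$-coloring of the troubled-$2$ vertices of $H_2$, showing that a valid coloring exists with positive probability.

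First I would characterize the edges of $H_2$. Let $a_e, b_e, c_e$ denote the numbers of vertices of an edge $e$ colored $0$, colored $1$, and still troubled-$2$ after Phase~2, respectively, so $a_e + b_e + c_e = k$. Since $e$ was not deleted, $\min(a_e, b_e) < \alpha k$. I claim $e$ must have become dangerous-$2$ at some point: otherwise $\max(a_e, b_e) \le \beta_2 k$ throughout Phase~2, all troubled-$1$ vertices of $e$ would eventually have been colored so that $c_e = 0$, and we would obtain $\min(a_e, b_e) = k - \max(a_e, b_e) \ge (1 - \beta_2) k > \alpha k$ (using $1 - \beta_2 > 2\alpha$ from \cref{e:cond-1}), contradicting unsatisfiedness. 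Hence $\max(a_e, b_e) > \beta_2 k$; say WLOG $a_e > \beta_2 k$. No further colorings of $e$ occur after the dangerous-$2$ moment, so $c_e \ge 1$, and since $a_e > \beta_2 k > \alpha k$, the $0$-side of $e$ is automatically valid after any coloring of the troubled-$2$ vertices. The bad event thus reduces to $B_e := \{Y_e < \alpha k - b_e\}$, where $Y_e \sim \mathrm{Bin}(c_e, 1/2)$ is the number of troubled-$2$ vertices of $e$ that get colored $1$.

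Next I would bound $\Pr[B_e]$. Using $a_e \approx \beta_2 k$ gives $c_e = (1-\beta_2)k - b_e + O(1)$. A binomial tail bound together with the entropy estimate $\binom{n}{pn} \le 2^{n\cdot h(p)}$ for $p \le 1/2$ yields
\[
\Pr[B_e] \;\le\; (\alpha k)\binom{c_e}{\alpha k - b_e - 1} 2^{-c_e} \;\le\; \mathrm{poly}(k) \cdot 2^{-c_e\bigl(1 - h((\alpha k - b_e)/c_e)\bigr)}.
\]
A short monotonicity calculation shows that, as a function of $b_e \in [0, \alpha k)$ (with $c_e$ coupled to $b_e$ as above), the exponent $c_e(1 - h((\alpha k - b_e)/c_e))$ is minimized at $b_e = 0$, yielding the uniform bound $\Pr[B_e] \le \mathrm{poly}(k) \cdot 2^{-(1-\beta_2) k\, (1 - h(\alpha/(1-\beta_2)))}$ over all $e \in H_2$. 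The event $B_e$ depends only on the independent uniform colorings of the troubled-$2$ vertices in $e$, of which there are at most $k$, each lying in at most $d$ edges; hence $B_e$ is mutually independent of all but at most $kd$ other $B_{e'}$. The symmetric LLL therefore gives $\Pr\!\bigl[\bigcap_{e} \overline{B_e}\bigr] > 0$ provided $e \cdot \Pr[B_e] \cdot (kd + 1) \le 1$, and the hypothesis $2e(kd+1) < 2^{(1 - h(\alpha/(1-\beta_2)))(1-\beta_2) k}$ from \cref{e:cond-1} supplies precisely the margin needed to absorb the $\mathrm{poly}(k)$ factor for $k$ large. A valid coloring therefore exists.

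The main technical obstacle is pinning down the correct worst-case bound on $\Pr[B_e]$: because $b_e$ and $c_e$ are coupled via $a_e \approx \beta_2 k$, one cannot take both $b_e$ small and $c_e$ small. A na\"ive bound using $c_e \ge (1 - \alpha - \beta_2) k$ (attained at $b_e = \alpha k - 1$) gives a trivial estimate since $(\alpha k - b_e)/c_e$ can approach $1/2$; the key is to recognize that that regime has $r_e := \alpha k - b_e$ small, so the true worst case is the opposite extreme $b_e = 0$, $c_e \approx (1-\beta_2) k$, for which the exponent matches exactly the LLL threshold in \cref{e:cond-1}.
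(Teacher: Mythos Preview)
Your approach coincides with the paper's: randomly $2$-color the troubled-$2$ vertices, bound the per-edge bad probability by roughly $2^{-(1-\beta_2)k(1-h(\alpha/(1-\beta_2)))}$, and invoke the symmetric Lov\'asz Local Lemma via the last condition in \eqref{e:cond-1}. Your monotonicity argument pinning the worst case to $b_e=0$, $c_e\approx(1-\beta_2)k$ is a more explicit justification of the paper's one-line bound $\Pr[A_e]\le 2\binom{(1-\beta_2)k}{\le\alpha k}2^{-(1-\beta_2)k}$.

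One step is not quite right. You argue that every surviving edge $e$ must itself have become dangerous-$2$, because otherwise ``all troubled-$1$ vertices of $e$ would eventually have been colored so that $c_e=0$''. That implication fails: a troubled-$1$ vertex of $e$ can become troubled-$2$ because some \emph{other} edge through it crosses the $\beta_2 k$ threshold. Thus $e$ may survive Phase~2 with $\max(a_e,b_e)\le\beta_2 k$ and $c_e\ge 1$ (what the paper calls an ``unsafe-$2$'' edge, by analogy with Phase~1). The fix is painless: for any surviving edge one has $c_e=k-a_e-b_e\ge(1-\beta_2)k-b_e$ up to an additive constant (for dangerous-$2$ edges $a_e=\lfloor\beta_2 k\rfloor+1$, for unsafe-$2$ edges $a_e\le\beta_2 k$), and since $\Pr[\mathrm{Bin}(c_e,1/2)<r]$ is decreasing in $c_e$ for fixed $r$, your worst-case analysis already covers both. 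In the unsafe-$2$ sub-case with both $a_e,b_e<\alpha k$ you must also control the $a$-side, but then $c_e>(1-2\alpha)k>(1-\beta_2)k$ (using $2\alpha<1-\beta_2$ from \eqref{e:cond-1}), so a factor-of-$2$ union bound suffices and is absorbed by the LLL margin.
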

\begin{proof}
Consider a random 2-vertex coloring of $H_2$. For every edge $e$ in $H_2$, let $A_e$ be the event that $e$ does not have $\alpha n$ vertices under one of the colors. Then we have
$$\PP[A_e]\leq 2\binom{(1-\beta_2)k}{\leq \alpha k}2^{-(1-\beta_2)k}\leq 2^{\left(h\left(\frac{\alpha}{1-\beta_2}\right)-1\right)(1-\beta_2)k+1}.$$
Since each $A_e$ is independent from all but at most $kd$ other $A_{e'}$'s, with
$$e\cdot 2^{\left(h\left(\frac{\alpha}{1-\beta_2}\right)-1\right)(1-\beta_2)k+1} \cdot (kd+1)<1,$$
we know from the Lov\'asz local lemma that with positive probability, none of the $A_e$'s occurs.
\end{proof}

We now argue that for every $v\in V(H)$, we can locally access the color of $v$ given by the above algorithm. To do so, observe that the above algorithm works for an arbitrary variable ordering. In particular, a random ordering will work (which we can locally look up  given a public tape of random bits $\mathbf{R}$).

We associate a random value $a_v\in[0,1]$ to every vertex $v\in V(H)$, and order the vertices as $v_1,\dots,v_n$ such that $a_{v_1}<\dots<a_{v_n}$.  For every vertex $v\in V$, we compute the query tree $\mathcal T_v$ defined as follows:
\begin{itemize}
    \item Let $v$ be the root of $\mathcal T_v$.
    \item For every $w\in V$ that shares a clause with $v$, if $a_w<a_v$, then we add $w$  to $\mathcal T_v$ as a child of $v$.
    \item We build  $\mathcal T_v$ in this way recursively until  no child node can be added to $\mathcal T_v$.
\end{itemize}
We know that with high probability, $\mathcal T_v$ has $\mathrm{polylog}(n)$ vertices.

\begin{lemma}[\text{\cite[Theorem 3.1]{ARV12}}]\label{lem:query-tree}
    Fix $c>1$. For every $k,d\geq 1$, there exists  $c_3>0$ such that for all sufficiently large $n$, $\PP[|\mathcal T_v|>c_3\log^{kd+1}(n)]<n^{-2c}$.
\end{lemma}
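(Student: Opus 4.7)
The plan is to interpret $|\mathcal T_v|$ as counting the \emph{descending paths} (in the $a$-values) starting at $v$ in the auxiliary graph $G_H$ on $V(H)$, where $u\sim w$ in $G_H$ iff $u,w$ lie in a common hyperedge of $H$. Since each variable appears in at most $d$ hyperedges and each hyperedge has $k$ vertices, $G_H$ has maximum degree at most $D:=kd$. I will bound $|\mathcal T_v|$ by separately controlling the depth of $\mathcal T_v$ and the size $S_\ell$ of each level $\ell$, then summing.

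The key probabilistic input is that for any fixed path $P=(v=w_0,w_1,\dots,w_\ell)$ of length $\ell$ in $G_H$, the probability that $P$ is descending (i.e., $a_v>a_{w_1}>\cdots>a_{w_\ell}$) equals exactly $1/(\ell+1)!$, since the $a$-values at the $\ell+1$ vertices of $P$ are i.i.d.\ uniform on $[0,1]$ and each of the $(\ell+1)!$ orderings is equally likely. Since $G_H$ has at most $D^\ell$ paths of length $\ell$ starting at $v$, a union bound yields
$$\Pr[\mathrm{depth}(\mathcal T_v)\ge \ell]\;\le\;\frac{D^\ell}{(\ell+1)!}.$$
A Stirling-type estimate shows the right-hand side is at most $n^{-2c}/2$ once $\ell=L^*:=\lceil 4c\log n/\log\log n\rceil$, so with this probability the depth is less than $L^*$.

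For each level $\ell\le L^*$ I control $S_\ell=\sum_P \mathbf 1[P\text{ descending}]$ via the $m$-th moment method with $m=\Theta(\log n)$. Expanding $\mathbb E[S_\ell^m]$ as a sum over ordered $m$-tuples of length-$\ell$ paths $(P_1,\dots,P_m)$, the event that all $m$ paths are simultaneously descending is equivalent to the $a$-values on $P_1\cup\cdots\cup P_m$ obeying the partial order determined by the ``shape'' of the tuple (viewed as a subtree of $G_H$ rooted at $v$). Stratifying by this shape, bounding the number of tuples of a given shape by $D$ to the power of the number of distinct vertices, and bounding the acceptance probability by the reciprocal of the number of linear extensions of the corresponding poset, one obtains an estimate of the form
$$\mathbb E[S_\ell^m]\;\le\;C^m\cdot m!\cdot \bigl(D^\ell/\ell!\bigr)^m$$
for an absolute constant $C$. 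Markov's inequality with $m=\lceil 2c\log n\rceil$ then gives $\Pr[S_\ell>c_3'\log^D n]\le n^{-2c}/(2L^*)$ for a suitable $c_3'=c_3'(k,d,c)$. A union bound over $\ell<L^*$ combined with the depth bound yields $|\mathcal T_v|=\sum_{\ell<L^*} S_\ell\le L^*\cdot c_3'\log^D n\le c_3\log^{D+1} n$ with probability at least $1-n^{-2c}$, as required.

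The main obstacle lies in the combinatorial estimate on $\mathbb E[S_\ell^m]$: one must carefully stratify the sum by the ``shape'' of the $m$-tuple of (possibly overlapping) paths, count shapes with a prescribed number of distinct vertices using only the degree bound $D$, and control the probability that all $m$ paths are descending via a linear-extension count, e.g., via a hook-length-type formula. This is the essential content of \cite[Theorem 3.1]{ARV12}, which can alternatively be invoked as a black-box result for query trees in any bounded-degree graph equipped with i.i.d.\ uniform vertex values.
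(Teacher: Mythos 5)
The paper does not prove this lemma at all; it is invoked as a black box via the citation to \cite[Theorem 3.1]{ARV12}, which is precisely the statement that a random query tree in a graph of maximum degree $D=kd$, with i.i.d.\ uniform vertex priorities, has size $O(\log^{D+1}n)$ with probability $1-n^{-\Theta(1)}$. So there is no ``paper's own proof'' to compare against, only the citation.

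Your sketch has the right architecture: identifying $|\mathcal T_v|$ with the number of descending paths rooted at $v$ in the degree-$D$ auxiliary graph $G_H$, bounding the depth by a union bound over the at most $D^\ell$ paths of length $\ell$ (each descending with probability $1/(\ell+1)!$), and controlling the width of each level via a moment argument. The depth bound and the final assembly via a union bound over $\ell<L^*$ are fine. However, the load-bearing step, the claimed estimate $\mathbb{E}[S_\ell^m]\le C^m\,m!\,\bigl(D^\ell/\ell!\bigr)^m$, is asserted rather than proved. The stratification by the ``shape'' of an $m$-tuple of overlapping paths, counting shapes with a prescribed number of distinct vertices using only the degree bound, and bounding the joint acceptance probability by the reciprocal of the number of linear extensions of the induced poset, is exactly where the nontrivial combinatorics lives; you acknowledge this but do not carry it out. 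As written, the proposal is therefore an outline that defers the combinatorial heart of the argument back to the very reference it would replace. For the purposes of this paper, invoking \cite{ARV12} as a black box, as the authors do, is the appropriate and cleanest move; a self-contained alternative would require proving the moment estimate (or an equivalent interval-partitioning argument) in full.
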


In Phase 1, whether $v$ is colored or troubled-1 completely depends on its descendants in $\mathcal T_v$. Thus, to compute the assignment of $v$ in Phase 1, it suffices to run Phase 1 on all vertices in $\mathcal T_v$. By \cref{lem:query-tree}, with probability $1-n^{-2c}$, this step takes $\mathrm{polylog}(n)$ time.

If $v$ gets colored in Phase 1, then we return the color directly. Otherwise, if $v$ becomes troubled-1 after Phase 1, then we compute the connected component of $H$ containing $v$ after Phase 1. Recall from \cref{l:phase-1} that with probability $1-n^{-2c}$, all connected components after Phase 1 has size at most $c_1\log n$, so this step takes another $\mathrm{polylog}(n)$ time.

Let $H_v$ denote this connected component. By \cref{l:phase-2}, if we perform  $O(\log n/\log\log n)$ parallel instances of Phase 2 on $H_v$,  then with probability $1-n^{-2c}$, in one of them $H_v$ is further reduced to connected components each of size $c_2\log\log n$. We then perform Phase 3 to find a valid coloring on the component that contains $v$, which by \cref{l:phase-3} is guaranteed to succeed and finishes in $2^{O(\log\log n)}=\mathrm{polylog}(n)$ time.

Combining the above analysis, we have obtained the local computation algorithm $\mathsf{IsMarked}(\cdot)$  with the desired properties.

\section{Proof of~\cref{lem:constants}}\label{a:verify-consts}
    We first check conditions in \cref{e:cond-1}. For sufficiently large $k$, we have:
    \begin{itemize}
        \item $4\cdot \frac{1}{75}<2(1-0.96)<1-0.778$,
        \item  $16k^4d^5\leq  16k^42^{k/80}\leq 2^{0.014k}\leq 2^{(\beta_1-h(1-\beta_1))k}$,
        \item  $16k^4d^5\leq 16k^42^{k/80}\leq 2^{0.03k}\leq 2^{(\beta_2-\beta_1)k-h\left(\frac{\beta_2-\beta_1}{1-\beta_1}\right)(1-\beta_1)k}$,
        \item $2e(kd+1)< 2e(k2^{k/400}+1)\leq 2^{0.003k}\leq  2^{\left(1-h\left(\frac{\alpha}{1-\beta_2}\right)\right)(1-\beta_2)k}$.
    \end{itemize}
    Moreover, letting $f(\delta):=(\beta_2-\delta)-h\left(\frac{\beta_2-\delta}{1-\delta}\right)(1-\delta)$, we have
\begin{align*}
    f'(\delta)&=-\frac{\left(1-\frac{0.96\, -\delta}{1-\delta}\right) \log \left(1-\frac{0.96\,
   -\delta}{1-\delta}\right)}{\log (2)}-\frac{(0.96\, -\delta) \log \left(\frac{0.96\,
   -\delta}{1-\delta}\right)}{(1-\delta) \log (2)}\\
   &\qquad -(1-\delta) \bigg(-\frac{\frac{0.96\,
   -\delta}{(1-\delta)^2}-\frac{1}{1-\delta}}{\log (2)}-\frac{\left(\frac{1}{1-\delta}-\frac{0.96\,
   -x}{(1-x)^2}\right) \log \left(1-\frac{0.96\, -\delta}{1-\delta}\right)}{\log (2)}\\
   &\qquad\qquad  +\frac{\log
   \left(\frac{0.96\, -\delta}{1-\delta}\right)}{(1-x) \log (2)}-\frac{(0.96\, -x) \log
   \left(\frac{0.96\, -\delta}{1-\delta}\right)}{(1-\delta)^2 \log (2)}-\frac{\frac{1}{1-\delta}-\frac{0.96\,
   -\delta}{(1-\delta)^2}}{\log (2)}\bigg)-1<0
\end{align*}
for $0\leq \delta\leq \beta_1=0.778$.

We now check conditions in \cref{e:cond-2}. For sufficiently large $k$, we have:
\begin{itemize}
    \item $k \cdot 2^{-\alpha k} \cdot (dk) ^5\cdot 4\leq k^62^{k/80-k/75} \cdot 4\leq \frac{1}{150e^3}$,
    \item $\theta=1 - \frac12 \exp \left(\frac{2 e d k }{2^{\alpha k}}\right)\geq 1-\frac{1}{2}\exp(2ek2^{k/400}-2^{k/75})\geq 0.4$,
    \item $ 36 e d^3k^4 \cdot 0.6^{\alpha k}=36ek^4(2^{3/400}\cdot 0.6^{1/75})^k\leq 1/2$,
    \item $ 2^{-\frac{1}{48d k^4}}\cdot e^{\frac{2d^2 /\alpha}{2^{\alpha k}}}\leq  2^{-\frac{1}{48k^42^{k/400}}}\cdot e^{\frac{150\cdot 2^{k/200}}{2^{k/75}}}=2^{\log_2\left(-\frac{1}{48k^42^{k/400}}+ \frac{150\log_2e\cdot 2^{k/200}}{2^{k/75}}\right)}\leq   0.9$,
    \item $d\leq 2^{k/400}\leq 2^{\alpha k/4}.$
\end{itemize}

\end{document}